\documentclass{article}

\PassOptionsToPackage{numbers, compress}{natbib}
\bibliographystyle{abbrvnat} 

\usepackage[preprint]{nips_2018}

\usepackage[utf8]{inputenc} 
\usepackage[T1]{fontenc}    
\usepackage{hyperref}       
\usepackage{url}            
\usepackage{booktabs}       
\usepackage{amsfonts}       
\usepackage{nicefrac}       
\usepackage{microtype}      
\usepackage[linesnumbered,ruled,noend]{algorithm2e}
\usepackage{graphicx}
\usepackage{subfigure}
\usepackage{enumitem}



\usepackage{mathtools}
\usepackage{xcolor} 
\usepackage{amssymb} 
\usepackage{amsmath}
\usepackage{dsfont}
\usepackage{amsthm}
\usepackage{multirow} 
\usepackage{enumitem} 
\usepackage{kbordermatrix}
\usepackage{gensymb} 

\newcount\Comments  
\newcount\Includeappendix 
\newcount\Includeackerc 



\newcommand{\R}{\mathbb{R}}
\newcommand{\U}{\mathcal{U}}
\newcommand{\loc}{\mathcal{L}}
\newcommand{\A}{\mathcal{A}}
\newcommand{\Rr}{\mathcal{R}}
\newcommand{\ind}{\mathds{1}} 
\newcommand{\coal}{\mathcal{C}}

\newcommand{\pr}{\mathbb{P}}

\newcommand\omer[1]{\textcolor{red}{#1}}

\newcommand{\mL}{\mathcal{L}}
\newcommand{\mA}{\mathcal{A}}

\newcommand{\mO}{\mathcal{O}}

\newcommand{\mM}{\mathcal{M}}
\newcommand{\mS}{\mathcal{S}}

\newtheorem{definition}{Definition}
\newtheorem{theorem}{Theorem}
\newtheorem{proposition}{Proposition}
\newtheorem{lemma}{Lemma}
\newtheorem{corollary}{Corollary}

\newtheorem{observation}{Observation}

\usepackage{thmtools}
\declaretheoremstyle[bodyfont=\normalfont]{normalhead}
\declaretheorem[style=normalhead]{example}

\newcommand\bl[1]{\boldsymbol{ #1 } }

\newcommand\abs[1]{\left| #1  \right|}
\DeclareMathOperator*{\argmax}{arg\,max} 

\newcommand\numberthis{\addtocounter{equation}{1}\tag{\theequation}}


\usepackage{makecell}

\newcommand{\aalways}{\textbf{Complete}}
\newcommand{\aeff}{\textbf{EF}}
\newcommand{\apne}{\textbf{S}}
\newcommand{\fprop}{\textbf{F}}

\Includeappendix=1
\Includeackerc=1

\begin{document}

\title{A Game-Theoretic Approach to Recommendation Systems with Strategic Content Providers}

\author{
Omer Ben{-}Porat \textnormal{and} Moshe Tennenholtz\\
Technion - Israel Institute of Technology\\
Haifa 32000 Israel\\
\{omerbp@campus,moshe@ie\}.technion.ac.il \\
}

\maketitle

\begin{abstract}
We introduce a game-theoretic approach to the study of recommendation systems with strategic content providers. Such systems should be fair and stable. Showing that traditional approaches fail to satisfy these requirements, we propose the Shapley mediator. We show that the Shapley mediator fulfills the fairness and stability requirements, runs in linear time, and is the only economically efficient mechanism satisfying these properties.  
\end{abstract}

\section{Introduction}
\label{sec:intro}
Recommendation systems (RSs hereinafter) have rapidly developed over the past decade. By predicting a user preference for an item, RSs have been successfully applied in a variety of applications. Moreover, the amazing RSs offered by giant e-tailers and e-marketing platforms, such as Amazon and Google, lie at the heart of online commerce and marketing on the web. However, current significant challenges faced by personal assistants (e.g. Cortana, Google Now and Alexa) and mobile applications go way beyond the practice of predicting the satisfaction levels of a user from a set of offered items. Such systems have to generate recommendations that satisfy the needs of both the end users and other parties or stakeholders \cite{burke2017multisided,zheng2017multi}. Consider the following cases: 

$\bullet$ When Alice drives her car, her personal assistant runs the default navigation application. When she makes a stop at a junction, the personal assistant may show Alice advertisements provided by neighborhood stores, or an update on the stock market status as provided by financial brokers. Each of these pieces of information –- the plain navigation content, the local advertisements and the financial information –- are served by different content providers. These content providers are all competing over Alice's attention at a given point. The personal assistant is aware of Alice's satisfaction with each content, and needs to select the right content to show at a particular time.

$\bullet$ Bob is reading news of the day on his mobile application. The application, aware of Bob's interests, is presenting news deemed most relevant to him. The news is augmented by advertisements, provided by competing content providers, as well as articles by independent reporters. The mobile application, balancing Bob's taste and the interests of the content providers, determines the mix of content shown to Bob.

In these contexts, the RS integrates information from various providers, often sponsored content, which is probably relevant to the user. The content providers are \textit{strategic} –- namely, make decisions based on the way the RS operates, aiming at maximizing their exposure. For instance, to draw Bob's attention, a content provider strategically selects the topic of her news item, aiming at maximizing the exposure to her item. On the one hand, fair content provider treatment is critical for smooth efficient use of the system and also to maintained content provider engagement over time. On the other hand, the strategic behavior of the content providers may lead to instability of the system: a content provider might choose to adjust the content she offers in order to increase the expected number of displays to the users, assuming the others stick to their offered contents.

In this paper, we study ways of overcoming this dilemma using canonical concepts in game theory to impose two requirements on the RS: fairness and stability. Fairness is formalized as the requirement of satisfying fairness-related properties, and stability is defined as the existence of a pure Nash equilibrium. Analyzing RSs that satisfy these two requirements is the main goal of this paper.

Our first result is that traditional RSs fail to satisfy both of the above requirements. Traditional RSs are complete, in the sense that they always show some content to the user, but it turns out that this completeness property cannot be satisfied simultaneously with the fairness and equilibrium existence requirements. This impossibility result is striking and calls for a search of a fair and stable RS. 
To do so, we model the setting as a cooperative game, binding content provider payoffs with user satisfaction. We resort to a core solution concept in cooperative game theory, the Shapley value \cite{shapley1952value}, which is a celebrated mechanism for value distribution in game-theoretic contexts (see, e.g., \cite{myerson1977graphs}). In our work, it is proposed as a tool for recommendations, namely for setting display probabilities. 
Since the Shapley value is employed in countless settings for fair allocation, it is not surprising that it satisfies our fairness properties. In addition, we prove that the related RS, termed the {\em Shapley mediator}, does satisfy the stability requirement. In particular, we show that the Shapley mediator possesses a potential function \cite{monderer1996potential}, and therefore any better-response learning dynamics converge to an equilibrium (see, e.g., \cite{cohen2017learning,garg2016learning}). Note that this far exceeds our minimal stability requirement from the RS.

Implementation in commercial products would require the mediator to be computationally tractable. The mediator interacts with users; hence a fast response is of great importance. In another major result, we show that the Shapley mediator has a computationally efficient implementation. The latter is in contrast to the intractability of the Shapley value in classical game-theoretic contexts \cite{deng1994complexity}. 
Another essential property of the Shapley mediator is economic efficiency \cite{young1985monotonic}. Unlike cooperative games, where the Shapley value can be characterized as the only solution concept to satisfy properties equivalent to fairness and economic efficiency, in our setting the Shapley mediator is not characterized solely by fairness and economic efficiency. Namely, one can find other simple mediators that satisfy these two properties. However, we show that the Shapley mediator is the unique mediator to satisfy the fairness, economic efficiency and stability requirements. 
Importantly, our study stems from a rigorous definition of the minimal requirements from an RS, and so characterizes a unique RS. Interested in understanding the ramification on user utility, we introduce a rigorous analysis of user utility in (strategic) recommendation systems, and show that the Shapley mediator is not inferior to traditional approaches.

{\ifnum\Includeappendix=0{
Due to space constraints, the proofs are deferred to the full version \cite{ben2018game}.
}\fi}

\subsection{Related work}
This work contributes to three interacting topics: fairness in general machine learning, multi-stakeholder RSs and game theory.

The topic of fairness is receiving increasing attention in machine learning \cite{bozdag2013bias,chierichetti2017fair,pedreshi2008discrimination,pleiss2017fairness} and data mining \cite{Kamishima2015}. A major line of research is discrimination aware classification \cite{dwork2012fairness,kamiran2010discrimination,kamishima2012fairness,zemel2013learning}, 
where classification algorithms must maintain high predictive accuracy without discriminating on the basis of a variable representing membership in a protected class, e.g. ethnicity. In the context of RSs, the work of  \citet{kamishima2012enhancement,kamishima2014correcting} 
addresses a different aspect of fairness (or lack thereof): bias towards popular items. The authors propose a collaborative filtering model which takes into account viewpoints given by users, thereby tackling the tendency for popular items to be recommended more frequently, a problem posed in \cite{pariser2011filter}. 
A related problem is over-specialization, i.e., the tendency to recommend items similar to those already purchased or liked in the past, which is addressed in \cite{adamopoulos2014over}.

\citet{zheng2017multi} surveys multi-stakeholder RSs, and highlights practical applications. Examples include RSs for sharing economies (e.g. AirBnB, Uber, etc.), online dating \cite{pizzato2010recon}, and recruiting \cite{yu2011reciprocal}. \citet{burke2017multisided} discusses fairness in multi-stakeholder RSs, and presents a taxonomy of classes of fairness-aware RSs. The author distinguishes between user fairness, content provider fairness and pairwise fairness, and reviews applications for these fairness types. A practical problem concerning fairness in multi-stakeholder RSs is discussed in \cite{modani2017fairness}. In their work, an online platform is used by users who play two roles: customers seeking  recommendations and content providers aiming for exposure. They report, based on empirical evidence, that collaborative filtering techniques tend to create rich-gets-richer scenarios, and propose a method for re-ranking scores, in order to improve exposure distribution across the content providers.

Note that all the work above considers traditional machine learning tasks that enforce upon the solution some form of fairness, as defined specifically for each task. They suggest additional considerations, but do not consider that the parties (i.e., users, content providers) will change their behavior as a result of the new mechanism, nor examine the game theoretic aspects imposed by the selection of the RS in a formal manner. To the best of our knowledge, our work is the first to suggest a fully grounded approach to content provider fairness in RSs.

Finally, strategic aspects of classical machine learning tasks were also introduced recently \cite{porat2017best,Ben-Porat18}. The idea that a recommendation algorithm affects content-provider policy, and as a result must be accompanied by a game-theoretic study is key to recent works in search/information retrieval \cite{ben2015probability,raifer2017information}; so far, however, such work has not dealt with the issue of fairness.

\section{Problem formulation}
\newcommand{\chara}{v}
\newcommand{\shapleymed}{\text{Shapley mediator}}
\newcommand{\algname}{\textsc{SM}}
\newcommand{\topm}{\textsc{TOP}}
\newcommand{\probm}{\textsc{BTL}}
\newcommand{\dm}{\textsc{DM}}
\newcommand{\none}{\textsc{NONE}}
\newcommand{\rand}{\textsc{RAND}}
\newcommand{\sm}{\algname}
\label{sec:model}

\begin{table*}[!t]
\caption{Consider an arbitrary game, a fixed strategy profile $\bl X$ and an arbitrary user $u_i$. 
$\topm$ selects uniformly among the players that satisfy $u_i$ the most. The Bradley-Terry-Luce mediator \cite{bradley1952rank,luce2005individual}, or simply $\probm$, selects player $j$ w.p. proportional to her satisfaction level over the sum of satisfaction levels. $\none$ displays no item, and $\rand$ selects uniformly among players with a positive satisfaction level.
Both $\topm$ and $\probm$ satisfy $\fprop$, but do not satisfy $\apne$. $\none$ and $\rand$ satisfy $\apne$, but do not satisfy $\fprop$. The bottom line refers to the Shapley mediator, $\algname$, which is defined and analyzed in Section \ref{sec:ourapproach}. In contrast to the other mediators, $\algname$ satisfies both $\fprop$ and $\apne$.}
\label{table:mediators}
\vskip 0.15in
\begin{center}
\begin{small}
\begin{sc}
\begin{tabular}{lcll}
\toprule
Mediator  &  \thead{Probability Computation \\ $\pr\left(\mM(\bl X,u_i)= j\right)$ } &Fairness (\fprop)  & Stability (\apne)  \\
\midrule
\topm     &$\frac{\ind_{j\in\argmax_{j'} \sigma_i(X_{j'})}}{\abs{\argmax_{j'} \sigma_i(X_{j'})}}$ & $\surd$ &  $\times$ (Theorem \ref{thm:imptwoplayers})\\
\probm    & $\frac{\sigma_i(X_j)}{\sum_{j'=1}^N \sigma_i(X_{j'})}  $& $\surd$ & $\times$ (Theorem \ref{thm:imptwoplayers})\\
\none     & 0 & $\times$  & $\surd$ \\
\rand     & $\frac{\ind_{\sigma_i(X_{j})>0}}{\sum_{j'=1}^N \ind_{\sigma_i(X_{j'})>0}}$  &       $\times$ & $\surd$  \\
\midrule
\thead[l]{\sm  \\ \text{(Section \ref{sec:ourapproach})}}     & Equation (\ref{eq:shpleyoneuser})&  $\surd$ & $\surd$ \text{(Theorem \ref{existance-pot-thm})}\\
\bottomrule
\end{tabular}
\end{sc}
\end{small}
\end{center}
\vskip -0.1in
\end{table*}

From here on, our ideas will be exemplified in the following motivational example: 
a mobile application (or simply app) is providing users with valuable content. A set of players (advertisers) publish their items (advertisements) on the app. When a user enters the app, a mediator (RS/advertising engine) decides whether to display an item to that user or not, and which player's item to display. The reader should notice that while we use that motivation for the purpose of exposition, our formal model and results are applicable to a whole range of RSs with strategic content providers. 

Formally, the recommendation game is defined as follows:
\begin{itemize}[leftmargin=0cm,itemindent=.5cm,labelwidth=\itemindent,labelsep=0cm,align=left]
\item A set of users $\U=\{u_1,\dots,u_n \}$, a set of players $[N]=\{1,\dots N\}$, and a mediator $\mM$.
\item 
The set of items (e.g. possible ad formats/messages to select from) available to player $j$ is denoted by $\mL_j$, which we assume to be finite. A {\em strategy} of player $j$ is an item from $\mL_j$. 
\item Each user $u_i$ has a satisfaction function $\sigma_i: \loc \rightarrow [0,1]$, where $\mL =\bigcup_{j=1}^N \mL_j$ is the set of all available items. 
In general, $\sigma_i(l)$ measures the \textit{satisfaction level} of $u_i$ w.r.t. $l$. 
\item When triggered by the app, $\mM$ decides which item to display, if any.
Formally, given the strategy profile $\bl X=(X_1,\dots,X_N)$ and a user $u_i$, $\mM(\bl X,u_i)$ is a distribution over $[N]\cup \{\emptyset \}$, where $\emptyset$ symbolizes maintaining the plain content of the app. That is, displaying no item at all. We refer to $\pr \left( \mM(\bl X,u_i)= j \right)$ 
as the probability that player $j$'s item will be displayed to $u_i$ under the strategy profile $\bl X$.
\item Each player gets one monetary unit when her item is displayed to a user. Therefore, the expected payoff of player $j$ under the strategy profile $\bl X$ is $\pi_j(\bl X)=\sum_{i=1}^{n}{\pr \left( \mM(\bl X,u_i)= j \right)}$.
\item The {\em social welfare} of the players under the strategy profile $\bl X$ is the expected number of displays, $V(\bl X)=  \sum_{j=1}^N \pi_j(\bl X)$.
\end{itemize}
For ease of notation, we shall sometimes refer to $\sigma_i (\bl X)$ as the maximum satisfaction level of user $u_i$ from the items in $\bl X$, i.e., $\sigma_i(\bl X)=\max_j \sigma_i(X_j)$.

We demonstrate our setting with the following example.
\begin{example}
\label{example:motive}
Consider a game with two players and three users. Let $\loc_1=\{l_1,l_2\},\loc_2=\{l_3\}$ such that the satisfaction levels of the users with respect to the items are
\[
 \kbordermatrix{
    & u_1 & u_2 & u_3  \\
    l_1 & 0.1 & 0.9 & 0.2  \\
    l_2 & 0.8 & 0.7 & 0.9  \\
    l_3 & 0.9 & 0.8 & 0.1  
  }.
\]
Consider a mediator displaying each user with the most satisfying item to her taste, denoted by $\topm$. For example,  $\pr(\topm\left((l_1,l_3),u_1\right)=2)=1$, since $\sigma_1(l_3)=0.9 > \sigma_1(l_1)=0.1$.
The profile $(l_1,l_3)$ will probably be materialized in realistic scenarios, since the payoff of player 1 under the strategy profile $(l_2,l_3)$ is $\pi_1(l_2,l_3)=1$, while $\pi_1(l_1,l_3)=2$. Notice that from the users' perspective,\footnote{For a formal definition of the user utility, see Subsection \ref{subsec:user utility}.} this profile is not optimal, since
$\sum_{i=1}^3\sigma_i(l_1,l_3)=0.9+0.9+0.2=2$, while $\sum_{i=1}^3\sigma_i(l_2,l_3)=2.6$; hence, the users suffer from strategic behavior of the players.
\end{example}

After defining general recommendation games, we now present a few properties that one may desire from a mediator. First and foremost, a mediator has to be {\em fair}. The following is a minimal set of fairness properties:

\begin{enumerate}[leftmargin=0cm,itemindent=.0cm,labelwidth=\itemindent,labelsep=0cm,align=left]
\item[]\textbf{Null Player}. If $\sigma_i(X_j)=0$, then it holds that $\pr \left(  \mM(\bl X,u_i)= j \right)=0$. Informally, an item will not be displayed to $u_i$ if it has zero satisfaction level w.r.t. him. 
\item[] \textbf{Symmetry}. If $u_i$ has the same satisfaction level from two items, they will be displayed with the same probability. Put differently, if $\sigma_i(X_j)=\sigma_i(X_m)$, then $\pr \left(  \mM(\bl X,u_i)= j \right)=\pr \left( \mM(\bl X,u_i)= m \right)$.
\item[] \textbf{User-Independence}. Given the selected items, the display probabilities depend only on the user: if user $u_{i'}$ is removed from/added to $\U$, $\pr \left(  \mM(\bl X,u_i)= j \right)$ will not change, i.e.,
\[
\pr \left( \mM(\bl X,u_i)=j \right) = \pr \left(  \mM(\bl X,u_i)= j  \mid u_{i'} \in \U\right).
\]
\item[] \textbf{Leader Monotonicity}. 
$\mM$ displays the most satisfying items (w.r.t. a specific user) with higher probability than it displays other items. Formally, if $ j\in \argmax_{j'\in[N]} \sigma_i(X_{j'})$ and $m \notin \argmax_{j'\in[N]} \sigma_i(X_{j'})$, then $\pr \left(  \mM(\bl X,u_i)= j \right)>\pr \left( \mM(\bl X,u_i)= m \right)$.
\end{enumerate}
For brevity, we denote the above set of fairness properties by $\fprop$. In addition, an essential property in a system with self-motivated participants is that it will be stable. Instability in such systems is a result of a player aiming to improve her payoff given the items selected by others. A minimal requirement in this regard is stability against unilateral deviations as captured by the celebrated pure Nash equilibrium concept, herein denoted PNE. A strategy profile $\bl X=(X_1,\dots,X_N)$ is called a \textit{pure Nash equilibrium} if for every player $j\in[N]$ and any strategy $X'_j\in \loc_j$ it holds that $\pi_j(X_j,\bl X_{-j}) \geq \pi_j(X'_j,\bl X_{-j})$,
where $\bl X_{-j}$ denotes the vector $\bl X$ of all strategies, but with the $j$-th component deleted. 
We use the notion of PNE to formalize the stability requirement:
\begin{enumerate}[leftmargin=0cm,itemindent=.0cm,labelwidth=\itemindent,labelsep=0cm,align=left]
\item[] \textbf{Stability}.
Under any set of players, available items, users and user satisfaction functions, the game induced by $\mM$ possesses a PNE.
\end{enumerate}

For brevity, we denote this property by \apne. The goal of this paper is to devise a computationally tractable mediator that satisfies both $\fprop$ and $\apne$.\footnote{One may require the convergence of any better-response dynamics, thereby allowing the players to learn the environment. In Section \ref{sec:ourapproach} we show that our solution satisfies this stronger notion of stability as well.
}

\subsection{Impossibility of classical approaches}
\label{subsec:impos}
We highlight a few benchmark mediators in Table \ref{table:mediators}, including $\topm$, which was introduced informally in Example \ref{example:motive}. Another interesting mediator is $\probm$, which follows the lines of the Bradley-Terry-Luce model \cite{bradley1952rank,luce2005individual}. $\probm$ is addressed here as a representative of a wide family of weight-based mediators: mediators that distribute display probability according to weights, determined by a monotonically increasing function of the user satisfaction (e.g., softmax). 
Common to $\topm, \probm$ and any other weight-based mediator, is that an item is displayed to a user with probability 1.\footnote{\label{foot:excluding}Perhaps excluding profiles $\bl X$ where $\sigma_i(\bl X)=0$. We allow $\mM$ to behave arbitrarily in this case.} We model this property as follows.
\begin{enumerate}[leftmargin=0cm,itemindent=.0cm,labelwidth=\itemindent,labelsep=0cm,align=left]
\item[] \textbf{Complete}. For any recommendation game and any strategy profile $\bl X$, $ \sum_{j=1}^N \pr \left( \mM(\bl X,u_i)= j \right)=1$.
\end{enumerate}

Since the goal of an RS is to provide useful content to users, satisfying $\aalways$ seems justified. Although it seems unreasonable to avoid showing any content to a certain user at a certain time, it turns out that this avoidance is crucial in order to satisfy our requirements.
\begin{theorem}
\label{thm:imptwoplayers}
No mediator can satisfy $\fprop,\apne$ and $\aalways$.
\end{theorem}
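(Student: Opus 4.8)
The plan is to argue by contradiction: assume a mediator $\mM$ satisfies $\fprop$, $\apne$ and $\aalways$ simultaneously, and exhibit a two-player recommendation game whose induced normal-form game has no PNE. First I would record the behavioural consequences of the axioms on a single user facing two present items: by $\aalways$ together with Null Player, an item of zero satisfaction gets probability $0$ and its rival (if positive) gets probability $1$; by Symmetry and $\aalways$, two items of equal positive satisfaction each get $\tfrac12$; and by Leader Monotonicity the strict top item gets probability strictly in $(\tfrac12,1)$. Since $\sum_j \pr(\mM(\bl X,u_i)=j)=1$ for every user, the two players' total payoffs satisfy $\pi_1(\bl X)+\pi_2(\bl X)=n$, so the induced two-player game is constant-sum; a $2\times 2$ constant-sum game fails to have a PNE exactly when player $1$'s payoff matrix is a (generalized) matching-pennies $\begin{pmatrix}H & n-H\\ n-H & H\end{pmatrix}$ with $H\neq n/2$. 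The whole task thus reduces to building a two-player, two-items-each game that realizes such a matrix for \emph{every} admissible $\mM$.

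The key leverage is that all axioms in $\fprop$ are \emph{anonymous}: Null Player, Symmetry, Leader Monotonicity and $\aalways$ are invariant under relabeling player indices and items, and User-Independence lets me relabel users freely. Consequently any symmetry of the satisfaction data is inherited by the payoff functions, no matter how $\mM$ resolves the undetermined leader probabilities. I would therefore design the satisfaction table to be invariant under an order-$4$ symmetry that simultaneously swaps the two players, cyclically permutes the four items $a \to b' \to a' \to b \to a$ (player $1$'s items $a,a'$ and player $2$'s items $b,b'$), and cyclically permutes four users. Writing $\rho$ for the induced map on strategy profiles (a $90^{\circ}$ rotation of the $2\times 2$ grid, with $\rho(a,b)=(a,b')$, etc.), anonymity forces $\pi_1(\rho \bl X)=\pi_2(\bl X)$; combined with $\pi_1+\pi_2=n$ this pins player $1$'s matrix to exactly $\begin{pmatrix}H & n-H\\ n-H & H\end{pmatrix}$ with $H=\pi_1(a,b)$. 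All four profitable-deviation inequalities of the matching-pennies cycle then follow from the single inequality $H\neq n/2$.

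The step I expect to be the real obstacle is forcing $H\neq n/2$ robustly, because Leader Monotonicity pins the leader's probability only to the open interval $(\tfrac12,1)$. Indeed, using \emph{only} configurations where the probabilities are fully determined (null wins $0/1$, ties $\tfrac12$) cannot help: a short computation shows such users contribute additively, giving $\pi_1(a,b)+\pi_1(a',b')=\pi_1(a,b')+\pi_1(a',b)$, so the matrix is separable and always admits a PNE. The resolution is to use the unspecified leader probabilities only for their \emph{strict} direction inside one profile: I would place the four satisfaction values on a symmetric orbit (e.g.\ the rotations of $(\sigma(a),\sigma(b'),\sigma(a'),\sigma(b))=(3,0,1,2)$, rescaled into $[0,1]$) so that in the single profile $(a,b)$ player $1$ is the strict leader for two users — contributing strictly more than $1$ in total — plus a guaranteed null-win ($+1$) and a null-loss ($+0$), whence $\pi_1(a,b)>2=n/2$ for every admissible $\mM$. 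The symmetry then supplies the remaining three entries for free. The only routine things left to check are that no user ever faces an all-zero profile (where $\aalways$ would leave $\mM$ unconstrained, cf.\ the footnote) and that the constructed table is genuinely $\rho$-invariant; both hold because each user carries a single zero among its four values and the orbit is a clean $4$-cycle.
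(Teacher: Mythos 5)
Your construction is genuinely different from the paper's (the paper uses three users and three \emph{shared} strategies in a rock-paper-scissors pattern, and shows that existence of a PNE would force every contested user to be split exactly $50/50$, contradicting Leader Monotonicity), but as written your proof contains one load-bearing step that is not just unjustified but false: the claim that, because the axioms are invariant under relabeling, every mediator satisfying them inherits the symmetries of the satisfaction data. Invariance of the \emph{class} of admissible mediators under relabeling does not make each \emph{individual} mediator invariant. Concretely, take the mediator that, in two-player profiles with both satisfactions positive, gives the strict leader probability $0.6$ when the leader is player $1$ and probability $0.8$ when the leader is player $2$ (ties split $\tfrac12$, a null item gets $0$ and its rival $1$, and any other profile is handled, say, as $\topm$ does). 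This satisfies Null Player, Symmetry, User-Independence, Leader Monotonicity and \aalways, yet in your game it gives $\pi_2(a,b)=1.8$ while $\pi_1(\rho(a,b))=\pi_1(a,b')=1.4$, so the identity $\pi_1(\rho\bl X)=\pi_2(\bl X)$ that you use to fill in three of the four matrix entries fails. (The paper's own write-up asserts the analogous anonymity $\mM_1(s_1,s_2)=\mM_2(s_2,s_1)$ from Symmetry and \aalways, so your instinct matches the paper's reading of the axioms; but your stated justification for it is a non sequitur, and the identity itself is not a consequence of the axioms as listed.)

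The good news is that your construction does not need the symmetry at all, so the gap is repairable with the computation you already carried out for $H$, applied to each cell. In every one of the four profiles exactly two users are contested (both items positive, distinct values) and two users involve a null item, contributing $1+0$ to player $1$. In $(a,b)$ and $(a',b')$ player $1$ is the strict leader for both contested users, so Leader Monotonicity plus \aalways makes each contribute strictly more than $\tfrac12$, giving $\pi_1>2$ and hence $\pi_2<2$; in $(a,b')$ and $(a',b)$ player $1$ is the strict follower for both, giving $\pi_1<2$ and $\pi_2>2$. These four strict inequalities alone yield the deviation cycle $(a,b)\to(a,b')\to(a',b')\to(a',b)\to(a,b)$, so no admissible mediator induces a PNE --- no anonymity, and indeed no use of Symmetry or User-Independence, is required (your values also avoid all-zero profiles, so the footnoted degenerate case never arises). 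Patched this way, your argument is a clean alternative to the paper's: the paper needs PNE existence to pin the contested split to exactly $\tfrac12$ and then contradicts Leader Monotonicity, whereas you use Leader Monotonicity directly to exhibit a game with no PNE; your route also dispenses with the paper's $\alpha=\mM_1(0,0)$ bookkeeping.
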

\begin{proof}[Proof sketch]
We construct a game with two players, three users and three strategies, and show that no mediator can satisfy $\fprop,\apne$ and $\aalways$. Importantly, our technique can be used to show that any arbitrary game does not possess a PNE or that a slight modification of this game does not possess a PNE.

Consider the following satisfaction matrix:
\[
 \kbordermatrix{
        & u_1 & u_2 & u_3  \\
    l_1 & 0   & y & x  \\
    l_2 & x   & 0 & y  \\
    l_3 & y   & x & 0 
  },
\]
where $(x,y)\in (0,1]^2$. Let $\loc_1=\loc_2=\{l_1,l_2,l_3\}$ (i.e., a symmetric two-player game). 
By using the properties of $\fprop$ we characterize the structure of the induced normal form game. We show that in this normal form game, a PNE only exists if $\pr \left( \mM((l_2,l_3),u_1)= 1 \right)=0.5$ (and similarly to the other users and strategy profiles, due to \textbf{User-Independence}). Since this holds for every $x$ and $y$, the mediator displays a random item for each user under any strategy profile. Recall that a random selection does not satisfy \textbf{Leader Monotonicity}; hence, no mediator can satisfy $\fprop,\apne$ and $\aalways$.
\end{proof}

Moreover, Theorem \ref{thm:imptwoplayers} is not sensitive to the sum of the display probabilities being equal to 1. One can show a similar argument for any mediator that displays items with constant probabilities, i.e., $\sum_{j=1}^N \pr \left( \mM(\bl X,u_i)= j \right)=c$ for some $0< c \leq 1$. Theorem \ref{thm:imptwoplayers} suggests that $\sum_{j=1}^N \pr \left( \mM(\bl X,u_i)= j \right)$ should be bounded to the user satisfaction levels. In the next section, we show a novel way of doing so.
\section{Our approach: the Shapley mediator}
\label{sec:ourapproach}
In order to provide a fair and stable mediator, we resort to cooperative game theory. Informally, a cooperative game consists of two elements: a set of players $[ N]$ and a characteristic function $\chara:2^{[ N]} \rightarrow\R$, where $\chara$ determines the value given to every coalition, i.e., every subset of players. The analysis of cooperative games focuses on how the collective payoff of a coalition should be distributed among its members.

One core solution concept in cooperative game theory is the Shapley value \cite{shapley1952value}. 
\begin{definition}[Shapley value] 
Let $(v,[N])$ be a cooperative game such that $v(\emptyset)=0$. According to the Shapley value, the amount that player $j$ gets is
\begin{equation}
\label{def:shapley-value}
\frac{1}{N!} \sum_{R \in \Pi([N])}{\left(\chara(P_j^R\cup \{j \})-\chara(P_j^R) \right)},
\end{equation}
where $\Pi([N])$ is the set of all permutations of $[N]$ and $P_j^R$ is the set of players in $[N] $ which precede player $j$ in the permutation $R$.
\end{definition}

One way to describe the Shapley value, is by imagining the process in which coalitions are formed: when player $j$ joins coalition $\coal$, she demands her contribution to the collective payoff of the coalition, namely $\chara(\coal\cup\{j\})-\chara(\coal)$. Equation (\ref{def:shapley-value}) is simply summing over all such possible demands, assuming that all coalitions are equally likely to occur.

For our purposes, we fix a strategy profile $\bl X$, and focus on an arbitrary user $u_i$. How should a mediator assign the probabilities of being displayed in a fair fashion?
The \textit{induced cooperative game} contains the same set of players.
For every $\coal \subseteq [N]$, let $ \bl X_\coal$ denote the strategy profile where all players missing from $\coal$ are removed. We define the characteristic function of the induced cooperative game as
\begin{equation*}
\label{eq:coopgame}
\chara_i(\coal ; \bl X)= \sigma_i(\bl X_\coal),
\end{equation*}
where $\sigma_i(\bl X_\coal)$ is the maximal satisfaction level a user $u_i$ may obtain from the items chosen by the members of $\coal$. Indeed, this formulation represents a collaborative behavior of the players, when they aim to maximize the satisfaction of $u_i$. 
Observe that $\chara_i(\cdot ; \bl X):2^{[N]}\rightarrow \mathbb{R}$ is a valid characteristic function, hence $(\chara_i(\cdot ; \bl X),[N])$ is a well defined cooperative game. Note that the selection of a mediator fully determines the probability of the events $\mM(\bl X,u_i)=j$, and vice versa.
The mediator that sets the probability of the event $\mM(\bl X,u_i)=j$  according to the Shapley value of the induced cooperative game $(v_i(\cdot ;\bl X),[N])$ is hereinafter referred to as \textit{the Shapley mediator}, or $\algname$ for abbreviation. 
\subsection{Properties of the Shapley mediator}
Since the Shapley value is employed in countless settings for fair allocation, it is not surprising that it satisfies our fairness properties.
\begin{proposition}
\label{prop:shapleyisfair}
$\algname$ satisfies $\fprop$.
\end{proposition}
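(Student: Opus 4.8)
The plan is to verify the four constituent properties of $\fprop$ — Null Player, Symmetry, User-Independence, and Leader Monotonicity — by exploiting the special structure of the induced characteristic function $\chara_i(\coal;\bl X)=\max_{k\in\coal}\sigma_i(X_k)$, together with the fact that (by definition) the display probability $\phi_k:=\pr(\mM(\bl X,u_i)=k)$ equals the Shapley value of player $k$ in the induced game. The key observation is that, for $k\notin\coal$, the marginal contribution $\Delta_k(\coal):=\chara_i(\coal\cup\{k\};\bl X)-\chara_i(\coal;\bl X)=\max\big(0,\,\sigma_i(X_k)-\max_{k'\in\coal}\sigma_i(X_{k'})\big)$ is nonnegative, depends only on the satisfaction levels $\{\sigma_i(X_{k'})\}$, and is monotone nondecreasing in $\sigma_i(X_k)$. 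Using the convention $\chara_i(\emptyset;\bl X)=0$, we also have $\Delta_k(\emptyset)=\sigma_i(X_k)$. Three of the four properties will follow almost immediately from this, and only the last will require a combinatorial comparison.

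For Null Player, if $\sigma_i(X_j)=0$ then the formula above gives $\Delta_j(\coal)=0$ for every $\coal$ (since $\max_{k'\in\coal}\sigma_i(X_{k'})\ge 0$), so $j$ is a null player of the induced game and its Shapley value, hence $\phi_j$, is $0$. For Symmetry, if $\sigma_i(X_j)=\sigma_i(X_m)$ then for every $\coal\subseteq[N]\setminus\{j,m\}$ we have $\chara_i(\coal\cup\{j\};\bl X)=\chara_i(\coal\cup\{m\};\bl X)$, so $j$ and $m$ are interchangeable and the symmetry axiom of the Shapley value yields $\phi_j=\phi_m$. User-Independence is immediate from the construction: the induced game $(\chara_i(\cdot;\bl X),[N])$ for user $u_i$ is defined solely through $\sigma_i$ and $\bl X$ and never references any other user, so adding or removing $u_{i'}$ leaves the game, and therefore $\phi_j$, unchanged.

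The main obstacle is Leader Monotonicity, where strict inequality must be established. Writing the Shapley value in coalition form with weights $w(\coal)=\frac{|\coal|!(N-|\coal|-1)!}{N!}$, I would fix $j$ with $\sigma_i(X_j)=\max_{j'}\sigma_i(X_{j'})=:M$ and $m$ with $\sigma_i(X_m)<M$, and split the sums for $\phi_j$ and $\phi_m$ according to whether the coalition avoids both $j,m$ or contains exactly one of them. On coalitions $\coal\subseteq[N]\setminus\{j,m\}$, the monotonicity of the marginal contribution gives $\Delta_j(\coal)\ge\Delta_m(\coal)$ term by term. For the remaining coalitions I would pair $\coal_0\cup\{m\}$ (contributing to $\phi_j$) with $\coal_0\cup\{j\}$ (contributing to $\phi_m$); these carry identical weights because $w$ depends only on coalition size, and since the incoming player completes both to the same set $\coal_0\cup\{j,m\}$, their difference telescopes to $\chara_i(\coal_0\cup\{j\};\bl X)-\chara_i(\coal_0\cup\{m\};\bl X)=M-\max(\max_{k\in\coal_0}\sigma_i(X_k),\sigma_i(X_m))\ge 0$.

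Hence $\phi_j-\phi_m$ is a sum of nonnegative terms, and strictness follows from a single witness: the empty coalition, which carries weight $w(\emptyset)=1/N>0$ and contributes $\Delta_j(\emptyset)-\Delta_m(\emptyset)=M-\sigma_i(X_m)>0$. This yields $\phi_j>\phi_m$, closing Leader Monotonicity and completing the verification of $\fprop$. The only delicate points are checking that the cross-term pairing is a genuine weight-preserving bijection and that at least one strictly positive term survives the summation; the weight argument handles the former and the empty-coalition term handles the latter cleanly.
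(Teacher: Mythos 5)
Your proof is correct and follows essentially the same route as the paper: the first three properties are derived, exactly as in the paper, from the structure of the induced game $(\chara_i(\cdot;\bl X),[N])$ via the standard null-player and symmetry axioms and the fact that the game is built from $\sigma_i$ alone. For Leader Monotonicity, your coalition-form pairing $\coal_0\cup\{m\}\leftrightarrow\coal_0\cup\{j\}$ with the empty coalition as the strict witness is the coalition-form rendering of the paper's permutation swap $R\mapsto R(j\leftrightarrow m)$ with the permutations placing $j$ first as the strict witness, so the two arguments are in exact correspondence.
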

We now show that recommendation games with $\algname$ possess a PNE. This is done using the notion of potential games \cite{monderer1996potential}. A non-cooperative game is called \textit{an exact potential game} if there exists a function $\Phi:\prod_j \loc_j\rightarrow \R$ such that for any strategy profile $\bl X=(X_1,\dots,X_N) \in \prod_j \loc_j$, any player $j$ and any strategy $X_j' \in \loc_j$, whenever player $j$ switches from $X_j$ to $X'_j$, the change in her payoff function equals the change in $\Phi$, i.e., 
\[
\Phi (X_{{j}},\bl X_{{-j}})-\Phi (X'_{{j}},\bl X_{{-j}})=\pi_{{j}}(X_{{j}},\bl X_{{-j}})-\pi_{{j}}(X'_{{j}},\bl X_{{-j}}).
\]
This brings us to the main result of this section:
\begin{theorem}\label{existance-pot-thm}
Recommendation games with the Shapley mediator are exact potential games.
\end{theorem}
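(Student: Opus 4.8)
The plan is to exhibit an explicit exact potential and verify the defining identity directly. Since player $j$'s payoff decomposes over users as $\pi_j(\bl X)=\sum_{i=1}^n \pr\left(\mM(\bl X,u_i)=j\right)$, and the exact-potential condition is linear in the payoffs, it suffices to handle a single user in isolation: if for each user $u_i$ I produce a function $\Phi_i$ that is an exact potential for the game whose payoff to $j$ is the Shapley value $\phi_j^{(i)}(\bl X)$ of the induced game $(v_i(\cdot;\bl X),[N])$, then $\Phi=\sum_{i=1}^n\Phi_i$ is an exact potential for the whole game. So I would fix $u_i$ and abbreviate $s_k=\sigma_i(X_k)$, noting that $v_i(\coal;\bl X)=\max_{k\in\coal}s_k$ and that $\phi_j^{(i)}$ depends on $\bl X$ only through the vector $(s_1,\dots,s_N)\in[0,1]^N$.

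Let $s_{(1)}\ge s_{(2)}\ge\dots\ge s_{(N)}$ denote these satisfaction values sorted in decreasing order. I propose the per-user potential $\Phi_i(\bl X)=\sum_{t=1}^N \tfrac{1}{t}\,s_{(t)}$. To verify the property, I would regard $\phi_j^{(i)}$ and $\Phi_i$ as continuous, piecewise-linear (hence Lipschitz) functions of $(s_1,\dots,s_N)$ and compare their partial derivatives in $s_j$ on the interior of each region where the ranking is fixed. For $\Phi_i$ this derivative is immediately $1/r$, where $r$ is the current rank of $j$ (the number of players $k$ with $s_k\ge s_j$). For the Shapley value, writing $\phi_j^{(i)}$ as the expected marginal contribution $v_i(P_j^R\cup\{j\})-v_i(P_j^R)$ over a uniformly random order $R$, an infinitesimal increase in $s_j$ raises this marginal contribution by the same amount exactly in those orders in which $j$ precedes every player of strictly larger satisfaction; the probability of that event is $1/r$. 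Hence $\partial\phi_j^{(i)}/\partial s_j=1/r=\partial\Phi_i/\partial s_j$ throughout every region.

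Since both $\phi_j^{(i)}$ and $\Phi_i$ are continuous on $[0,1]^N$ and have equal partial derivatives in $s_j$ off a finite union of hyperplanes, the difference $\Phi_i-\phi_j^{(i)}$ is constant along any segment on which only $s_j=\sigma_i(X_j)$ varies. This is exactly the identity $\Phi_i(X_j,\bl X_{-j})-\Phi_i(X'_j,\bl X_{-j})=\phi_j^{(i)}(X_j,\bl X_{-j})-\phi_j^{(i)}(X'_j,\bl X_{-j})$ for every unilateral deviation of $j$, so $\Phi_i$ is an exact potential for user $u_i$'s game. Summing over users then yields the global exact potential $\Phi=\sum_{i=1}^n\Phi_i$, where $\Phi_i(\bl X)$ is the $1/t$-weighted sum of the decreasingly sorted values $\sigma_i(X_1),\dots,\sigma_i(X_N)$.

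The hard part is the second paragraph: understanding the Shapley value of a ``maximum'' game well enough to read off $\partial\phi_j^{(i)}/\partial s_j=1/r$. The clean route is the ``first-among-superiors'' argument above; equivalently one can decompose $v_i=\sum_t (s_{(t)}-s_{(t+1)})\,w_t$ into threshold ``OR'' games $w_t(\coal)=\ind[\coal\cap\{(1),\dots,(t)\}\neq\emptyset]$, each of which splits a unit value equally, $1/t$ apiece, among its top $t$ players and treats the rest as null; this produces the explicit formula $\phi_{(r)}^{(i)}=\sum_{t\ge r}(s_{(t)}-s_{(t+1)})/t$ and makes the derivative computation transparent. A secondary point needing care is behaviour at ties and across ranking boundaries, which is absorbed by the Lipschitz continuity of both functions, so that the region-wise derivative identity integrates up to the exact global equality.
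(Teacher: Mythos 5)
Your proof is correct, but it takes a genuinely different route from the paper. The paper proves the theorem by reduction: it constructs, for each user, a set of resources corresponding to the intervals between consecutive satisfaction levels, maps each item $l$ to the resources lying below $\sigma_i(l)$, assigns resource utilities $(\epsilon_m-\epsilon_{m-1})/k$ under load $k$, shows the resulting congestion game reproduces the Shapley-mediator payoffs exactly, and then invokes Rosenthal together with Monderer--Shapley to conclude that the game is an exact potential game (the potential itself appears only implicitly, as Rosenthal's potential of the constructed congestion game). You instead exhibit the potential in closed form, $\Phi=\sum_{i=1}^n\sum_{t=1}^N s^i_{(t)}/t$ with $s^i_{(1)}\geq\dots\geq s^i_{(N)}$ the sorted satisfaction levels, and verify the exact-potential identity directly: per-user decomposition by linearity, then a piecewise-linear calculus argument showing that both $\phi_j^{(i)}$ and $\Phi_i$ have slope $1/r$ in $s_j$ (with $r$ the rank of $j$), with ties absorbed by Lipschitz continuity. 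In fact the two potentials coincide: Abel summation gives $\sum_t \bigl(s_{(t)}-s_{(t+1)}\bigr)H_t=\sum_t s_{(t)}/t$ where $H_t$ is the $t$-th harmonic number, so your $\Phi_i$ is exactly Rosenthal's potential of the paper's congestion game; the difference is purely in the verification route. What each buys: the paper's reduction places these games inside a well-studied class, requires no calculus, and handles ties purely combinatorially, at the cost of the resource-interval bookkeeping and a separate counting lemma; your argument is self-contained, produces an explicit potential whose argmax characterizes the equilibria, and your threshold-``OR'' decomposition $v_i=\sum_t\bigl(s_{(t)}-s_{(t+1)}\bigr)w_t$ recovers the paper's closed-form Shapley formula (its Theorem \ref{theorem:shapleyoneuseronly}) as a free byproduct, whereas the paper proves that formula separately by permutation counting. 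The one point to state carefully in your write-up is the continuity step: the slope identity holds only off the tie hyperplanes, so you need (as you note) that both functions are Lipschitz along the deviation segment, which makes the region-wise equality of derivatives integrate to the global identity.
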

Thus, due to \citet{monderer1996potential}, any recommendation game with the Shapley mediator possesses at least one PNE,  and the set of pure Nash equilibria corresponds to the set of argmax points of the potential function; therefore, $\algname$ satisfies $\apne$. 
\begin{corollary}
$\algname$ satisfies $\apne$.
\end{corollary}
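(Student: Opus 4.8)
The plan is to exhibit an explicit potential function and verify the exact-potential identity directly. Because the payoff of player $j$ is additive over users, $\pi_j(\bl X)=\sum_{i=1}^n \phi_j^i(\bl X)$, where $\phi_j^i(\bl X)$ is the Shapley value of player $j$ in the induced game $(v_i(\cdot;\bl X),[N])$, it suffices to find, for each user $u_i$, a function $\Psi_i$ of the profile whose change under a unilateral deviation of any player $j$ equals the change in $\phi_j^i$; setting $\Phi=\sum_{i=1}^n \Psi_i$ then yields the global potential. So I first analyze the per-user induced game, whose characteristic function $v_i(\coal;\bl X)=\max_{k\in\coal}\sigma_i(X_k)$ is a ``maximum'' game.

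The first technical step is a closed form for the Shapley value of a maximum game. Writing $a_1\le\cdots\le a_N$ for the sorted values $\{\sigma_i(X_k)\}_{k=1}^N$ and setting $a_0=0$, one has the telescoping identity $v_i(\coal;\bl X)=\sum_{t=1}^N (a_t-a_{t-1})\,\ind[\coal\cap S_t\neq\emptyset]$, where $S_t$ is the set of players holding one of the top $N-t+1$ satisfaction values. Each indicator is an ``OR''-game on $S_t$, whose Shapley value is $1/\abs{S_t}=1/(N-t+1)$ for members of $S_t$ and $0$ otherwise. By additivity of the Shapley value this gives $\phi_j^i$ in closed form, and in particular it shows the single fact I actually need: if player $j$ holds the $(q+1)$-st largest satisfaction level for $u_i$ (i.e. exactly $q$ players strictly exceed it), then, holding the other players' levels fixed, the rate at which $\phi_j^i$ grows as $\sigma_i(X_j)$ grows is $1/(q+1)$.

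Guided by this rate, define $\Psi_i(\bl X)=\sum_{s=1}^N \tfrac1s\,\sigma_i^{[s]}(\bl X)$, where $\sigma_i^{[s]}(\bl X)$ is the $s$-th largest value among $\sigma_i(X_1),\dots,\sigma_i(X_N)$, and set $\Phi=\sum_{i=1}^n\Psi_i$. To check the potential identity for a deviation of player $j$ from $X_j$ to $X'_j$ I would fix $\bl X_{-j}$ and view both $\Psi_i$ and $\phi_j^i$ as functions of the single scalar $t=\sigma_i(X_j)$ (both depend on $\bl X$ only through satisfaction levels, so this is legitimate). Both are continuous and piecewise linear in $t$, with breakpoints exactly where $t$ crosses another player's level; on each linear piece player $j$'s rank is fixed at $(q+1)$, the only term of $\Psi_i$ that varies is $\tfrac{1}{q+1}\,t$, and so $\Psi_i$ has slope $1/(q+1)$, matching the slope of $\phi_j^i$ from the previous step. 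Matching slopes on every piece, together with continuity, forces $\Psi_i-\phi_j^i$ to be constant in $t$; hence $\Psi_i(X_j,\bl X_{-j})-\Psi_i(X'_j,\bl X_{-j})=\phi_j^i(X_j,\bl X_{-j})-\phi_j^i(X'_j,\bl X_{-j})$. Summing over users gives the exact-potential identity, so games with $\algname$ are exact potential games.

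The main obstacle is identifying the potential at all; once the rank-dependent ``display rate'' $1/(q+1)$ is isolated via the OR-game decomposition, the harmonic weights $1/s$ are essentially forced and the remaining verification is routine. A secondary point needing care is the behavior at ties (equal satisfaction levels), where the sorted values and the Shapley value are only piecewise smooth — this is exactly what the continuity argument handles, gluing the matching slopes across breakpoints without requiring the deviation endpoints to avoid ties.
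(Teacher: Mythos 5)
Your proof is correct, but it takes a genuinely different route from the paper's. The paper proves Theorem~\ref{existance-pot-thm} by a reduction to congestion games: each user is split into resources corresponding to the intervals between consecutive satisfaction levels, choosing an item means occupying every interval below one's satisfaction level, and the resource utility $(\epsilon_m-\epsilon_{m-1})/k$ under load $k$ reproduces the Shapley payoffs (a verification that leans on the closed form of Theorem~\ref{theorem:shapleyoneuseronly}, which the paper proves separately by permutation counting via Lemma~\ref{lemma:auxab}); the potential, PNE existence, and better-response convergence then follow by citing \cite{rosenthal1973class} and \cite{monderer1996potential}. You instead exhibit the potential explicitly --- the harmonic-weighted sum of order statistics $\Psi_i(\bl X)=\sum_{s=1}^N \sigma_i^{[s]}(\bl X)/s$ --- and verify the exact-potential identity directly, by matching slopes of two continuous piecewise-linear functions of $t=\sigma_i(X_j)$ and gluing across ties by continuity. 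The two constructions are the same object in disguise: your telescoping/OR-game decomposition of the max game is precisely the paper's interval-resource construction, and your $\Psi_i$ is exactly the Rosenthal potential $\Phi(\bl X)=\sum_{r^i_m}\sum_{j=1}^{k^i_m(\A(\bl X))} w^i_m(j)$ that the paper records after its proof. What your route buys is self-containedness and economy: the OR-game decomposition plus linearity of the Shapley value gives the closed form of Theorem~\ref{theorem:shapleyoneuseronly} essentially for free, bypassing the combinatorial counting, and no congestion-game machinery is needed. What the paper's route buys is brevity by citation and the placement of these games in a well-studied class. One step you should make explicit to reach the corollary itself: since each $\mL_j$ is finite, the potential $\Phi$ attains a maximum, and any maximizer is a PNE --- the same standard deduction the paper attributes to \citet{monderer1996potential}.
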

In fact, Theorem \ref{existance-pot-thm} proves a much stronger claim than merely the existence of PNE.
A better-response dynamics is a sequential process, where in each iteration an arbitrary player unilaterally deviates to a strategy which increases her payoff. 
\begin{corollary}
In recommendation games with the Shapley mediator, any better-response dynamics converges.
\end{corollary}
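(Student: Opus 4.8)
The plan is to derive the corollary directly from Theorem~\ref{existance-pot-thm} via the standard finite-improvement-property argument of \citet{monderer1996potential}. By that theorem, recommendation games with $\algname$ admit an exact potential function $\Phi:\prod_j \loc_j \rightarrow \R$. The essential observation is that the defining identity of an exact potential converts each player's private improvement into a strict increase of the single global quantity $\Phi$; combined with finiteness of the joint strategy space, this forces any better-response dynamics to halt at an equilibrium.

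Concretely, I would argue as follows. Fix an arbitrary better-response dynamics and consider a single iteration in which player $j$ unilaterally switches from $X_j$ to $X'_j$. By definition of a better-response step this deviation strictly increases her payoff, i.e.\ $\pi_j(X'_j,\bl X_{-j}) > \pi_j(X_j,\bl X_{-j})$. Applying the exact-potential identity to this very deviation yields
\[
\Phi(X'_j,\bl X_{-j}) - \Phi(X_j,\bl X_{-j}) = \pi_j(X'_j,\bl X_{-j}) - \pi_j(X_j,\bl X_{-j}) > 0,
\]
so the value of $\Phi$ strictly increases at every step of the dynamics, irrespective of which player moves.

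It then remains to invoke finiteness. Since each $\loc_j$ is finite, the joint strategy space $\prod_j \loc_j$ is finite, so $\Phi$ attains only finitely many values. A trajectory along which $\Phi$ is strictly increasing can visit no profile twice, as revisiting a profile would require $\Phi$ to return to an already-attained value, contradicting strict monotonicity. Hence the dynamics traverses pairwise-distinct profiles and must terminate after at most $\abs{\prod_j \loc_j}$ steps; the terminal profile, in which no player admits a strictly improving deviation, is by definition a PNE, which is precisely the sense in which the dynamics converges. I expect no substantive obstacle beyond Theorem~\ref{existance-pot-thm} itself: the only points warranting care are that a better-response step is genuinely strict and that the strategy space is finite, both of which are guaranteed by the model's definitions.
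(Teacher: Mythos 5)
Your proposal is correct and follows essentially the same route as the paper: the paper derives this corollary from Theorem~\ref{existance-pot-thm} by citing the finite-improvement-property result of \citet{monderer1996potential}, and your argument simply unpacks that citation into the standard explicit proof (strict increase of the exact potential at each better-response step, plus finiteness of $\prod_j \loc_j$). No gaps; the points you flag as needing care (strictness of improvement steps and finiteness of the strategy space) are indeed the only ones, and both hold in the paper's model.
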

This convergence guarantee allows the players to learn which items to pick in order to maximize their payoffs. Indeed, as has been observed by work on the topic of online recommendation and advertising systems (e.g. sponsored search \cite{cary2014convergence}), convergence to PNE is essential for system stability, as otherwise inefficient fluctuations may occur.

\section{Linear time implementation}
In Section \ref{sec:ourapproach} we showed that the Shapley mediator, $\algname$, satisfies $\fprop$ and $\apne$. Therefore, it fulfills our requirements stated in Section \ref{sec:model}. However, implementation in commercial products would require the mediator to be computationally tractable. The mediator interacts with users; hence a fast response is of great importance. In general, since Equation (\ref{def:shapley-value}) includes $2^N$ summands, the computation of the Shapley value in a cooperative game need not be tractable. 
Indeed, the computation  often involves marginal contribution nets \cite{chalkiadakis2011computational,ieong2005marginal}. 
In the following theorem we derive a closed-form formula for calculating the display probabilities under the Shapley mediator, which allows it to compute the display probabilities in linear time.

\begin{theorem}
\label{theorem:shapleyoneuseronly}
Let $\bl X$ be a strategy profile, and let $\sigma_i^m(\bl X)$ denote the $m$'th entry in the result of sorting $\left(\sigma_i(X_1),\dots,\sigma_i(X_N)\right)$ in ascending order, preserving duplicate elements. The Shapley mediator displays  player $j$'s item to a user $u_i$ with probability
\begin{equation}
\label{eq:shpleyoneuser}
\pr \left(  \algname(\bl X,u_i)=j \right) = \sum_{m=1}^{\rho_i^j(\bl X)} \frac{\sigma_i^{m}(\bl X)-\sigma_i^{m-1}(\bl X)}{N-m+1},
\end{equation}
where $\sigma_i^0(\bl X)=0$, and $\rho_i^j(\bl X)$ is an index such that $\sigma_i(X_j)=\sigma_i^{\rho_i^j(\bl X)}(\bl X)$. 
\end{theorem}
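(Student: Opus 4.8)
The plan is to compute the Shapley value of the induced game $(\chara_i(\cdot;\bl X),[N])$ directly, exploiting the fact that its characteristic function is a maximum: $\chara_i(\coal;\bl X)=\max_{k\in\coal}\sigma_i(X_k)$. I abbreviate $s_k=\sigma_i(X_k)$ and write $\phi_j$ for the Shapley value of player $j$, which by the definition in Equation (\ref{def:shapley-value}) equals the expected marginal contribution under a uniformly random permutation $R\in\Pi([N])$. The first step is to observe that, because $\chara_i$ is a max, this marginal contribution has the clean closed form
\[
\chara_i(P_j^R\cup\{j\};\bl X)-\chara_i(P_j^R;\bl X)=\max(s_j,M_j)-M_j=(s_j-M_j)^+,
\]
where $M_j=\max_{k\in P_j^R}s_k$ is the largest satisfaction among the players preceding $j$ (and $M_j=0$ when $j$ is first). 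Thus the whole computation reduces to evaluating $\E\big[(s_j-M_j)^+\big]$ over the uniform choice of $R$.

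Next I would apply the layer-cake identity $(s_j-M_j)^+=\int_0^{s_j}\ind[M_j<t]\,dt$ and swap the (finite) expectation with the integral to get $\phi_j=\int_0^{s_j}\pr[M_j<t]\,dt$. The crux is a symmetry argument evaluating $\pr[M_j<t]$: for $t\in[0,s_j)$ the event $\{M_j<t\}$ says that no player preceding $j$ has satisfaction exceeding $t$, i.e., that within the set $S_t=\{k:s_k>t\}$ — which contains $j$, since $s_j>t$ — player $j$ comes before all the others. Since $R$ is uniform, $j$ is equally likely to occupy any relative position among the $\abs{S_t}$ members of $S_t$, so $\pr[M_j<t]=1/\abs{S_t}$. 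This yields $\phi_j=\int_0^{s_j}\frac{dt}{\abs{\{k:s_k>t\}}}$. (Equivalently, one could decompose $\chara_i$ into the threshold games $\coal\mapsto\ind[\coal\cap S_t\neq\emptyset]$, compute each of their Shapley values by symmetry, and invoke linearity of the Shapley value; this leads to the identical integral.)

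Finally I would evaluate this integral by cutting $[0,s_j]$ at the sorted satisfaction values. Writing $\rho=\rho_i^j(\bl X)$ so that $s_j=\sigma_i^{\rho}(\bl X)$, on each nonempty subinterval $\big(\sigma_i^{m-1}(\bl X),\sigma_i^{m}(\bl X)\big)$ with $1\le m\le\rho$ the satisfaction values exceeding $t$ are precisely those of rank at least $m$, so $\abs{\{k:s_k>t\}}=N-m+1$ is constant there. Integrating term by term gives
\[
\phi_j=\sum_{m=1}^{\rho}\frac{\sigma_i^{m}(\bl X)-\sigma_i^{m-1}(\bl X)}{N-m+1},
\]
which is exactly Equation (\ref{eq:shpleyoneuser}).

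The only real subtlety — and what I expect to be the main thing to argue carefully — is the handling of ties, which simultaneously shows that the formula does not depend on the choice of $\rho_i^j(\bl X)$ among equal values. Whenever $\sigma_i^{m-1}(\bl X)=\sigma_i^{m}(\bl X)$ the corresponding summand vanishes and the corresponding integration interval is empty, so neither the counting step nor the final sum is affected; likewise, the finitely many threshold values $t=s_k$ form a measure-zero set and contribute nothing to the integral. A secondary point worth stating cleanly is the uniformity-based symmetry step establishing $\pr[M_j<t]=1/\abs{S_t}$, which is the one place where randomness of $R$ is genuinely used.
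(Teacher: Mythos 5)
Your proof is correct, but it takes a genuinely different route from the paper's. The paper computes the Shapley value by explicit permutation counting: it observes that only permutations in which every predecessor of $j$ has a smaller satisfaction level contribute, defines $b_j$ (the number of permutations with $\max P_j^R < j$) and $a_r$ (those with $\max P_j^R = r$), derives the closed forms $b_r = \frac{N!}{N-r+1}$ and $a_r = b_{r+1}-b_r$ via Chu Shih-Chieh's (hockey-stick) identity and Pascal's rule, and then telescopes $\frac{1}{N!}\bigl(b_j\sigma_i^j - \sum_{r<j} a_r \sigma_i^r\bigr)$ into Equation (\ref{eq:shpleyoneuser}). You instead write the marginal contribution as $(s_j - M_j)^+$, apply the layer-cake identity to get $\phi_j = \int_0^{s_j} \pr\left[M_j < t\right]\,dt$, and evaluate $\pr\left[M_j < t\right] = 1/\abs{S_t}$ by the symmetry of a uniform random permutation (the probability that $j$ arrives first among the players whose satisfaction exceeds $t$) --- the classic ``random arrival'' argument familiar from airport-type games; slicing the integral at the sorted satisfaction values then gives the formula directly. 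Your route buys several things: it needs no binomial identities at all, it handles ties and the well-definedness of $\rho_i^j(\bl X)$ transparently (empty intervals and measure-zero thresholds contribute nothing, whereas the paper silently relabels players in sorted order), and it exposes the structural reason the formula holds, namely that the max-characteristic function decomposes into threshold games, so the argument generalizes immediately to the personalized-offers extension in the appendix. What the paper's counting proof buys in exchange is that it is entirely finitary and self-contained --- no integrals or probabilistic reasoning beyond the defining sum --- which some readers may find more elementary. Both proofs are complete; the one step you should state with full care in a written version is the strict-versus-weak inequality issue at thresholds $t = s_k$, which you correctly dispose of by a measure-zero remark.
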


The Shapley mediator is implemented in Algorithm \ref{alg:sm}. As an input, it receives a strategy profile and a user, or equivalently user satisfaction levels from that strategy profile. It outputs a player's item with a probability equal to her Shapley value in the cooperative game defined above. Note that the run-time of Algorithm \ref{alg:sm} is linear in the number of players, i.e., $\mO(N)$. 
A direct result from Theorem \ref{theorem:shapleyoneuseronly} and \textbf{User-Independence} (see Section \ref{sec:model}) is that player payoffs can be calculated efficiently.
\begin{corollary}
In recommendation games with the Shapley mediator, the payoff of player $j$ under the strategy profile $\bl X$ is given by $\pi_j(\bl X) = \sum_{i=1}^n \sum_{m=1}^{\rho_i^j(\bl X)} \frac{\sigma_i^{m}(\bl X)-\sigma_i^{m-1}(\bl X)}{N-m+1}$.
\end{corollary}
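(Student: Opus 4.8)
The plan is to exploit the additivity of the Shapley value by decomposing the characteristic function $v_i(\cdot;\bl X)$ into a nonnegative linear combination of elementary ``threshold'' games whose Shapley values are immediate. Fix the user $u_i$ and the profile $\bl X$, and recall that $v_i(\coal;\bl X)=\max_{k\in\coal}\sigma_i(X_k)$ with $v_i(\emptyset;\bl X)=0$. For each $m\in[N]$ let $A_m$ denote the set of players occupying sorted positions $m,\dots,N$, i.e. the $N-m+1$ players whose satisfaction is among the largest, and let $u_{A_m}(\coal)=\ind[\coal\cap A_m\neq\emptyset]$ be the associated ``OR'' game.

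First I would establish the decomposition
\[
v_i(\coal;\bl X)=\sum_{m=1}^N \big(\sigma_i^m(\bl X)-\sigma_i^{m-1}(\bl X)\big)\,\ind\!\big[\coal\cap A_m\neq\emptyset\big].
\]
The cleanest justification is a layer-cake identity: since $\max_{k\in\coal}\sigma_i(X_k)=\int_0^\infty \ind[\exists k\in\coal:\sigma_i(X_k)\ge t]\,dt$, it suffices to observe that for every $t$ in the half-open interval $(\sigma_i^{m-1}(\bl X),\sigma_i^m(\bl X)]$ the set of players with satisfaction at least $t$ is exactly $A_m$. This statement is phrased at the level of sorted positions, so it holds verbatim even with repeated values; integrating over each such interval and summing yields the displayed formula.

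Next I would read off the Shapley value of each elementary game $u_{A_m}$. Any player outside $A_m$ is a null player, since adjoining her never changes whether a coalition meets $A_m$, and so receives $0$; the members of $A_m$ are mutually interchangeable, hence by symmetry they share equally; and efficiency forces their total to equal $u_{A_m}([N])=1$. Therefore each member of $A_m$ is assigned $1/(N-m+1)$ and each non-member $0$. Invoking the additivity of the Shapley value over the decomposition above, player $j$'s value is obtained by summing $(\sigma_i^m(\bl X)-\sigma_i^{m-1}(\bl X))/(N-m+1)$ over exactly those $m$ with $j\in A_m$; and $j\in A_m$ holds precisely when $m\le\rho_i^j(\bl X)$. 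This is Equation~(\ref{eq:shpleyoneuser}).

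The only real subtlety, and the step I would treat most carefully, is ties among the satisfaction levels, which render both $A_m$ and the index $\rho_i^j(\bl X)$ ambiguous. The decomposition sidesteps this because it is stated via sorted positions with duplicates preserved, and whenever $\sigma_i^{m-1}(\bl X)=\sigma_i^m(\bl X)$ the coefficient $\sigma_i^m(\bl X)-\sigma_i^{m-1}(\bl X)$ vanishes. Consequently the truncated sum is independent of which tied position is selected as $\rho_i^j(\bl X)$, so the formula is well defined and is automatically consistent with the symmetry-forced equality of the Shapley values of tied players. Reading the resulting summation left to right, after one initial sort, then gives the claimed $\mathcal{O}(N)$ implementation.
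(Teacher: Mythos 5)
Your proof is correct, but it reaches the result by a genuinely different route than the paper. The paper treats this corollary as immediate: the per-user display probability is exactly Theorem \ref{theorem:shapleyoneuseronly}, and the corollary is obtained by summing that formula over users, which is just the definition $\pi_j(\bl X)=\sum_{i=1}^{n}\pr\left(\mM(\bl X,u_i)=j\right)$ (your write-up leaves this final, purely definitional, summation implicit; worth one sentence, but not a gap). The substance is therefore the per-user formula, and there the two arguments diverge. The paper computes the Shapley value directly from the permutation definition: it discards permutations in which a higher-indexed player precedes $j$, counts the remaining ones via $b_j=\sum_{m=0}^{j-1}m!\,(N-m-1)!\binom{j-1}{m}$ and the analogous $a_r$, and collapses these to the closed forms $b_r=\frac{N!}{N-r+1}$ and $a_r=b_{r+1}-b_r$ using Chu Shih-Chieh's identity and Pascal's rule (Lemma \ref{lemma:auxab}). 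You instead expand the characteristic function $v_i(\coal;\bl X)=\max_{k\in\coal}\sigma_i(X_k)$ in a basis of ``OR'' games via the layer-cake identity, evaluate each elementary game axiomatically (null player, symmetry and efficiency give $\frac{1}{N-m+1}$ to each member of $A_m$), and finish by linearity of the Shapley value (property C3 in the paper's own appendix). Your route buys conceptual economy: no combinatorial identities, transparent handling of ties (tied layers carry zero weight), and an argument that generalizes verbatim to any characteristic function expressible as a nonnegative combination of such threshold games. It also exposes structure the paper only exploits elsewhere: your layers $\left(\sigma_i^{m-1}(\bl X),\sigma_i^m(\bl X)\right]$ are precisely the interval resources $r^i_m$ in the paper's congestion-game construction for Theorem \ref{existance-pot-thm}, so your decomposition unifies the linear-time computation with the potential-game result. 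What the paper's computation buys in exchange is self-containedness: it needs only the defining permutation formula, not the linearity axiom, at the cost of heavier combinatorial bookkeeping.
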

To facilitate understanding of the Shapley mediator and its fast computation, we reconsider Example \ref{example:motive} above.

\begin{example}
\label{example:shapley}
Consider the game given in Example \ref{example:motive}. According to the Shapley mediator, the display probabilities of player 1 under the strategy profile $\bl X= (l_2,l_3)$ are 
\begin{align*}
&\pr (\algname \left( \bl X,u_1 \right)= 1)=\frac{\sigma^1_1\left(\bl X\right)-\sigma^0_1\left(\bl X\right)}{2}   =\frac{0.8-0}{2}=0.4, \\
&\pr (\algname \left( \bl X,u_2 \right)= 1)= \frac{\sigma^1_2\left(\bl X\right)-\sigma^0_2\left(\bl X\right)}{2}=\frac{0.7-0}{2}=0.35,\\
&\pr (\algname \left( \bl X ,u_3 \right)= 1)=\frac{\sigma^1_3\left(\bl X\right)-\sigma^0_3\left(\bl X\right)}{2}+ \frac{\sigma^2_3\left(\bl X\right)-\sigma^1_3\left(\bl X\right)}{1}  =\frac{0.1-0}{2}+\frac{0.9-0.1}{1}=0.85.
\end{align*}

It follows that $\pi_1(l_2,l_3)=\frac{8}{5}$ while $\pi_1(l_1,l_3)=\frac{7}{10}$, and the profile to be materialized is $(l_2,l_3)$. Indeed, it can be verified that this is the unique PNE of the corresponding game. 
Moreover, while the unique PNE under $\topm$ (see Example \ref{example:motive} in Section \ref{sec:model}) results in a user utility of $2$, the unique PNE under the Shapley mediator results in user utility of
{\small
\begin{align*}
&\sum_{i=1}^3 \left( \sigma_i(l_2) \pr (\algname \left( (l_2,l_3),u_i \right)= 1)\right) +\left( \sigma_i(l_3) \pr (\algname \left( (l_2,l_3),u_i \right)= 2)  \right) =2.145>2.
\end{align*}}Hence, the users benefit from the Shapley mediator is greater than from the $\topm$ mediator. This is in addition to the main property of the Shapley mediator, probabilistic selection according to the central measure of fair allocation. 
\end{example}

\begin{algorithm}[t]
 \caption{Shapley Mediator \label{alg:sm}}
\DontPrintSemicolon

 \KwIn{ A strategy profile $\bl X=(X_1,\dots,X_N)$ and a user $u_i$}
 \KwOut{ An element from $\{\emptyset,X_1,\dots,X_N\}$}
  Pick $Y$ uniformly at random from $(0,1)$ \;
  \eIf {$Y>\max_{j\in[N]} \sigma_i(X_j)$}{
  return $\emptyset$}
{Return an element uniformly at random from ${\{X_j\mid j\in[N],\sigma_i(X_j) \geq Y \}}$}
\end{algorithm}

\section{Uniqueness of the Shapley mediator}
\label{sec:uniqueness}
As analyzed in Subsection \ref{subsec:impos}, Theorem \ref{thm:imptwoplayers} suggests that a mediator cannot satisfy both $\fprop$ and $\apne$ if it sets the probabilities such that $\sum_{j=1}^N \pr \left( \mM(\bl X,u_i)= j \right)$ is constant. One way of determining $\sum_{j=1}^N \pr \left( \mM(\bl X,u_i)= j \right)$ is defined as follows.
\begin{enumerate}[leftmargin=0cm,itemindent=.0cm,labelwidth=\itemindent,labelsep=0cm,align=left]
\item[] \textbf{Efficiency}. The probability of displaying an item to $u_i$ 
is the maximal satisfaction level $u_i$ may obtain from the items chosen in $\bl X$. Formally, $\sum_{j=1}^N \pr \left(  \mM(\bl X,u_i)=j \right)=\sigma_i(\bl X)$.
\end{enumerate}
Efficiency (for brevity, $\aeff$) binds player payoffs with the maximum satisfaction level of $u_i$ from the items chosen by the players under $\bl X$. 
It is well known \cite{dubey1975uniqueness,shapley1952value} that the Shapley value is uniquely characterized by properties equivalent to $\fprop$ and $\aeff$, when stated in terms of cooperative games. It is therefore obvious that the Shapley mediator satisfies $\aeff$.
\footnote{
See the proof of Proposition \ref{prop:shapleyisfair} in the appendix. \textbf{Leader Monotonicity}, as opposed to the other fairness properties, is not one of Shapley's axioms but rather a byproduct of Shapley's characterization.
}
Thus, one would expect that the Shapley mediator will be the only mediator that satisfies $\fprop$ and $\aeff$. This is, however, not the case: consider a mediator that runs $\topm$ w.p. $\sigma_i(\bl X)$ and $\none$ otherwise. Clearly, it satisfies $\fprop$ and $\aeff$. 
In fact, given a mediator $\mM$ satisfying $\fprop$ and $\aalways$, we can define $\mM'$ such that 
\begin{equation}
\pr \left(  \mM'(\bl X,u_i)=j \right) = \pr \left(  \mM(\bl X,u_i)=j \right) \cdot \sigma_i(\bl X),
\end{equation}
thereby obtaining a mediator satisfying $\fprop$ and $\aeff$. The question of uniqueness then arises: is $\apne$ derived by satisfying $\fprop$ and $\aeff$? Or even more broadly, are there mediators that satisfy $\fprop$, $\apne$ and $\aeff$ besides the Shapley mediator? 
Had the answer been yes, this recipe for generating new mediators would have allowed us to seek potentially better mediators, e.g., one satisfying $\fprop, \apne$ and $\aeff$ while maximizing user utility. However, as we show next, the Shapley mediator is unique in satisfying $\fprop$, $\apne$ and $\aeff$.
\begin{theorem}
\label{thm:uniquenesseff}
The only mediator satisfying $\fprop,\apne$ and $\aeff$ is the Shapley mediator.
\end{theorem}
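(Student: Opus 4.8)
The plan is to first strip the problem down to a single function using $\fprop$. By \textbf{User-Independence}, the quantity $\pr\left(\mM(\bl X,u_i)=j\right)$ depends only on the vector of satisfaction levels $\left(\sigma_i(X_1),\dots,\sigma_i(X_N)\right)$ that $u_i$ assigns to the chosen items; by \textbf{Symmetry} it is invariant under permutations that preserve these values, and \textbf{Null Player} zeroes out items with no satisfaction. Together these let me write $\pr\left(\mM(\bl X,u_i)=j\right)=\phi_j\big(\sigma_i(X_1),\dots,\sigma_i(X_N)\big)$ for one fixed, permutation-equivariant map $\phi$, and the payoff of player $j$ in any game becomes $\pi_j(\bl X)=\sum_i \phi_j\big(\sigma_i(X_1),\dots,\sigma_i(X_N)\big)$. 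Now $\aeff$ forces $\sum_j \phi_j(s)=\max_j s_j$, and \textbf{Leader Monotonicity} forces items attaining the maximum to receive strictly more than the others. The goal is then to show that the only such $\phi$ compatible with $\apne$ is the Shapley formula of Equation (\ref{eq:shpleyoneuser}).

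I would argue by contradiction, exactly in the spirit of the remark following the proof of Theorem \ref{thm:imptwoplayers} that ``our technique can be used to show that any arbitrary game does not possess a PNE.'' Suppose $\mM$ satisfies $\fprop$ and $\aeff$ but $\phi$ disagrees with the Shapley value at some satisfaction vector $s^{*}$. Since both $\phi(s^{*})$ and the Shapley value sum to $\max_j s^{*}_j$ and respect the same ordering, a disagreement means $\phi$ over-rewards some item and under-rewards another relative to Shapley. The plan is to embed $s^{*}$ together with its cyclic shifts into a symmetric game generalizing the rock--paper--scissors matrix of Theorem \ref{thm:imptwoplayers}, chosen so that this imbalance produces a strict cycle of profitable unilateral deviations and hence no PNE, contradicting $\apne$. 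Crucially, it is $\aeff$ (rather than $\aalways$) that makes this feasible: it pins the total display budget to $\sigma_i(\bl X)$, so the only remaining freedom is how that budget is split among the items, and the cyclic game is precisely a probe of that split.

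To organize the construction I would induct on the number of players $N$ (equivalently, on the number of distinct positive satisfaction values), using \textbf{Null Player} to freeze any player assigned zero satisfaction and thereby reduce the effective dimension. The base case $N=2$ has $\phi_1+\phi_2=\max(s_1,s_2)$ with $\phi_1\le\phi_2$, and a two-player cyclic game should force $\phi_1(s_1,s_2)=s_1/2$, matching Shapley. In the inductive step, assuming $\phi$ coincides with the Shapley value on all strictly smaller configurations, I would localize any deviation at an $N$-player configuration into a cyclic sub-game on the affected items and again obstruct equilibrium. The conceptual reason this must succeed is that the Shapley value is the unique efficient, symmetric, order-respecting allocation whose increments carry the recursive weights $1/(N-m+1)$ --- the very weights that turn the induced game into an exact potential game by Theorem \ref{existance-pot-thm}; any other split lacks the potential-like stationary point that a PNE would require.

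The main obstacle will be the explicit game construction in the contradiction step: converting a single pointwise deviation $\phi(s^{*})\neq\text{Shapley}(s^{*})$ into a globally PNE-free game. Unlike Theorem \ref{thm:imptwoplayers}, where \textbf{Complete} pinned the probabilities to a constant sum and one symmetric matrix sufficed, here I must choose the satisfaction matrix and the strategy sets so that the best-response graph contains an improving cycle with no sink, and I must verify that the sign of the deviation (over- versus under-rewarding) consistently orients that cycle no matter which item is favored. Showing that such a game can always be arranged, and that the weights $1/(N-m+1)$ are the unique ones immune to it, is the crux of the proof.
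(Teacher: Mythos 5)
Your reduction via \textbf{User-Independence}, \textbf{Symmetry}, \textbf{Null Player} and $\aeff$ to a single permutation-equivariant map $\phi$ with $\sum_j \phi_j(s)=\max_j s_j$ matches the paper's first step exactly, and your overall skeleton (contradiction through PNE-free games, induction with base case $N=2$) is also the paper's. But the proposal stops precisely where the real work begins: you never exhibit the game that converts a pointwise disagreement $\phi(s^{*})\neq \sm(s^{*})$ into a violation of $\apne$, and you yourself flag this as ``the crux.'' Moreover, the construction you sketch is unlikely to be a routine generalization of the rock--paper--scissors matrix of Theorem \ref{thm:imptwoplayers}. That matrix works because \aalways{} makes every profile's total payoff constant (each user contributes probability $1$), so the induced bi-matrix is effectively constant-sum and the no-PNE condition collapses to a single scalar equation ($\beta=1.5$). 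Under $\aeff$ the total payoff at a profile is $\sum_i \sigma_i(\bl X)$, which varies from profile to profile, so cyclic shifts of $s^{*}$ do not produce that symmetric constant-sum structure and the same analysis does not go through.

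There is also a structural mismatch in what you propose to prove by induction. A unilateral deviation changes the satisfaction vector, so any improving cycle constrains relations between $\phi$ evaluated at \emph{different} vectors; it cannot directly ``probe the split'' of the budget at one fixed $s^{*}$. The paper resolves this by proving a relational increment identity, $\mM_k(\bl \sigma^{-k},\sigma^{k \prime}) + \frac{\sigma^k-\sigma^{k \prime}}{N-k+1} = \mM_k(\bl \sigma)$ for $\sigma^{k\prime}\leq\sigma^k$ (its Lemmas \ref{lemma:plronegetsshapley} and \ref{lemma:almostshapley}), via a double induction on the player index and $N$. For each hypothesized failure of this identity it constructs a game with exactly two strategic players (two strategies each), all remaining players non-strategic, and four users, two of them auxiliary users whose satisfaction levels are tuned by the assumed deviation $\epsilon$ so that a $2\times 2$ improving cycle arises; the identity then combines with \textbf{Null Player}, \textbf{Symmetry} and $\aeff$ to pin down $\mM=\sm$ coordinate by coordinate. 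Your plan would need to discover both this reformulation (increments across related vectors, rather than pointwise values) and the explicit tuned constructions; as written, the closing claim that ``any other split lacks the potential-like stationary point that a PNE would require'' is an appeal to the conclusion, not an argument.
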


\section{Implications of strategic behavior} 
\label{sec:userutil}
In this section we examine the implications of strategic behavior of the players on their payoffs and user utility. Comprehensive treatment of the integration of multiple stakeholders into recommendation calculations was discussed only recently \cite{burke2016towards}, and appears to be challenging. As our work is concerned with strategic content providers, it is natural to consider the Price of Anarchy \cite{koutsoupias1999worst,roughgarden2009intrinsic}, a common inefficiency measure in non-cooperative games.

\subsection{Player payoffs}

The Price of Anarchy, herein denoted $PoA$, measures the inefficiency in terms of social welfare, as a result of selfish behavior of the players. Specifically, it is the ratio between an optimal dictatorial scenario and the social welfare of the worst PNE. 
Formally, if $E_{\mM}\subseteq \prod_j \loc_j$ is the set of PNE profiles induced by a mediator $\mM$, then $ PoA_{\mM} = \frac{\max_{\bl X \in \prod_j \loc_j}{{V(\bl X)}}}{\min_{\bl X \in E_{\mM}}{{V(\bl X)}}} \geq  1
$.
We use the subscript $\mM$ to stress that the $PoA_{\mM}$ depends on the mediator, through the definition of social welfare function $V$ and player payoffs. Notice that the $PoA$ of a mediator that does not satisfy $\apne$ can be unbounded, as a PNE may not exist. Quantifying the $PoA$ can be technically challenging; thus we restrict our analysis to $PoA_{\sm}$, the $PoA$ of the Shapley mediator. 
\begin{theorem}
\label{thm:poa}
$PoA_{\sm}  \leq \frac{2N-1}{N}$,
and this bound is tight. 
\end{theorem}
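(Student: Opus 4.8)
The plan is to first collapse the social welfare into a transparent closed form, then prove a smoothness-type inequality that yields the upper bound, and finally exhibit a matching instance.

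First I would use that $\algname$ satisfies $\aeff$ to rewrite the objective. Since $\sum_{j=1}^N \pr(\algname(\bl X,u_i)=j)=\sigma_i(\bl X)$, summing player payoffs gives $V(\bl X)=\sum_{j}\pi_j(\bl X)=\sum_{i}\sigma_i(\bl X)=\sum_i\max_j\sigma_i(X_j)$. Fix a welfare-maximizing profile $\bl X^*$ with $V^*=V(\bl X^*)$ and let $\bl X$ be an arbitrary PNE. I would run the standard argument: for each player $j$, deviating to her socially optimal item $X_j^*$ is not profitable, so $\pi_j(\bl X)\ge \pi_j(X_j^*,\bl X_{-j})$, and summing over $j$ yields $V(\bl X)\ge \sum_{j}\pi_j(X_j^*,\bl X_{-j})=\sum_i\sum_j \phi_j^i$, where $\phi_j^i$ denotes player $j$'s Shapley share for user $u_i$ in the profile $(X_j^*,\bl X_{-j})$.

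The heart of the argument is a per-user lower bound. Fix $u_i$, let $j_0$ attain $\sigma_i(\bl X^*)$, write $o_i=\sigma_i(\bl X^*)$ and $c_i=\sigma_i(\bl X)$. Using the permutation definition of the Shapley value, $\phi_{j_0}^i=\E_R\big[\max(o_i-M,0)\big]$, where $M$ is the largest satisfaction among $j_0$'s predecessors; crucially, all predecessors are drawn from $\bl X_{-j_0}$, so $M\le \max_{k\ne j_0}\sigma_i(X_k)\le c_i$ always. Isolating the event that $j_0$ is first (probability $\tfrac1N$, contributing exactly $o_i$) and bounding every remaining marginal contribution below by $\max(o_i-c_i,0)$ gives $\phi_{j_0}^i\ge \tfrac{o_i}{N}+\tfrac{N-1}{N}\max(o_i-c_i,0)$, which in both cases $o_i\ge c_i$ and $o_i<c_i$ simplifies to $\phi_{j_0}^i\ge o_i-\tfrac{N-1}{N}c_i$. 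Since the other shares are nonnegative, $\sum_j\phi_j^i\ge o_i-\tfrac{N-1}{N}c_i$. Summing over users yields $\sum_j\pi_j(X_j^*,\bl X_{-j})\ge V^*-\tfrac{N-1}{N}V(\bl X)$, and combining with the PNE inequality gives $V(\bl X)\big(1+\tfrac{N-1}{N}\big)\ge V^*$, i.e. $V^*/V(\bl X)\le \tfrac{2N-1}{N}$; as this holds for every PNE, $PoA_{\sm}\le\tfrac{2N-1}{N}$.

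For tightness I would build an instance with $N$ players, each having two items $a_j,b_j$, and $N$ users, setting $\sigma_i(a_j)=\ind_{i=j}$ and $\sigma_i(b_j)=t$ for all $i,j$, with $t=\tfrac{N}{2N-1}$. The profile $(a_1,\dots,a_N)$ makes every user's maximum equal to $1$, so $V^*=N$. At $(b_1,\dots,b_N)$ every user ties all $N$ players at $t$; each player's share per user is $t/N$, so $\pi_j=t$ and welfare is $Nt$. A unilateral deviation of player $j$ to $a_j$ makes her the unique leader for $u_j$ (value $1$ against $N-1$ values of $t$) and a null player elsewhere, giving payoff $1-\tfrac{N-1}{N}t=t$ at the chosen $t$; hence no deviation is profitable and the profile is a PNE whose ratio is $N/(Nt)=\tfrac{2N-1}{N}$. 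The main obstacle is the per-user Shapley lower bound, namely extracting the exact constant $\tfrac{N-1}{N}$ from the marginal-contribution expectation (the first-position/ predecessor decomposition) and confirming the constructed profile is a genuine — indeed the worst — equilibrium so that the bound is attained.
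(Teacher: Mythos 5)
Your proof is correct, but the route to the upper bound is genuinely different from the paper's. The paper first proves a per-player lower bound (its Lemma \ref{lemma:1}): $\pi_j(\bl X) \geq \frac{V(X_j)}{N} + \frac{N-1}{N}\left(V(\bl X)-V(\bl X_{-j})\right)$, applies it to each deviation profile $(X_j^*,\bl X_{-j})$, and then needs \emph{submodularity} of the welfare function $V$ (its Lemma \ref{lemma:submod} and Corollary \ref{crl:submod}) to telescope the sum $\sum_j \left(V(X_j^*,\bl X_{-j})-V(\bl X_{-j})\right) \geq V(\bl X^*)-V(\bl X)$. You instead work per user: for each $u_i$ you keep only the share of the single player $j_0(i)$ attaining that user's optimal satisfaction, bound it via the permutation form of the Shapley value (first-position event worth $\frac{o_i}{N}$, remaining positions worth at least $\max(o_i-c_i,0)$ since predecessors' values are capped by $c_i$), and discard the other players' nonnegative shares. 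This yields $\sum_j\pi_j(X_j^*,\bl X_{-j}) \geq V^* - \frac{N-1}{N}V(\bl X)$ directly, with no submodularity machinery at all --- a more elementary aggregation, at the cost of not producing the reusable per-player lemma the paper states. Both case checks in your bound ($o_i \geq c_i$ giving equality, $o_i < c_i$ giving slack) are right, and the exchange of summation $\sum_j\sum_i \phi_j^i = \sum_i\sum_j\phi_j^i \geq \sum_i \phi_{j_0(i)}^i$ is valid because the shares come from monotone characteristic functions. Your tightness instance is the same idea as the paper's (one personal item per user plus a universal item of value $t=\frac{N}{2N-1}$), just with per-player two-item strategy sets rather than a common item pool, which if anything shortens the deviation check; also note you need not verify the constructed profile is the \emph{worst} equilibrium --- exhibiting any PNE of welfare $Nt$ suffices, since the upper bound already caps the ratio.
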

Hence, under the Shapley mediator the social welfare of the players can decrease by at most a factor of 2, when compared to an optimal solution. 
\subsection{User utility}
\label{subsec:user utility}
We now examine the implications of using the $\shapleymed$ on the users. 
For that, we shall assume that the utility of a user from an item is his satisfaction level from that item. Namely, when item $l$ is displayed to $u_i$, his utility is $\sigma_i(l)$. As a result, the expected utility of the users under the strategy profile $\bl X$ and a mediator $\mM$ is defined by
\begin{align*}
U_{\mM}(\bl X)=& \sum_{i=1}^{n}\sum_{j=1}^N  \pr \left( \mM(\bl X,u_i)= j \right) \sigma_i(X_j) + \sum_{i=1}^{n}  \pr \left( \mM(\bl X,u_i)=\emptyset \right) \sigma_i(\emptyset). 
\end{align*}
Note that the first term results from the displayed items, and the second term from the plain content of the app (displaying no item at all). 
To quantify the inefficiency of user utility due to selfish behavior of the players under $\mM$, we define the \textit{User Price of Anarchy},
\[
UPoA_{\mM}=
\frac{
\max_{\mM', \bl X \in \Pi_{j=1}^N \mL_j} U_{\mM'}(\bl X)
}
{
\min_{\bl X \in E_{\mM}} U_{\mM}(\bl X)
} .
\]
The $UPoA$ serves as our benchmark for inefficiency of user utility. The nominator is the best possible case: the user utility under any mediator $\mM'$ and any strategy profile $\bl X$. The denominator is the worst user utility under $\mM$, where $E_{\mM}$ is again the set of PNE profiles induced by $\mM$. Note that the nominator is independent of $\mM$. 
We first treat users as having zero satisfaction when only the plain content is displayed, i.e., $\sigma_i(\emptyset)=0$, and consider the complementary case afterwards.
The following is a negative result for the Shapley mediator.
\begin{proposition}
\label{prop:shapleyuserutility}
The User PoA of the Shapley mediator, $UPoA_{\sm}$, is unbounded.
\end{proposition}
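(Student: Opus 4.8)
The plan is to prove unboundedness by exhibiting a one-parameter family of recommendation games $\{G_a\}_{a\in(0,1]}$ on which $UPoA_{\sm}=1/a$, which diverges as $a\to 0^{+}$. The construction can be made completely degenerate, which is exactly what makes the argument clean: take a single user $u_1$ and a single player whose only item $l$ satisfies $\sigma_1(l)=a$ (equivalently, $N$ players each holding one item of satisfaction $a$; both variants work). Since each player's strategy set is a singleton, there is a unique strategy profile $\bl X=(l,\dots,l)$, which is trivially a PNE; consistent with Theorem \ref{existance-pot-thm}, we then have $E_{\sm}=\{\bl X\}$, so the denominator of $UPoA_{\sm}$ is just $U_{\sm}(\bl X)$ and there is no need to search over several equilibria.

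Next I would compute the two sides and take their ratio. For the denominator, the point is that $\sm$ \emph{under-displays}: since it satisfies $\aeff$, the total display probability to $u_1$ equals $\sigma_1(\bl X)=a$ (directly, Theorem \ref{theorem:shapleyoneuseronly} gives $\pr(\sm(\bl X,u_1)=1)=\tfrac{a-0}{1}=a$ for $N=1$, and by symmetry $a/N$ each in general). Using $\sigma_1(\emptyset)=0$, the plain-content term in $U_{\sm}$ vanishes, so $U_{\sm}(\bl X)=\sum_{j}\pr(\sm(\bl X,u_1)=j)\,\sigma_1(l)=a\cdot a=a^{2}$. For the numerator, the crucial observation is that it ranges over \emph{arbitrary} mediators $\mM'$ with no fairness or completeness constraint: the mediator that deterministically displays the item achieves $U_{\mM'}(\bl X)=1\cdot\sigma_1(l)=a$, and no mediator can beat it because $U_{\mM'}(\bl X)=\pr(\mM'(\bl X,u_1)=1)\,a\le a$. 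Hence $UPoA_{\sm}=a/a^{2}=1/a$.

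Finally, given any target $B>0$, choosing $a\le 1/B$ yields $UPoA_{\sm}\ge B$, which establishes unboundedness. The main obstacle here is conceptual rather than technical: one must resist over-engineering a genuinely strategic counterexample in which the social optimum and the equilibrium are attained at different profiles. The blow-up is not caused by strategic misalignment at all, but by the $\aeff$ property, which pins $\sum_j\pr(\sm(\bl X,u_i)=j)=\sigma_i(\bl X)$ and therefore makes $\sm$ almost never display anything when satisfaction levels are small. The two facts that must be checked with care are precisely (i) that the numerator is optimized over all mediators, so that deterministic display is admissible and yields a value linear in $a$, and (ii) that the denominator is locked to $\sm$, whose efficient but low display probability makes the displayed utility quadratic in $a$; the $1/a$ gap between these is the entire content of the proposition.
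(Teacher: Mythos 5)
Your proposal is correct and is essentially the paper's own proof: the paper likewise takes a single user and a single player with one item of satisfaction $\epsilon$, computes $U_{\sm}=\epsilon^2$ against a deterministic-display benchmark of $\epsilon$ (there instantiated as $\topm$), and concludes $UPoA_{\sm}\geq 1/\epsilon$. Your additional observations — that the unique profile is trivially the unique PNE, and that deterministic display is optimal among all mediators — only make explicit what the paper leaves implicit.
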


Proposition \ref{prop:shapleyuserutility} questions the applicability of the Shapley mediator. An unavoidable consequence of its use is a potentially destructive effect on user utility. While content-provider fairness is essential, users are the driving force of the RS. Therefore, one may advocate for other mediators that perform better with respect to user utility, albeit not necessarily satisfying $\apne$. If $\apne$ is discarded and a mediator satisfying $\aalways$ adopted, would this result in better user utility? Unfortunately, other mediators may lead to a similar decrease in user utility due to strategic behavior of the players, so there appears to be no better solution in this regard.

\begin{proposition}
\label{prop:topupoa}
The User PoA of $\topm$, $UPoA_{\topm}$,  is unbounded.
\end{proposition}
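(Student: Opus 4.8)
The plan is to first reduce $UPoA_{\topm}$ to a purely combinatorial quantity, and then exhibit a one-parameter family of games on which it blows up. With $\sigma_i(\emptyset)=0$, observe that for \emph{any} mediator $\mM'$ and profile $\bl X$ the per-user display probabilities sum to at most $1$, so $U_{\mM'}(\bl X)=\sum_{i}\sum_{j}\pr\left(\mM'(\bl X,u_i)=j\right)\sigma_i(X_j)\le \sum_i \max_j \sigma_i(X_j)=\sum_i\sigma_i(\bl X)$, with equality attained by $\topm$, which always displays a top item. Hence $U_{\topm}(\bl X)=\sum_i\sigma_i(\bl X)$, the numerator of $UPoA_{\topm}$ is exactly $\max_{\bl X}\sum_i\sigma_i(\bl X)$, and consequently
\[
UPoA_{\topm}=\frac{\max_{\bl X}\sum_i\sigma_i(\bl X)}{\min_{\bl X\in E_{\topm}}\sum_i\sigma_i(\bl X)}.
\]
The task thus becomes constructing games in which the best profile has large total top-satisfaction while every $\topm$-equilibrium has vanishingly small total top-satisfaction.

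For the construction I would exploit exactly the tension already visible in Example~\ref{example:motive}: under $\topm$ a player is paid by the \emph{number} of users she tops, not by how well she satisfies them. Fix a small $\epsilon\in(0,1)$ and take a single strategic player (a passive second player topping nobody may be added if one prefers a competitive reading) with two users $u_1,u_2$, where $\loc_1=\{a,b\}$ and $\sigma_1(a)=\sigma_2(a)=\epsilon$, $\sigma_1(b)=1$, $\sigma_2(b)=0$. Under $\topm$, item $a$ is displayed to both users (payoff $2$) whereas item $b$ is displayed only to $u_1$ (payoff $1$, since $\sigma_2(b)=0$ triggers \textbf{Null Player}). The player's unique best response is therefore $a$, so $E_{\topm}=\{a\}$ and the equilibrium user utility is $\sum_i\sigma_i(a)=2\epsilon$. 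The high-quality profile $b$, by contrast, yields $\sum_i\sigma_i(b)=1$ but is \emph{not} an equilibrium, and so never enters the denominator.

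Plugging into the reduced formula gives numerator $\max(2\epsilon,1)=1$ and denominator $2\epsilon$, whence $UPoA_{\topm}=\tfrac{1}{2\epsilon}\to\infty$ as $\epsilon\to 0$, proving unboundedness. The step I expect to require the most care is the denominator: I must guarantee that the low-satisfaction profile is genuinely the (unique, or at least the worst) $\topm$-PNE while the high-satisfaction profile is excluded from $E_{\topm}$ --- this separation is precisely what drives the ratio to infinity. Verifying equilibrium existence is painless here, because with a single strategic player a best response always exists over the finite strategy set, which conveniently sidesteps the fact that $\topm$ need not admit a PNE in general (the very failure of $\apne$ recorded in Table~\ref{table:mediators}). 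The argument mirrors the negative result for the Shapley mediator in Proposition~\ref{prop:shapleyuserutility}, reinforcing the paper's point that discarding $\apne$ in favour of a \aalways{} mediator buys no protection for user utility.
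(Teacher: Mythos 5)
Your proof is correct and follows essentially the same route as the paper's: a one-parameter family of games in which the unique $\topm$-equilibrium selects a broadly-mediocre item (equilibrium user utility $2\epsilon$) while a narrowly-excellent item yields utility $1$, so the ratio blows up as $\epsilon\to 0$. The paper's construction uses two players ($\loc_1=\{l_1,l_2\}$, $\loc_2=\{l_3\}$ with satisfaction rows $1/0$, $\delta/\delta$, $\epsilon/\epsilon$), where competition against the opponent's $\epsilon$-item plays the role that your \textbf{Null Player} invocation plays in the one-player game, but the driving tension --- that $\topm$ rewards the \emph{number} of users topped rather than their satisfaction --- is identical, and your preliminary reduction of the numerator to $\max_{\bl X}\sum_i\sigma_i(\bl X)$ is a harmless convenience the paper simply skips.
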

Using similar arguments, one can show that $UPoA_{\probm}$ is unbounded as well.

In many situations, it is reasonable to assume that when no item is displayed to a user, his utility is 1. Namely, $\sigma_i(\emptyset)=1$ for every user $u_i$.
Indeed, this seems aligned with the ads-in-apps model: the user is interrupted when an advertisement is displayed. We refer to this scenario as the \textit{optimal plain content} case. From here on, we adopt this perspective for upper-bounding the $UPoA$. Observe that user utility is therefore maximized when no item is displayed whatsoever. Nevertheless, displaying no item will also result in zero payoff for the players. Here too, $UPoA_{\topm}$ is unbounded, while $UPoA_{\none}=1$. The following lemma bounds the User $PoA$ of the Shapley mediator.
\begin{lemma}
\label{lemma:usrutil}
In the optimal plain content case, it holds that $UPoA_{\sm} \leq 4$.
\end{lemma}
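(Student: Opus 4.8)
The plan is to bound the numerator and the denominator of $UPoA_{\sm}$ separately; the argument in fact yields the sharper constant $2$, which a fortiori gives the stated bound $4$. First I would compute the numerator. In the optimal plain content case $\sigma_i(\emptyset)=1$, and for any mediator $\mM'$ and any profile $\bl X$ the quantity $U_{\mM'}(\bl X)$ is, for each user $u_i$ separately, a convex combination of the values $\{\sigma_i(X_j)\}_{j}$ and $\sigma_i(\emptyset)=1$ (the display probabilities together with the no-display probability sum to $1$). Since all satisfaction values lie in $[0,1]$, each user contributes at most $1$, so $\max_{\mM',\bl X}U_{\mM'}(\bl X)=n$, attained by $\none$. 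Hence it suffices to lower bound $U_{\sm}(\bl X)$ for every PNE $\bl X$; by Theorem \ref{existance-pot-thm} such a profile exists, so the denominator is a minimum over a nonempty set.

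Next I would lower bound $U_{\sm}(\bl X)$ user by user. Fix $u_i$ and set $s=\sigma_i(\bl X)=\max_j\sigma_i(X_j)$. Using the sampling description of $\algname$ in Algorithm \ref{alg:sm}: drawing $Y$ uniformly from $(0,1)$, with probability $1-s$ we have $Y>s$ and nothing is displayed, contributing utility $\sigma_i(\emptyset)=1$; otherwise an item is drawn uniformly from $\{X_j:\sigma_i(X_j)\ge Y\}$. This gives the threshold representation
\[
U_i=(1-s)\cdot\sigma_i(\emptyset)+\int_0^{s} g(y)\,dy,\qquad g(y)=\frac{1}{\abs{A(y)}}\sum_{j\in A(y)}\sigma_i(X_j),
\]
where $A(y)=\{j\in[N]:\sigma_i(X_j)\ge y\}$. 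Since every term averaged in $g(y)$ is at least $y$, we have $g(y)\ge y$, and therefore $U_i\ge(1-s)+\int_0^s y\,dy=1-s+\tfrac{s^2}{2}$. The map $s\mapsto 1-s+\tfrac{s^2}{2}$ is decreasing on $[0,1]$ and equals $\tfrac12$ at $s=1$, so $U_i\ge\tfrac12$ for every user.

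Summing over users yields $U_{\sm}(\bl X)\ge n/2$ for every profile $\bl X$, in particular for every PNE, whence $\min_{\bl X\in E_{\sm}}U_{\sm}(\bl X)\ge n/2$ and $UPoA_{\sm}=n\big/\min_{\bl X\in E_{\sm}}U_{\sm}(\bl X)\le 2\le 4$. Note that the PNE hypothesis is used only to guarantee that the denominator ranges over a nonempty set; the per-user estimate holds for all profiles, so $\apne$ plays no further role here. The only nontrivial step is the integral (threshold) representation of $U_i$ together with the pointwise bound $g(y)\ge y$: the representation can be read off either from Algorithm \ref{alg:sm} or, equivalently, from the closed form of Theorem \ref{theorem:shapleyoneuseronly}, and the conceptual content is that the high no-display utility ($=1$) compensates in the low-satisfaction regime while the high-satisfaction regime forces a high displayed-item utility. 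Once this representation is in place, the remainder is a routine one-variable calculus estimate.
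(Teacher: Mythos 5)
Your proof is correct, and it takes a genuinely different route from the paper's -- one that in fact yields the sharper constant $2$. The paper's proof works directly with the closed-form sum of Theorem \ref{theorem:shapleyoneuseronly}: for each user it introduces the cutoff index $j^*$ (the first index whose satisfaction exceeds $\sigma_i^N(\bl X)/2$), discards all terms with $j<j^*$, lower-bounds the surviving coefficients by $\sigma_i^N(\bl X)/2$ and the surviving display mass by $\sigma_i^N(\bl X)-\sigma_i^{j^*-1}(\bl X)\geq \sigma_i^N(\bl X)/2$, arriving at the per-user bound $\tfrac{(\sigma_i^N(\bl X))^2}{4}-\sigma_i^N(\bl X)+1\geq \tfrac14$ and hence $UPoA_{\sm}\leq 4$. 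You instead exploit the threshold representation behind Algorithm \ref{alg:sm}: conditioning on the uniform threshold $Y=y$, the displayed item is drawn from $A(y)=\{j:\sigma_i(X_j)\geq y\}$, so its satisfaction is at least $y$ pointwise, giving the per-user bound $1-s+\tfrac{s^2}{2}\geq\tfrac12$. The paper's truncation at half the maximum is lossy precisely where your pointwise bound $g(y)\geq y$ is not; in discrete terms your argument amounts to swapping the order of summation in Equation (\ref{eq:shpleyoneuser}) and bounding $\sum_{j\geq m}\sigma_i^j(\bl X)/(N-m+1)\geq \sigma_i^m(\bl X)$, so it can also be carried out without invoking the correctness of Algorithm \ref{alg:sm} if one prefers to rely only on the closed form. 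Your treatment of the numerator (each user's utility is a convex combination of values in $[0,1]$, so the maximum $n$ is attained by $\none$) and of the denominator's nonemptiness via Theorem \ref{existance-pot-thm} matches what the paper does implicitly. What each approach buys: the paper's is a blunt but self-contained sum manipulation; yours is cleaner, gives $UPoA_{\sm}\leq 2$, and sits consistently between the paper's analytic bound of $4$ and the numerically computed value $\approx 1.761$ reported in the appendix.
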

In fact, numerical calculations show that $UPoA_{\sm}$ is bounded by $1.76$, see the appendix for further discussion.

\section{Discussion}
\label{sec:disc}

Our results are readily extendable in the following important direction (which is even further elaborated in the appendix). In many online scenarios, content providers typically customize the items they offer to accommodate specific individuals. Indeed, personalization is applied in a variety of fields in order to improve user satisfaction. Specifically, consider the case where each player may promote a set of items, where different items may be targeted towards different users, and the size of this set is determined exogenously (e.g., by her budget). In this case, a player selects a set of items which she then provides to the mediator. Here the Shapley mediator satisfies $\fprop$ and $\apne$; the game induced by the Shapley mediator is still a potential game, and the computation of the Shapley mediator still takes linear time.


{\ifnum\Includeackerc=1{
\section*{Acknowledgments}\label{sec:Acknowledgments}
This project has received funding from the European Research Council (ERC) under the European Union's Horizon 2020 research and innovation programme (grant agreement n$\degree$  740435).}\fi}



{\ifnum\Includeappendix=1{ 

\appendix

\section{Omitted proofs}
\subsection{Proof of Theorem \ref{thm:imptwoplayers}}
\begin{proof}

We first construct a game with two players, three users and three strategies, and show that no mediator can satisfy $\fprop,\apne$ and $\aalways$. Afterwards, we explain how our technique can be used to show that for any arbitrary game there exists a slight modification of this game that does not possess a PNE.

Consider the following satisfaction matrix:
\[
 \kbordermatrix{
        & u_1 & u_2 & u_3  \\
    l_1 & 0   & y & x  \\
    l_2 & x   & 0 & y  \\
    l_3 & y   & x & 0 
  },
\]
where $(x,y)\in (0,1]^2$. Let $\loc_1=\loc_2=\{l_1,l_2,l_3\}$ (i.e., a symmetric two-player game). This recommendation game induces a normal form game. Since $\mM$ satisfies \textbf{User-Independence}, it sets the display probability as a function of the satisfaction levels only (i.e., regardless of the identity of the user). Therefore, to facilitate writing the proof, we denote 
\[
\mM_1(\sigma_i(l_1),\sigma_i(l_2))\triangleq\pr \left( \mM((l_1,l_2),u_i)= 1 \right).
\]
Namely, $\mM_1(\sigma_i(l_1),\sigma_i(l_2))$ is the probability that player 1's item will be displayed to $u_i$ under the strategy profile $(l_1,l_2)$. Since $\mM$ satisfies $\aalways$ and \textbf{Symmetry}, 
\[
\mM_1(\sigma_i(l_1),\sigma_i(l_2)) = 1- \mM_2(\sigma_i(l_1),\sigma_i(l_2))= \mM_2(\sigma_i(l_2),\sigma_i(l_1)).
\]
Since $\mM$ satisfies both $\aalways$ and \textbf{Null Player}, one has to define what happens if the satisfaction vector is the zero vector (see Footnote \ref{foot:excluding}). Denote $\mM_1(0,0)=\alpha$, and due to \textbf{Symmetry} $\mM_2(0,0)=\alpha$ also (notice that $\alpha \leq 0.5$). The following matrix describes the payoff of player 1 under any possible strategy profile:
\[
\small
 \kbordermatrix{
        & l_1 & l_2 & l_3  \\
    l_1 & \alpha+\mM_1(x,x)+\mM_1(y,y)  & \mM_1(0,x)+\mM_1(y,0)+\mM_1(x,y) & \mM_1(0,y)+\mM_1(y,x)+\mM_1(x,0)  \\
    l_2 & \mM_1(x,0)+\mM_1(0,y)+\mM_1(y,x)         & \alpha+\mM_1(x,x)+\mM_1(y,y) & \mM_1(x,y)+\mM_1(0,x)+\mM_1(y,0)  \\
    l_3 & \mM_1(y,0)+\mM_1(x,y)+\mM_1(0,x)         & \mM_1(y,x)+\mM_1(x,0)+\mM_1(0,y)  &\alpha+\mM_1(x,x)+\mM_1(y,y)
  },
\]
Denote $\beta = \mM_1(0,x)+\mM_1(y,0)+\mM_1(x,y)$. Due to \textbf{Symmetry}, the game is described by the following bi-matrix:
\[
 \kbordermatrix{
        & l_1 & l_2 & l_3  \\
    l_1 & 1+\alpha, 1+\alpha   & \beta,3-\beta & 3-\beta, \beta  \\
    l_2 & 3-\beta, \beta       & 1+\alpha, 1+\alpha & \beta,3-\beta  \\
    l_3 & \beta,3-\beta      & 3-\beta, \beta  & 1+\alpha, 1+\alpha
  }.
\]
Clearly, since $\alpha\leq 0.5$, this game possesses a PNE only if $\beta = 1.5$. Otherwise, under any strategy profile there exists a player with a beneficial deviation. Hence,
\[
\beta = \mM_1(0,x)+\mM_1(y,0)+\mM_1(x,y)=1.5.
\]
Due to \textbf{Null Player}, we have $\mM_1(0,x)=0,\mM_1(y,0)=1$; therefore $\mM_1(x,y)=0.5$ for every $(x,y)\in (0,1]^2$, and \textbf{Leader Monotonicity} does not hold. This sums up the proof for the given two-player game.

Next, consider any arbitrary game. If the game does not contain a PNE, then we are done. Otherwise, let $\bl X = (X_1,\dots X_n)$ be a PNE. By adding three additional users and copying each strategy in $\bl X$ and extending it to the three new users, we can reproduce the behavior in the two-player game above. By doing so to every equilibrium profile, we are guaranteed that an equilibrium cannot exist.

\end{proof}

\subsection{Proof of Proposition \ref{prop:shapleyisfair}}
\begin{proof} 
Denote the Shapley value of player $j$ in a cooperative game $(v,[N])$ by
\begin{equation}
\label{eq:defshapleyval}
\phi_j(v)=\frac{1}{N!} \sum_{R \in \Pi([N])}{\left(\chara(P_j^R\cup \{j \})-\chara(P_j^R) \right)}.
\end{equation}
It is well known (see, e.g., \cite{dubey1975uniqueness,shapley1952value}) that the Shapley value satisfies the following properties:
\begin{enumerate}
\item[C1.] Null Player: If $v(\coal \cup \{j\}) = v(\coal)$ for every coalition $\coal \subseteq [N]$, then $\phi_j(v)=0$.
\item[C2.] Symmetry: If $v(\coal \cup \{j\}) = v(\coal \cup \{m\})$ for every coalition $\coal \subseteq [N]\setminus \{j,m\}$, then $\phi_j(v)=\phi_m(v)$.
\item[C3.]Linearity: If $(v,[N])$ and $(w,[N])$ are two cooperative games and $\lambda$ is a real number, it holds that \[
\phi_j(v+\lambda w)=\phi_j(v)+\lambda \phi_j(w).
\]
\item[C4.]Efficiency: $\sum_{j\in [N]} \phi_j(v)=v([N])$.
\end{enumerate}
Note that properties C1--C4 are akin to $\fprop$ and $\aeff$ defined in Sections \ref{sec:model} and \ref{sec:uniqueness} respectively. However,  C1--C4 are properties of the Shapley value in \textit{cooperative game} $(v,[N])$, while $\fprop, \aeff$ refer to recommendation games. Hence, we ought to show 
that the Shapley mediator, defined as the Shapley value in the cooperative game $v_i(\cdot;\bl X)$, satisfies $\fprop$ and $\aeff$.
Denote by $\phi_j\left(v_i(\cdot;\bl X)\right)$ the Shapley value of player $j$ in $\left(v_i(\cdot;\bl X),[N]\right)$, where $v_i(\coal;\bl X)=\sigma_i(\bl X_{\coal})$. Recall that by definition of the Shapley mediator,
\begin{equation}
\label{eq:shapleymedandval}
\phi_j\left(v_i(\cdot;\bl X)\right) = \pr \left( \algname(\bl X,u_i)=j\right).
\end{equation}

\begin{itemize}[leftmargin=0cm,itemindent=.0cm,labelwidth=\itemindent,labelsep=0cm,align=left]
\item[]\textbf{Null Player}.
If $\sigma_i(X_j)=0$,
\[v_i(\coal \cup \{j\};\bl X) = \max_{j'\in \coal \cup \{j\}}\sigma_i(X_{j'})=\max_{j'\in \coal}\sigma_i(X_{j'})= v_i(\coal ;\bl X).
\]
Therefore, \[
\pr \left( \algname(\bl X,u_i)=j\right)\stackrel{\text{Eq (\ref{eq:shapleymedandval})}}{=}  \phi_j\left(v_i(\cdot;\bl X)\right)  \stackrel{\text{C1}}{=} 0.
\]
\item[]\textbf{Symmetry}. If $\sigma_i(X_j)=\sigma_i(X_m)$, for every $\coal \subseteq [N]\setminus \{j,m\}$ it holds that
\begin{align}
v_i(\coal \cup \{j\};\bl X) = \max_{j'\in \coal \cup \{j\}}\sigma_i(X_{j'})&=
\begin{cases}
\sigma_i(X_{j}) & \sigma_i(X_{j})\geq \max_{j'\in \coal}\sigma_i(X_{j'}) \\
\max_{j'\in \coal}\sigma_i(X_{j'}) & \text{otherwise}
\end{cases} \nonumber \\
&=\begin{cases}
\sigma_i(X_{m}) & \sigma_i(X_{m})\geq \max_{j'\in \coal}\sigma_i(X_{j'}) \\
\max_{j'\in \coal}\sigma_i(X_{j'}) & \text{otherwise}
\end{cases} \nonumber \\
&= \max_{j'\in \coal \cup \{m\}}\sigma_i(X_{j'})=v_i(\coal \cup \{m\} ;\bl X).
\end{align}
Therefore,
\[
\pr \left( \algname(\bl X,u_i)=j\right)=\phi_j\left(v_i(\cdot;\bl X)\right)\stackrel{\text{C2}}{=}\phi_m\left(v_i(\cdot;\bl X)\right)=\pr \left( \algname(\bl X,u_i)=m\right).
\]
\item[]\textbf{User-Independence}. Notice that $\phi_j\left(v_i(\cdot;\bl X)\right)$ is solely determined by $\left(\sigma_i(X_1,)\dots,\sigma_i(X_N)\right)$. Therefore
\[
\pr \left( \mM(\bl X,u_i)=j \right)=\phi_j\left(v_i(\cdot;\bl X)\right) = \pr \left(  \mM(\bl X,u_i)= j  \mid u_{i'} \in \U\right).
\]
\item[]\textbf{Leader Monotonicity}.
Let $ j\in \argmax_{j'\in[N]} \sigma_i(X_{j'})$ and $m \notin \argmax_{j'\in[N]} \sigma_i(X_{j'})$, and hence $\sigma_i(X_j) > \sigma_i(X_m)$. Given a permutation $R$ over the elements of $[N]$, define $R(j\leftrightarrow m)$ to be the same permutation vector where $j$ and $m$ are swapped. Notice that if $j$ precedes $m$ in $R$, then $\chara_i(P_j^R)=\chara_i\left(P_m^{R(j\leftrightarrow m)}\right)$. Thus,
\begin{equation}
\label{eq:shapleypermuone}
\chara_i(P_j^R\cup \{j \};\bl X)-\chara_i(P_j^R;\bl X) \geq \chara_i(P_m^{R(j\leftrightarrow m)}\cup \{m \};\bl X)-\chara_i\left(P_m^{R(j\leftrightarrow m)};\bl X\right).
\end{equation}
Alternatively, if $m$ precedes $j$ in $R$ (and therefore $j$ precedes $m$ in $R(j\leftrightarrow m)$) we have 
\[
\chara_i(P_m^{R(j\leftrightarrow m)}\cup \{m \};\bl X)=\chara_i\left(P_m^{R(j\leftrightarrow m)};\bl X\right),
\]
\begin{equation}
\label{eq:shapleypermutwo}
\chara_i(P_j^R\cup \{j\};\bl X)-\chara_i(P_j^R;\bl X)\geq \chara_i(P_m^{R(j\leftrightarrow m)}\cup \{m \};\bl X)-\chara_i\left(P_m^{R(j\leftrightarrow m)};\bl X\right)=0.
\end{equation}
In addition, for $R^*$ in which $j$ appears in the first entry, it holds that 
\begin{align}
\label{eq:shapleypermuthree}
\sigma_i(X_j)&=\chara_i(P_j^{R^*}\cup \{j \} ;\bl X)-\chara_i(P_j^{R^*};\bl X) > \chara_i(P_m^{R^*(j\leftrightarrow m)}\cup \{m \};\bl X)-\chara_i\left(P_m^{R^*(j\leftrightarrow m)};\bl X\right) \nonumber\\
&=\sigma_i(X_m).
\end{align}
Finally, due to Equations (\ref{eq:shapleypermuone}),(\ref{eq:shapleypermutwo}) and (\ref{eq:shapleypermuthree}), by summing over all permutations and dividing by $\frac{1}{N!}$ we get
\begin{align}
\phi_j(v_i(\cdot ; \bl X))&\stackrel{\text{Def.}}{=}\frac{1}{N!}\sum_{R \in \Pi([N])}{\left(\chara_i(P_j^R\cup \{j \} ; \bl X))-\chara_i(P_j^R ; \bl X)) \right)} \nonumber \\
&> \frac{1}{N!}\sum_{R(j\leftrightarrow m) \in \Pi([N])}{\left(\chara_i(P_j^{R(j\leftrightarrow m)}\cup \{j \} ; \bl X))-\chara_i(P_j^{R(j\leftrightarrow m)} ; \bl X)) \right)} \nonumber \\
&=\frac{1}{N!}\sum_{R \in \Pi([N])}{\left(\chara_i(P_m^R\cup \{m \} ; \bl X))-\chara_i(P_m^R ; \bl X)) \right)} \nonumber \\
&= \phi_m(v_i(\cdot ; \bl X)),
\end{align}
hence $\pr \left(  \mM(\bl X,u_i)= j \right)>\pr \left( \mM(\bl X,u_i)= m \right)$.
\item[]$\aeff$. We have
\[
\sum_{j \in [N]} \pr \left(  \mM(\bl X,u_i)=j \right)= \sum_{j\in [N]}\phi_j\left(v_i(\cdot;\bl X)\right)\stackrel{\text{C4}}{=}v_i([N];\bl X)\stackrel{\text{Def.}}=\sigma_i(\bl X).
\]
\end{itemize}
This concludes the proof of the proposition.
\end{proof}
\subsection{Proof of Theorem \ref{existance-pot-thm}}
\label{apdx:proof-existence}
\begin{proof}
We prove Theorem \ref{existance-pot-thm} by showing that recommendation games with the Shapley mediator (denoted RGSM for brevity) belong to the class of congestion games \cite{rosenthal1973class}. Due to \cite{monderer1996potential}, this implies that RGSM are potential games.
A congestion game is a non-cooperative game, defined by players and resources, where the payoff of each player depends solely on the resources she chooses and on the number of players that chose each of the corresponding resources. Formally, a congestion game is a tuple $\big([N] , \Rr, (S_j)_{j \in [N]},(w_r)_{r\in \Rr} \big)$ where:
\begin{itemize}
\item $[N]$ is the set of players.
\item $\Rr$ is the set of resources.
\item $S_j$ denotes the set of possible strategies of player $j$, where any $s_j \in S_j$ is a subset of all resources.
\item The number of players who select resource $r$ under the strategy profile $\bl s=(s_1,\dots s_N)$ is given by $k_r(\bl s)=\abs{ \{j: r \in s_j \} }$.
\item $w_r$ is a utility function, $w_r:\mathbb{N}\rightarrow \mathbb{R}_+$, such that $w_r(k)$ is given to any player whose strategy contains resource $r$, in case exactly $k$ players chose $r$.
\item The payoff of player $j$ under the strategy profile $\bl s$ is given by $\sum_{r\in s_j}w_r(k_r(\bl s))$.
\end{itemize}

Given an RGSM game instance, we construct a corresponding congestion game, and show that the payoffs of the players under any strategy profile is exactly the same in both games.
Importantly, the complexity of the below reduction is irrelevant, and the reduction is presented only to assure the existence of a PNE in our analyzed class of games.

Denote
\[
E=\left\{\sigma_i(l):i\in[n],l\in \loc\right\}\cup\{0,1\},
\]

and observe that $B\triangleq\abs{E}-1 \leq n  \abs{\loc} -1<\infty$. Let $\epsilon_0,\epsilon_1,\dots \epsilon_{B}$ be the ordered elements of $E$ in ascending order. 
Next, we aim to represent a strategy profile as a selection of corresponding resources by the players.
Denote $\Rr = \{r^i_m:m=1\ldots B,i=1\ldots n \}$, where for each user $u_i$ resource $r_m^i$ is associated with the interval $\left[\epsilon_{m-1},\epsilon_m \right]$.
The strategy of selecting item $l\in \loc$ is modeled as selecting all resources associated with intervals that are subsets of $[0,\sigma_i(l)]$, namely
\[
\mA(l) = \left\{ r^i_m:\sigma_i(l)\geq \epsilon_m, m\in[B],i\in [n]   \right\}.
\]
Thus, there is an induced one-to-one function from the set of items to the power set of resources, $\A: \loc \rightarrow 2^\Rr$. Mapping between items and resources, we define the set of possible strategies of player $j$:
\[S_j=\left\{ \A(l):l \in \loc_j \right\}. \]
The load of each resource (the number of players that select this resource) under the strategy profile $\bl X=(X_1,\dots ,X_N )$ is:
\[k^i_m(\A(\bl X)) = \abs{\{ j : r^i_m \in \A(X_j) \}},\]
where $\A(\bl X)=\left(\A(X_1),\dots ,\A(X_N) \right)$. The utility of selecting resource $r_m^i$ depends only on the number of players who select it, and is defined as follows:
\begin{equation*}
      w^i_m(x) = \begin{cases}
					     0	            & \text{if } x=0\\
                         \ldots 		& \ldots \\
                         \frac{\epsilon_{m}-\epsilon_{m-1}}{k}	& \text{if } x=k\\
                         \ldots 		& \ldots \\
                         \frac{\epsilon_{m}-\epsilon_{m-1}}{N} & \text{if } x=N\\
                 \end{cases}                  .
\end{equation*}
Thus the congestion game $\big([N] , \Rr, (S_j)_{j \in [N]},(w_r)_{r\in \Rr} \big)$ is properly defined. The remaining ingredient of the reduction is:
\begin{lemma}
\label{lemma:congpayoff}
The sum of utilities of each player $j$ in the congestion game under the strategy profile $\A(\bl X)$ is exactly her payoff in the RGSM under the strategy profile $\bl X$:
\begin{equation*}
\label{eq:claim1}
\pi_j(\bl X) = \sum_{\substack{i,m:\\r^i_m \in \A(X_j)}}{w^i_m\big(k^i_m(\A(\bl X))\big)} .
\end{equation*}
\end{lemma}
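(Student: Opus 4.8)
The plan is to verify the identity in Lemma~\ref{lemma:congpayoff} by computing both sides explicitly and showing they telescope to the same quantity, relying on the closed-form expression for the Shapley mediator from Theorem~\ref{theorem:shapleyoneuseronly}. First I would fix a player $j$ and a user $u_i$, and restrict attention to the contribution of the resources $r^i_m$ to the right-hand side, so that the total sum splits over users as $\sum_{i=1}^n \sum_{m : r^i_m \in \A(X_j)} w^i_m\big(k^i_m(\A(\bl X))\big)$. By \textbf{User-Independence} and the definition $\pi_j(\bl X)=\sum_{i=1}^n \pr\left(\algname(\bl X,u_i)=j\right)$, it suffices to prove the per-user identity
\[
\pr\left(\algname(\bl X,u_i)=j\right) = \sum_{m : r^i_m \in \A(X_j)} w^i_m\big(k^i_m(\A(\bl X))\big),
\]
and then sum over $i$.

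Next I would unpack both sides of this per-user identity. For the right-hand side, recall that $r^i_m \in \A(X_j)$ exactly when $\sigma_i(X_j) \geq \epsilon_m$, so the sum ranges over all intervals $[\epsilon_{m-1},\epsilon_m]$ lying below $\sigma_i(X_j)$. For such an $m$, the load $k^i_m(\A(\bl X))$ counts the players $j'$ with $\sigma_i(X_{j'}) \geq \epsilon_m$, and the utility contributed is $\frac{\epsilon_m-\epsilon_{m-1}}{k^i_m(\A(\bl X))}$. For the left-hand side, Theorem~\ref{theorem:shapleyoneuseronly} gives a telescoping sum over the sorted satisfaction values $\sigma_i^1(\bl X)\leq \dots \leq \sigma_i^N(\bl X)$, where the term for each rank $m$ is $\frac{\sigma_i^m(\bl X)-\sigma_i^{m-1}(\bl X)}{N-m+1}$, and the denominator $N-m+1$ is exactly the number of players whose satisfaction is at least $\sigma_i^m(\bl X)$.

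The crux is to match these two decompositions, and the key observation is that both expressions integrate the same step function. Intuitively, both sides equal $\int_0^{\sigma_i(X_j)} \frac{d t}{(\text{number of players } j' \text{ with } \sigma_i(X_{j'}) \geq t)}$: the congestion side partitions the interval $[0,\sigma_i(X_j)]$ by the grid points $\{\epsilon_m\}$, which are fine enough to contain every distinct satisfaction value, while the Shapley side partitions it by the distinct sorted satisfaction values themselves. On any subinterval between two consecutive distinct satisfaction levels, the count of players with $\sigma_i(X_{j'}) \geq t$ is constant, so the two partitions assign the same total weight to that subinterval. I would make this precise by showing that for each grid interval $[\epsilon_{m-1},\epsilon_m]$ contained in $[0,\sigma_i(X_j)]$, the denominator $k^i_m(\A(\bl X))$ agrees with the Shapley denominator $N-m'+1$ for whatever sorted rank $m'$ that interval falls under, and then regrouping the finer $\epsilon$-grid sum into the coarser sorted-value sum via telescoping.

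The main obstacle I anticipate is purely bookkeeping: carefully handling duplicate satisfaction values (which Theorem~\ref{theorem:shapleyoneuseronly} explicitly preserves in the sorted order) and confirming that the $\epsilon$-grid refines the multiset of sorted values so that the load $k^i_m$ on each grid interval is indeed the number of players at or above its upper endpoint. Once the bijection between "intervals below $\sigma_i(X_j)$ with a given player count" on both sides is established, the identity follows by collecting equal terms and telescoping, and summing over $i$ then yields the full payoff identity.
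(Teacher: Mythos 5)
Your proposal is correct and follows essentially the same route as the paper: reduce to a per-user identity, note that the load $k^i_m(\A(\bl X))$ on the grid interval $[\epsilon_{m-1},\epsilon_m]$ equals the number of players with satisfaction at least $\epsilon_m$, and regroup the fine $\epsilon$-grid sum into the coarser sum over sorted satisfaction values (where the load is constant, equal to $N-m+1$ for rank $m$), matching the closed form of Theorem~\ref{theorem:shapleyoneuseronly} term by term before summing over users. Your step-function/integral phrasing is just a cleaner way of stating the paper's explicit change of index via $\alpha(m)$, and your attention to duplicate satisfaction values (whose telescoping terms vanish) is the same bookkeeping the paper implicitly relies on.
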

\begin{proof}
Fix a user $u_i$. Recall that $\sigma_i^m(\bl X)$ is the $m$'th satisfaction level $u_i$ obtains from the items in $\bl X$ (in ascending order, $\sigma_i^0(\bl X)=0$), and $\rho_i^j(\bl X)$ is the number of items offered to $u_i$ with a satisfaction level less or equal to $\sigma_i(X_j)$. 

Under the strategy profile $\bl X$, define $\epsilon_{\alpha_1},\dots,\epsilon_{\alpha_N}$ to be the elements in $E$ such that $\epsilon_{\alpha_m}=\sigma_i^m(\bl X)$. Observe that for any $m\in[N]$ the number of players that selected resources associated with intervals contained in $[\epsilon_{\alpha_{m-1}},\epsilon_{\alpha_m}]$ is $N-m+1$. In addition, let $\alpha(m)$ be the index such that $\epsilon_{\alpha(m)}=\sigma_i^m(\bl X)$. 

The strategy $X_j$ of player $j$ is mapped to the set of resources $\A(X_j)$. Therefore,
\begin{align*}
\label{eq:payoff-congestion}
&\sum_{m=1}^B{w^i_m \left(k^i_m(\A(\bl X))\right)}\ind_{ {r^i_m \in \A(X_j)}} = 
\sum_{m=1}^{\alpha(j)}w^i_m \left(k^i_m(\A(\bl X))\right)\\
&= \sum_{m=1}^{\alpha(j)} \frac{\epsilon_m - \epsilon_{m-1}}{ k^i_m(\A(\bl X))} = \sum_{m=1}^{\rho_i^j(\bl X)} \frac{\epsilon_{\alpha_m} - \epsilon_{\alpha_{m-1}}}{N-m+1} \\
&=\sum_{m=1}^{\rho_i^j(\bl X)}{\frac{\sigma_i^{m}(\bl X)-\sigma_i^{m-1}( \bl X)}{N-m+1}}=\pr \left(  \algname(\bl X,u_i)=j \right).
\end{align*}
By summing over all users we get
\begin{align*}
&\sum_{i=1}^n \sum_{m=1}^B{w^i_m\big(k^i_m(\A(\bl X))\big)}\ind_{ {r^i_m \in \A(X_j)}}\\
& = \sum_{i=1}^n  \pr \left(  \algname(\bl X,u_i)=j \right) =\pi_j(\bl X).
\end{align*}
This concludes the proof of Lemma \ref{lemma:congpayoff}.
\end{proof}
Ultimately, since RGSM belong to the class of congestion games, they possess a potential function, and every better response dynamic converges \cite{monderer1996potential}. This concludes the proof of Theorem \ref{existance-pot-thm}. 
\end{proof}

In fact, using standard arguments, we can show that
\[\Phi(\bl X)=\sum_{r^i_m \in \Rr}{\sum_{j=1}^{k^i_m(\A(\bl X))}}{w^i_m(j)}\]
is the potential function of RGSM.

\subsection{Proof of Theorem \ref{theorem:shapleyoneuseronly}}
\begin{proof}
Denote $\sigma_i^j=\sigma_i(X_j)$, and w.l.o.g. let $\sigma_i^1\leq \sigma_i^2 \leq \dots \leq \sigma_i^N$. As defined earlier, the collective payoff of coalition $\coal$ is the maximum satisfaction level of $u_i$ from an item offered by a player in $\coal$ (the player with the highest index), namely
\[v_i(\coal;\bl x)=\sigma_i(\bl X_\coal)=\max_{j\in \coal}{\sigma_i^j}=\sigma_i^{\max\coal}.
 \]
For any permutation $R$ such that $\max P_j^R>j$ it follows that $v_i(P_j^R\cup \{j\};\bl X)=v_i(P_j^R;\bl X)$. Put differently, 
if in a permutation $R$ there is a player with index greater than $j$ that precedes player $j$, then player $j$'s contribution to the collective payoff of the coalition $P_j^R\cup \{j\}$ is zero. Thus, we ought to look only at permutations where $\max P_j^R< j$, and ignore the rest. 
Denote $b_j$ as the number of all such permutations. We have:
\begin{equation*}
b_j = \sum_{m=0}^{j-1}{{m!}{(N-m-1)!}{{j-1} \choose m}}.
\end{equation*}
The latter holds since in every preceding set $P_j^R$ (prefix of $R$) with $\abs{P_j^R}=m$, $j$ is located in the $(m+1)$-th entry in $R$. Thus we have ${{j-1} \choose m}$ indices to choose from (less than $j$), $m!$ ways to order them, and another $(N-m-1)!$ ways to order the suffix (the remaining $N-m-1$ elements).

For $r<j$, we denote by $a_r$ the number of permutations where $\max P_j^R =r$. Hence:
\begin{equation*}
a_r = \sum_{m=0}^{r-1}{{(m+1)!}{(N-m-2)!}{{r-1} \choose m}}.
\end{equation*} 
Again, we turn to counting arguments: for every prefix $P_j^R$ of size $m+1$, if $m+1 > r$ then there must be an index greater than $r$; therefore $m \leq r-1$. Besides $r$, there are $m$ indices in $P_j^R$,${{r-1} \choose m}$ ways to choose these indices, $(m+1)!$ ways to order the prefix, and $(N-m-2)!$ ways to order the suffix.
\begin{lemma} 
\label{lemma:auxab}
It holds that $b_r=\frac{N!}{N-r+1}, a_r=b_{r+1}-b_r .$
\end{lemma}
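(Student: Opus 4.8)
The plan is to establish the two identities separately: the closed form $b_r=\frac{N!}{N-r+1}$ by a symmetry argument, and the relation $a_r=b_{r+1}-b_r$ by a direct application of Pascal's rule to the binomial sums, which conveniently avoids ever having to compute a closed form for $a_r$.

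First I would prove $b_r=\frac{N!}{N-r+1}$ using the combinatorial meaning of $b_r$ recorded above: it counts the permutations $R$ of $[N]$ in which every element preceding player $r$ is smaller than $r$. This condition is equivalent to saying that $r$ appears earlier in $R$ than each of $r+1,\dots,N$; that is, among the $N-r+1$ elements of $\{r,r+1,\dots,N\}$, the element $r$ is the first to occur. By symmetry each of these $N-r+1$ elements is equally likely to be the earliest, so exactly a $\frac{1}{N-r+1}$ fraction of the $N!$ permutations qualify, giving $b_r=\frac{N!}{N-r+1}$. As a cross-check one could instead evaluate the defining sum directly, rewriting $m!\binom{r-1}{m}=\frac{(r-1)!}{(r-1-m)!}$ and summing, but the symmetry argument is cleaner.

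For the second identity I would work purely algebraically. Writing $b_{r+1}=\sum_{m=0}^{r}m!(N-m-1)!\binom{r}{m}$ and expanding via $\binom{r}{m}=\binom{r-1}{m}+\binom{r-1}{m-1}$ splits the sum into two pieces. The piece carrying $\binom{r-1}{m}$ reproduces $b_r$ exactly, since the extra $m=r$ summand vanishes because $\binom{r-1}{r}=0$. The piece carrying $\binom{r-1}{m-1}$ has a vanishing $m=0$ contribution (as $\binom{r-1}{-1}=0$), so after reindexing $m\mapsto m+1$ it becomes $\sum_{m=0}^{r-1}(m+1)!(N-m-2)!\binom{r-1}{m}$, which is precisely $a_r$. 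Hence $b_{r+1}-b_r=a_r$.

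The main obstacle, such as it is, lies entirely in the first identity: one must spot the right symmetry (or equivalently grind out the binomial sum) to pin down the closed form $\frac{N!}{N-r+1}$. Once that is in hand the second identity is bookkeeping — the Pascal split is routine and the boundary terms drop out automatically. I would note that an alternative to the Pascal route is to derive the closed form $a_r=\frac{N!}{(N-r)(N-r+1)}$ by an analogous symmetry argument (the permutations counted by $a_r$ are exactly those in which $r$ and then $j$ are the first two elements of $\{r,\dots,N\}$ to appear) and then verify $b_{r+1}-b_r=\frac{N!}{N-r}-\frac{N!}{N-r+1}=\frac{N!}{(N-r)(N-r+1)}$ arithmetically; but the Pascal identity reaches the result with less computation and without needing $a_r$ in closed form.
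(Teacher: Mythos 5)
Your proof is correct, and for the first identity it takes a genuinely different route from the paper. The paper works purely algebraically from the summation formula: it rewrites $b_j=\sum_{m=0}^{j-1}\binom{j-1}{m}m!(N-m-1)!$ as $(N-j)!\,(j-1)!\sum_{m'=N-j}^{N-1}\binom{m'}{N-j}$ and then collapses the sum with Chu Shih-Chieh's (hockey stick) identity $\sum_{i=r}^{k}\binom{i}{r}=\binom{k+1}{r+1}$ to obtain $\frac{N!}{N-j+1}$. Your symmetry argument --- that $b_r$ counts exactly the permutations in which $r$ is the earliest among $\{r,r+1,\dots,N\}$, and each of these $N-r+1$ elements is equally likely to be earliest --- reaches the same closed form with no binomial manipulation at all; it does lean on the combinatorial meaning of $b_r$ rather than on the summation formula, but that is legitimate here, since the sum is itself derived from that permutation count in the proof of Theorem \ref{theorem:shapleyoneuseronly}. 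For the second identity your argument and the paper's are essentially the same: both are a single application of Pascal's rule, the only cosmetic difference being that you expand $b_{r+1}$ and recognize the two resulting pieces as $b_r$ and $a_r$, while the paper forms the difference $b_{r+1}-b_r$ first and then applies Pascal's rule to the bracketed term; in both versions the boundary terms ($m=r$, resp.\ $m=0$) vanish for the same reason. Your closing remark is also sound: the same symmetry argument gives $a_r=\frac{N!}{(N-r)(N-r+1)}$ directly (these are the permutations in which $r$ and then $j$ are the first two elements of $\{r,\dots,N\}$ to appear), after which the lemma reduces to arithmetic --- a self-contained alternative that the paper does not pursue.
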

The proof of Lemma \ref{lemma:auxab} appears after this proof. 
Next, the Shapley value of player $j$ in the cooperative game $\left(v_i(\cdot ; \bl X),[N]\right)$ is:
\begin{align*}
\label{eq:shapley_proof1}
\phi_j(\sigma_i)&\triangleq  
\frac{1}{N!} \sum_{R \in \Pi([N])}{
\left( v_i(P_j^R\cup \{j\};\bl X)-v_i(P_j^R;\bl X) \right) }   \\
&=\frac{1}{N!} \sum_{R \in \Pi([N])}{
\left(\sigma_i(\bl X_{P_j^R\cup \{j\}})-\sigma_i(\bl X_{P_j^R}) \right) }. \nonumber
\end{align*}
Since we care only about permutations where the maximum index of a player in the preceding set of player $j$ is less than $j$, we have
\begin{equation}
\phi_j(\sigma_i)=\frac{1}{N!}  \sum_{R \in \Pi([N])}{\ind_{\max P_j^R <j}\left({\sigma_i^j - \sigma_i^{\max P_j^R}  }\right)} \nonumber.
\end{equation}
Now, using the counting arguments presented above, we derive the following:
\begin{equation}
\phi_j(\sigma_i)= \frac{1}{N!}\Big(b_j  \sigma_i^j -\sum_{r=1}^{j-1}{a_r  \sigma_i^r} \Big). \nonumber 
\end{equation}
Applying the concise form of $a_r$ from Lemma \ref{lemma:auxab} we get:
\begin{align*}
\phi_j(\sigma_i)&= \frac{1}{N!}\Big(\sum_{r=1}^{j}{b_r  \sigma_i^r }-\sum_{r=1}^{j-1}{b_{r+1}  \sigma_i^r} \Big)\\
&= \frac{1}{N!}\Big(\sum_{r=1}^{j}{b_r \cdot \big( \sigma_i^r - \sigma_i^{r-1}\big)} \Big)
\end{align*}
Ultimately, by using the concise form of $b$ from Lemma \ref{lemma:auxab} we have:
\begin{equation*}
\label{eq:shaply-proof-last}
\phi_j(\sigma_i)= \sum_{r=1}^{j}{\frac{\sigma^r_i-\sigma^{r-1}_i}{N-r+1}}.
\end{equation*}
\end{proof}

\subsection{Proof of Lemma \ref{lemma:auxab}}

\begin{proof}
First we show that $b_r=\frac{N!}{N-r+1}$:
\begin{equation}
\label{b_j_line}
\begin{aligned}
		&b_j =\sum_{m=0}^{j-1}{{{j-1} \choose m}m!(N-m-1)!}\\
		&=\sum_{m=0}^{j-1}{\bigg[\frac{(N-j)!}{(N-j)!} \frac{(j-1)!}{m!(j-m-1)!} m!(N-m-1)!\bigg]}\\
		&=(N-j)! (j-1)!\sum_{m=0}^{j-1}{\bigg[\frac{(N-m-1)!}{(N-j)! (j-m-1)!} \bigg]}\\
		&=(N-j)! (j-1)!\sum_{m=0}^{j-1}{{{N-m-1} \choose {N-j}}}   \\ 	
		&\stackrel{m'=N-m-1}{=}(N-j)! (j-1)!\sum_{m'=N-1}^{N-j}{{{m'} \choose {N-j}}}\\
		&\stackrel{r=N-j}{=}(N-j)! (j-1)!\sum_{m'=r}^{N-1}{{{m'} \choose r}}.\\
\end{aligned}
\end{equation}
By Chu Shih-Chieh's Identity (see e.g. \cite{chuan1992principles}) it holds that
\begin{equation}
\label{eq:shihdsd}
\sum_{i=r}^k{{i \choose r}}={{k+1} \choose {r+1}}.
\end{equation}				
		Applying Equation (\ref{eq:shihdsd}) to Equation (\ref{b_j_line}) with $r=N-j,k=N-1$  we get:
		\begin{align*}
			b_j&=(N-j)! (j-1)!{N \choose {N-j+1}}\\
			&=(N-j)! (j-1)!  \frac{N!}{(N-j+1)!  (j-1)!}\\
			&=\frac{N!}{N-j+1}.
		\end{align*}
In addition, $a_r=b_{r+1}-b_r$ since:
	    \begin{align*}
		&b_{r+1}-b_r=\sum_{m=0}^r{{r \choose m}m!(N-m-1)!} \\
		&\qquad\qquad\qquad\qquad  -\sum_{m=0}^{r-1}{{{r-1} \choose m}m!(N-m-1)!} \\
		&=\sum_{m=0}^r{m!(N-m-1)!\bigg[{r \choose m}-{{r-1} \choose m} \bigg]} \\
		&=\sum_{m=1}^r{m!(N-m-1)!\bigg[{r \choose m}-{{r-1} \choose m} \bigg]}  
		\end{align*}
		Using Pascal's rule
		\[{r \choose m}={{r-1} \choose {m-1}}+{{r-1} \choose m}, \]
		we get
	    \begin{align*}
		b_{r+1}-b_r=&\sum_{m=1}^r{m!(N-m-1)!{{r-1} \choose {m-1}}}\\
		=&\sum_{m=0}^r{(m+1)!(N-m-2)!{{r-1} \choose {m}}}		=a_r.
		\end{align*}
\end{proof}		

\subsection{Proof of Theorem \ref{thm:uniquenesseff}}
The proof of Theorem \ref{thm:uniquenesseff} relies on several supporting lemmas.

Due to \textbf{User-Independence}, $\mM$ sets the display probabilities according to the satisfaction vector only. Thus, it is enough to show that $\mM$ satisfying $\fprop, \apne$ must distribute the display probabilities exactly as the Shapley mediator does for any satisfaction vector $\bl \sigma$. Since we do not have a specific user in mind, we denote by $\mM_i(\bl \sigma)$ the probability that $\mM$ will display an arbitrary user  the item of player $i$ under the satisfaction vector $\bl \sigma$. Namely, if the strategy profile $\bl X$ induces a satisfaction vector $\bl \sigma$ for user $u_j$ we denote $\mM_i(\bl \sigma)=\pr \left( \mM(\bl X,u_j)= i \right)$. Let $\bl \sigma=\left(\sigma^1,\sigma^2,\dots ,\sigma^N\right)$, and w.l.o.g. let $\sigma^1\leq \sigma^2 \leq \dots \leq \sigma^N$. Let $\bl \sigma^{-i}$ denote the vector $\bl \sigma$ with the $i$-th component deleted.
\begin{observation}
If $\bl \sigma$ contains one non-zero entry $i$, then $\mM_i(\bl \sigma)=\sm_i(\bl \sigma)$ for every player $i$. 
\end{observation}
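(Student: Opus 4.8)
The plan is to settle this as a base case by computing both sides explicitly, since a single nonzero coordinate collapses the problem almost entirely. First I would evaluate $\sm_i(\bl\sigma)$ using the closed form of Theorem~\ref{theorem:shapleyoneuseronly}. Writing $\bl\sigma=(\sigma^1,\dots,\sigma^N)$ already sorted in ascending order (as set up above), the assumption that exactly one coordinate is positive forces $\sigma^1=\dots=\sigma^{N-1}=0$ and $\sigma^N>0$; that is, the nonzero entry is necessarily the maximum and sits at position $N$. For the player holding this entry we have $\rho=N$, and the sum $\sum_{m=1}^{N}\frac{\sigma^m-\sigma^{m-1}}{N-m+1}$ telescopes to its single surviving term $\frac{\sigma^N-\sigma^{N-1}}{N-N+1}=\sigma^N$. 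For every player with a zero entry the same formula (or \textbf{Null Player} directly) yields $0$. Hence $\sm$ places probability $\sigma^N$ on the unique positive player and $0$ on all others.

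Next I would show that any mediator $\mM$ satisfying $\fprop$ and $\aeff$ must coincide with this. By \textbf{Null Player}, every coordinate $k$ with $\sigma^k=0$ gives $\mM_k(\bl\sigma)=0$, so all probability mass is concentrated on the single positive player. By \textbf{Efficiency}, the total display probability equals $\max_k\sigma^k=\sigma^N$, and since every other summand vanishes the positive player must receive exactly $\sigma^N$. Comparing with the previous paragraph yields $\mM_i(\bl\sigma)=\sm_i(\bl\sigma)$ for every player, which is the claim.

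The notable feature — and the reason this is only an observation — is that there is essentially no obstacle: the argument invokes only \textbf{Null Player} and \textbf{Efficiency}, and neither \textbf{Symmetry}, \textbf{Leader Monotonicity}, nor the stability property $\apne$ plays any role. The sole point requiring care is the bookkeeping of the sorted indices, to confirm that the Shapley sum genuinely degenerates to a single term. I expect this statement to serve as the base case of an induction on the number of positive coordinates of $\bl\sigma$ in the eventual proof of Theorem~\ref{thm:uniquenesseff}; the inductive step, where several coordinates are positive and the probabilities are pinned down via a beneficial-deviation argument, is where $\apne$ will actually be required.
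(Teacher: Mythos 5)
Your proof is correct and matches the paper's reasoning: the paper disposes of this observation in one line, noting it "follows immediately from \textbf{Null Player} and $\aeff$," which is precisely your second paragraph (zero entries get zero probability, and Efficiency forces the unique positive player to receive exactly the maximum satisfaction level). Your explicit verification that the closed form of Theorem~\ref{theorem:shapleyoneuseronly} degenerates to the single term $\sigma^N$ is a harmless addition confirming both sides agree, and your remark that neither \textbf{Symmetry}, \textbf{Leader Monotonicity}, nor $\apne$ is needed here is consistent with the paper's treatment.
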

This observation follows immediately from \textbf{Null Player} and $\aeff$. Next, we show that in a two player game, each player gets her Shapley value.
\begin{lemma}
\label{lemma:basecaseshapley}
Let $N=2$. For every $(\sigma^1,\sigma^2) \in \R^2$ and every $i\in\{1,2\}$, it holds that $\mM_i(\sigma^1,\sigma^2) = \sm_i(\sigma^1,\sigma^2)$.
\end{lemma}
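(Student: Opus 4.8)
The plan is to leverage \apne\ (existence of a PNE) inside one carefully engineered two-player game, so that the \emph{only} display rule making that game stable is the Shapley one. First I would set up the reduction. By \textbf{User-Independence} the quantities $\mM_i(\sigma^1,\sigma^2)$ are well defined, and by \aeff\ we have $\mM_1(\sigma^1,\sigma^2)+\mM_2(\sigma^1,\sigma^2)=\max(\sigma^1,\sigma^2)$, so it suffices to pin down $\mM_1$; using the w.l.o.g.\ ordering $\sigma^1\le\sigma^2$ fixed above, the task is to compute $g\defeq\mM_1(x,y)$ for arbitrary $0\le x\le y\le 1$ (write $x=\sigma^1$, $y=\sigma^2$). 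The degenerate cases are immediate: if $x=0$ then \textbf{Null Player} forces $g=0=\sm_1(x,y)$, and if $x=y$ then \textbf{Symmetry} forces $g=x/2=\sm_1(x,y)$. So I may assume $0<x<y$, where $\sm_1(x,y)=x/2$.

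The heart of the argument is the following gadget. Take two players and two users; let player $1$ own the items with satisfaction vectors $(x,0)$ and $(0,x)$ over $(u_1,u_2)$, and let player $2$ own the items with vectors $(y,x)$ and $(x,y)$. Every entry of the induced bimatrix is a sum of two terms of the form $\mM_1(\cdot,\cdot)$ or $\mM_2(\cdot,\cdot)$, and by \textbf{Null Player}, \textbf{Symmetry} and \aeff\ each such term is already known except $\mM_1(x,y)=g$ and its complement $\mM_2(x,y)=y-g$; concretely $\mM_1(0,x)=0$, $\mM_2(0,x)=x$, $\mM_1(x,x)=\mM_2(x,x)=x/2$ and $\mM_2(0,y)=y$. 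A direct computation then collapses the game to player-$1$ payoff matrix $\left(\begin{smallmatrix} g & x/2 \\ x/2 & g\end{smallmatrix}\right)$ and player-$2$ payoff matrix $\left(\begin{smallmatrix} x+y-g & x/2+y \\ x/2+y & x+y-g\end{smallmatrix}\right)$, where the matrix-diagonal cells are the ``aligned'' profiles $(l_1,l_3),(l_2,l_4)$.

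The key observation is that the two players have \emph{opposite} coordination preferences away from the Shapley point: player $1$ strictly prefers the aligned cells iff $g>x/2$, whereas player $2$ (whose payoff is the \aeff-complement) strictly prefers the aligned cells iff $g<x/2$. Hence whenever $g\neq x/2$ the best-response correspondence runs around a four-cycle exactly as in matching pennies, no profile is a mutual best response, and the game has no PNE; while at $g=x/2$ all four entries of each matrix coincide and every profile is a PNE. Since $\mM$ satisfies \apne, this game must admit a PNE, which forces $g=\mM_1(x,y)=x/2=\sm_1(x,y)$, and then \aeff\ yields $\mM_2(x,y)=y-x/2=\sm_2(x,y)$. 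The player-swapped version of the same gadget disposes of the case $\sigma^1>\sigma^2$, completing the lemma.

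The step I expect to carry the real weight is \emph{isolating} this gadget, not the verification. The fairness axioms alone badly underdetermine $g$: they remain compatible with a whole interval of values of $\mM_1(x,y)$ (even \textbf{Leader Monotonicity} only yields $g<y/2$, and symmetric/cyclic constructions in the spirit of Theorem~\ref{thm:imptwoplayers} pin down merely the difference $\mM_1(y,x)-\mM_1(x,y)=y-x$, never the level). The leverage must therefore come entirely from \apne, and the nontrivial design choice is to make the off-diagonal ``reference'' payoff equal the \textbf{Symmetry}-forced value $\mM_1(x,x)=x/2$, so that the two players' matching incentives point in opposite directions and cancel precisely at the Shapley value; once the gadget is in hand, checking the three sign cases $g\gtrless x/2$ is routine.
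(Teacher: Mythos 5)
Your proof is correct, and your gadget is a genuinely different construction from the paper's. The paper shares your high-level strategy (use \apne\ on an engineered game to force the Shapley probabilities) but argues by contradiction in two separate cases, assuming first $\mM_1(\sigma^1,\sigma^2)=\tfrac{\sigma^1}{2}+\epsilon$ and then $\tfrac{\sigma^1}{2}-\epsilon$; for each case it builds a distinct game with \emph{three} users whose satisfaction entries depend on the hypothesized deviation (values like $\sigma^1+\tfrac{\epsilon}{2}$ and $\sigma^2-\tfrac{\sigma^1}{2}-\tfrac{\epsilon}{2}$) and exhibits a four-cycle of beneficial deviations. Your single two-user gadget does the same job more economically: I verified that the four profiles give player 1 the payoffs $g$ on the aligned cells and $x/2$ off them, and player 2 the \aeff-complements $x+y-g$ and $x/2+y$, so the two players' coordination incentives are exactly opposed, the game is a generalized matching pennies with no PNE whenever $g\neq x/2$, and every profile is a PNE at $g=x/2$. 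This buys three things: no $\epsilon$-bookkeeping and no case split (both directions of deviation die simultaneously), only the values $\{0,x,y\}$ are used so the construction never risks leaving $[0,1]$, and you get a full characterization of the PNE set as a function of $g$ rather than a refutation of one hypothesized value. What the paper's heavier template buys is uniformity: the same two-case, $\epsilon$-slack, cycle-of-deviations scheme is reused almost verbatim in the inductive steps for general $N$ (Lemmas \ref{lemma:plronegetsshapley} and \ref{lemma:almostshapley}), whereas your gadget is tailored to the two-player base case. One caveat: your parenthetical claim that cyclic constructions in the spirit of Theorem \ref{thm:imptwoplayers} pin down only the difference $\mM_1(y,x)-\mM_1(x,y)=y-x$ is debatable (combined with \aeff\ and anonymity across player roles it would already pin down the level), but nothing in your actual argument relies on it.
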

\begin{proof}
Assume by contradiction that $\mM(\sigma^1,\sigma^2)\neq \sm(\sigma^1,\sigma^2)$. First, we analyze the case $\mM(\sigma^1,\sigma^2)=(\frac{\sigma^1}{2}+\epsilon,\sigma^2 -\frac{\sigma^1}{2}-\epsilon)$ for some $\epsilon>0$. Consider the following satisfaction matrix:
\[
 \kbordermatrix{
        & u_1 & u_2 & u_3  \\
    x_1 & \sigma^1   & 0                 & 0  \\
    y_1 & 0          & 0                 & \sigma^1 +\frac{\epsilon}{2}  \\
    x_2 & \sigma^2   & 0                 & 0  \\
    y_2 & 0          &  \sigma^2 -\frac{\sigma^1}{2}-\frac{\epsilon}{2} & 0
  },
\]
with $\mL_1 =\{x_1,y_1\}$ and $\mL_2 =\{x_2,y_2\}$. This satisfaction matrix induces the following bimatrix game
\[
 \kbordermatrix{
        & x_2                  & y_2             \\
    x_1 & \frac{\sigma^1}{2}+\epsilon,\sigma^2 -\frac{\sigma^1}{2}-\epsilon& \sigma^1,\sigma^2 -\frac{\sigma^1}{2}-\frac{\epsilon}{2} \\
    y_1 & \sigma^1 +\frac{\epsilon}{2},\sigma^2 & \sigma^1 +\frac{\epsilon}{2}, \sigma^2 -\frac{\sigma^1}{2}-\frac{\epsilon}{2} \\
  },
\]
since
\begin{align*}
&\bullet \left(\pi_1(x_1,x_2),\pi_2(x_1,x_2)\right)= 
\left(\mM_1(\sigma^1,\sigma^2),\mM_2(\sigma^1,\sigma^2)\right)
 = \left(\frac{\sigma^1}{2}+\epsilon,\sigma^2 -\frac{\sigma^1}{2}-\epsilon\right).\\
&\bullet \left(\pi_1(x_1,y_2),\pi_2(x_1,y_2)\right) = 
\left(\mM_1(\sigma^1,0),\mM_2\left(0, \sigma^2 -\frac{\sigma^1}{2}-\frac{\epsilon}{2} \right)\right) = \left(\sigma^1,\sigma^2 -\frac{\sigma^1}{2}-\frac{\epsilon}{2}  \right).\\
&\bullet \left(\pi_1(y_1,y_2),\pi_2(y_1,y_2)\right) = 
\left(\mM_1\left(\sigma^1 +\frac{\epsilon}{2} ,0\right),\mM_2\left(0, \sigma^2 -\frac{\sigma^1}{2}-\frac{\epsilon}{2} \right)\right) =\left(\sigma^1 +\frac{\epsilon}{2}, \sigma^2 -\frac{\sigma^1}{2}-\frac{\epsilon}{2}  \right). \\
&\bullet \left(\pi_1(y_1,x_2),\pi_2(y_1,x_2)\right) = 
\left(\mM_1\left(\sigma^1 +\frac{\epsilon}{2} ,0\right),\mM_2\left(0, \sigma^2  \right)\right) = \left(\sigma^1 +\frac{\epsilon}{2},\sigma^2\right).\\
\end{align*}
This $2\times2$ normal-form game contains a cycle of beneficial deviations, which implies the non-existence of PNE; hence, we obtain a contradiction. 

On the other hand, let $\mM(\sigma^1,\sigma^2)=(\frac{\sigma^1}{2}-\epsilon,\sigma^2 -\frac{\sigma^1}{2}+\epsilon)$ for some $\epsilon>0$. Consider the following satisfaction matrix:
\[
 \kbordermatrix{
        & u_1 & u_2 & u_3  \\
    x_1 & \sigma^1   & 0                 & 0  \\
    y_1 & 0          & \sigma^1-\epsilon & 0  \\
    x_2 & \sigma^2   & \sigma^1-\epsilon & 0  \\
    y_2 & 0          & 0                 & \sigma^2+\frac{\sigma^1}{2}-\frac \epsilon 4
  }
\]
where again $\mL_1 =\{x_1,y_1\}$ and $\mL_2 =\{x_2,y_2\}$. This satisfaction matrix induces the following $2\times2$ normal-form game:
\[
 \kbordermatrix{
        & x_2                  & y_2             \\
    x_1 & \frac{\sigma^1}{2} -\epsilon,\sigma^2 +\frac{\sigma^1}{2} & \sigma^1,\sigma^2 +\frac{\sigma^1}{2}-\frac{\epsilon}{4}\\
    y_1 & \frac{\sigma^1}{2} -\frac{\epsilon}{2},\sigma^2 +\frac{\sigma^1}{2}-\frac{\epsilon}{2}                    &  \sigma^1 -\epsilon,\sigma^2 +\frac{\sigma^1}{2}-\frac{\epsilon}{4}\\
  },
\]
since
\begin{align*}
\bullet \left(\pi_1(x_1,x_2),\pi_2(x_1,x_2)\right) &= 
\left(\mM_1(\sigma^1,\sigma^2),\mM_2(\sigma^1,\sigma^2)+\mM_2(\sigma^1-\epsilon,0) \right)\\
&=\left(\frac{\sigma^1}{2} -\epsilon,\sigma^2 -\frac{\sigma^1}{2}+\epsilon+\sigma^1-\epsilon\right)
=\left(\frac{\sigma^1}{2} -\epsilon,\sigma^2 +\frac{\sigma^1}{2}\right).\\
\bullet \left(\pi_1(y_1,x_2),\pi_2(y_1,x_2)\right) &= 
\left(\mM_1(\sigma^1-\epsilon,\sigma^1-\epsilon),\mM_2(0,\sigma^2)+\mM_2(\sigma^1-\epsilon,\sigma^1-\epsilon) \right)\\
&=\left(\frac{\sigma^1}{2} -\frac{\epsilon}{2},\sigma^2 +\frac{\sigma^1}{2}-\frac{\epsilon}{2}\right).\\
\bullet \left(\pi_1(y_1,y_2),\pi_2(y_1,y_2)\right) &= 
\left(\mM_1(\sigma^1-\epsilon,0),\mM_2\left(0,\sigma^2+\frac{\sigma^1}{2}-\frac{\epsilon}{4}\right) \right)\\
&=\left(\sigma^1 -\epsilon,\sigma^2 +\frac{\sigma^1}{2}-\frac{\epsilon}{4}\right).\\
\bullet \left(\pi_1(x_1,y_2),\pi_2(x_1,y_2)\right) &= 
\left(\mM_1(\sigma^1,0),\mM_2\left(0,\sigma^2+\frac{\sigma^1}{2}-\frac{\epsilon}{4}\right) \right)\\
& = \left(\sigma^1,\sigma^2 +\frac{\sigma^1}{2}-\frac{\epsilon}{4}\right).
\end{align*}
Again we obtained a contradiction to satisfying $\apne$. Overall,
$\mM$ must produce the same distribution as $\sm$ for $N=2$.
\end{proof}
Since any mediator behaves like the Shapley mediator when $N=2$, due to \textbf{Null Player} it also holds for $N\geq 2$ for satisfaction vector $\bl \sigma$ with only two non-zero entries. 

\begin{lemma}
\label{lemma:plronegetsshapley}
Let $0\leq\sigma^{1 \prime}\leq \sigma^1\leq \sigma^2\leq\dots \leq \sigma^N$. It holds that $\mM_1(\bl \sigma^{-1},\sigma^{1 \prime}) + \frac{\sigma^1 - \sigma^{1 \prime}}{N} = \mM_1(\bl \sigma) $.
\end{lemma}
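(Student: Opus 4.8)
The plan is to recast the identity as a statement that player~$1$ is \emph{indifferent} between two specific items, and then to show that any failure of indifference would let us build a recommendation game with no PNE, contradicting $\apne$. The central gadget equips player~$1$ with two items: an item $A$ giving satisfaction $\sigma^1$ to a designated user $u_1$ and $0$ to everyone else, and an item $B$ giving satisfaction $\sigma^{1\prime}$ to $u_1$ together with satisfaction $\sigma^1-\sigma^{1\prime}$ to an auxiliary user $u^\ast$. Players $2,\dots,N$ are set to give satisfactions $\sigma^2,\dots,\sigma^N$ to $u_1$, and each to give exactly $\sigma^1-\sigma^{1\prime}$ to $u^\ast$. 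On $u^\ast$, when player~$1$ plays $B$ all $N$ players offer the identical value $\sigma^1-\sigma^{1\prime}$, so by \textbf{Symmetry} together with $\aeff$ player~$1$ collects precisely $\tfrac{\sigma^1-\sigma^{1\prime}}{N}$ there, whereas by \textbf{Null Player} she collects $0$ on $u^\ast$ under $A$. Since \textbf{User-Independence} makes every per-user display probability a function of the satisfaction vector alone, player~$1$'s payoff is $\mM_1(\bl\sigma)$ under $A$ and $\mM_1(\bl\sigma^{-1},\sigma^{1\prime})+\tfrac{\sigma^1-\sigma^{1\prime}}{N}$ under $B$. Thus the asserted equation is exactly the claim that $A$ and $B$ give player~$1$ equal payoff, i.e. that the gap $\Delta\defeq\mM_1(\bl\sigma)-\mM_1(\bl\sigma^{-1},\sigma^{1\prime})-\tfrac{\sigma^1-\sigma^{1\prime}}{N}$ vanishes. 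This step crucially uses $\sigma^{1\prime}\le\sigma^1\le\sigma^2$, so that player~$1$ is the minimum entry in both vectors and $\tfrac1N$ is the correct Shapley coefficient.

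To turn $\Delta\ne0$ into a contradiction I would embed this gadget into a larger game with a second active player, mirroring the two explicit constructions in the proof of Lemma \ref{lemma:basecaseshapley}. If players $2,\dots,N$ each had only a single item, player~$1$ would simply select her better item and a PNE would trivially exist, so an interacting opponent is needed. Accordingly I would promote player~$2$ to two items and attach a few more auxiliary users, while holding players $3,\dots,N$ fixed so that they merely supply the background satisfactions $\sigma^3,\dots,\sigma^N$ that place players~$1$ and $2$ at the intended positions of their induced cooperative games. The auxiliary values would be tuned to the magnitude of $\Delta$ so that the $2\times2$ normal-form game induced on players~$1$ and $2$ contains a closed chain of strictly beneficial unilateral deviations when $\Delta>0$, with a mirror-image satisfaction matrix handling $\Delta<0$. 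In either case the induced game possesses no PNE, contradicting $\apne$, whence $\Delta=0$.

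The hard part will be the explicit cyclic construction and the verification that it admits no PNE for general $N$. The subtlety is that the payoff entries of the induced $2\times2$ game cannot be read off from $\mM$ directly, since the values of $\mM$ on vectors with three or more nonzero coordinates are exactly what remains unknown at this stage. The construction must therefore be arranged so that every payoff appearing in the cycle is pinned down solely by facts already in hand: the equality $\mM=\sm$ on satisfaction vectors with at most two nonzero entries (Lemma \ref{lemma:basecaseshapley} and the remark following it), \textbf{Null Player}, \textbf{Symmetry}, and $\aeff$, together with the single unknown quantity $\Delta$. Engineering the auxiliary users and background players to realize precisely such entries, so that the chain of beneficial deviations closes regardless of the specific values $\sigma^1,\dots,\sigma^N$, is the delicate bookkeeping at the heart of the argument.
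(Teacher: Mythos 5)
Your reformulation of the identity as an indifference statement is exactly the engine of the paper's own proof: the paper gives player~1 the items $y_1$ (your $A$) and $x_1$ (your $B$, augmented by a small $\epsilon$-bonus on an extra user), and its user $u_2$ plays the role of your $u^\ast$, on which \textbf{Symmetry} and \aeff{} produce the $\frac{\sigma^1-\sigma^{1\prime}}{N}$ term. The overall strategy --- turn $\Delta\neq 0$ into a game whose induced $2\times 2$ interaction has a cycle of strictly beneficial deviations, contradicting \apne{} --- is also the paper's. However, the two ingredients you defer to ``delicate bookkeeping'' are precisely where your plan, as stated, breaks down, and the paper resolves them with ideas that are absent from your proposal.

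First, the paper proves the lemma by induction on $N$, with Lemma~\ref{lemma:basecaseshapley} as the base case; this is not cosmetic. In the cycle, when the second strategic player switches to her alternative item (which is null for the main user), player~1's payoff involves $\mM_1(\bl\sigma^{-i},0)$ and $\mM_1(\bl\sigma^{-\{1,i\}},\sigma^{1\prime},0)$, i.e.\ vectors with $N-1$ nonzero entries. For $N\geq 4$ these cannot be pinned down by the two-nonzero-entry case plus the axioms, contrary to your stated requirement that every payoff in the cycle be determined by Lemma~\ref{lemma:basecaseshapley}, the axioms, and $\Delta$; the paper instead relates them via the inductive hypothesis for $N-1$ players. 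Second, you promote ``player~2'' to two items, but for the cycle to close you need the second player's display probability to move in the direction opposite to player~1's, by a quantifiable amount, between the backgrounds $\bl\sigma$ and $(\bl\sigma^{-1},\sigma^{1\prime})$. That quantity is an unknown of $\mM$ itself and cannot be arranged by tuning auxiliary satisfaction values. The paper obtains it from \aeff{} plus pigeonhole: since the display probabilities sum to $\sigma^N$ under both backgrounds, player~1's excess $\Delta$ must be offset by some player $i$ with $\mM_i(\bl\sigma^{-1},\sigma^{1\prime})-\mM_i(\bl\sigma)>\frac{\sigma^1-\sigma^{1\prime}}{N(N-1)}$ (in its first case), and it is that specific player $i$ --- not an arbitrarily chosen player~2 --- who is promoted to two items; this bound is exactly what makes her deviations in the cycle strictly beneficial. (A further device you would need: the paper uses the unknown quantities themselves, e.g.\ $\mM_i(\bl\sigma^{-1},\sigma^{1\prime})+\frac{\sigma^1-\sigma^{1\prime}}{N}-\epsilon$, as satisfaction values of an auxiliary user, which is legitimate because the mediator must handle every game.) Without the induction and the pigeonhole selection of $i$, the cycle you describe cannot be verified, so the proposal has a genuine gap.
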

\begin{proof}
The assertion holds for $\sigma^{\prime 1}= \sigma^1$. Otherwise, we prove the assertion by induction, where Lemma \ref{lemma:basecaseshapley} serves as the base case. Assume the claim holds for $N-1$, and does not hold for $N$. 

\textit{Case 1:} $\mM_1(\bl \sigma) - \mM_1(\bl \sigma^{-1},\sigma^{1 \prime})  - \frac{\sigma^1 - \sigma^{1 \prime}}{N} >\epsilon >0 $. Due to $\aeff$, 
\begin{align}
&\sum_{i=1}^N \left(\mM_i(\bl \sigma^{-1},\sigma^{1 \prime}) - \mM_i(\bl \sigma)  \right) = 0 \nonumber \\ 
& \Rightarrow\sum_{i=2}^N \left(\mM_i(\bl \sigma^{-1},\sigma^{1 \prime}) - \mM_i(\bl \sigma) \right) = \mM_1(\bl \sigma) - \mM_1(\bl \sigma^{-1},\sigma^{1 \prime}) > \frac{\sigma^1 - \sigma^{1 \prime}}{N} \nonumber  \\
& \Rightarrow \max_i { \left(\mM_i(\bl \sigma^{-1},\sigma^{1 \prime}) - \mM_i(\bl \sigma)  \right) } >   \frac{\sigma^1 - \sigma^{1 \prime}}{N(N-1)}.
\end{align}
Denote by $i$ a player such that 
\begin{equation}
\label{eq:plrifadsaads}
 \mM_i(\bl \sigma^{-1},\sigma^{1 \prime}) - \mM_i(\bl \sigma)    >\frac{\sigma^1 - \sigma^{1 \prime}}{N(N-1)}.
\end{equation}
Consider the following satisfaction matrix
\[
 \kbordermatrix{
        & u_1 & u_2 & u_3  & u_4\\
    x_1 & \sigma^{1 \prime}   & \sigma^1-\sigma^{1 \prime}  &  \epsilon & 0   \\
    y_1 & \sigma^1  & 0                   & 0    & 0 \\
    x_i & \sigma^i   & \sigma^1-\sigma^{1 \prime}  & 0    & 0  \\
    y_i & 0          & 0           &0  & \mM_i(\bl \sigma^{-1},\sigma^{1 \prime})+\frac{\sigma^1-\sigma^{1 \prime}}{N} - \epsilon \\
    x_j & \sigma^j   & \sigma^1-\sigma^{1 \prime}  & 0    & 0   },
\]
where $\mL_1 =\{x_1,y_1\}$, $\mL_i =\{x_i,y_i\}$ and $\mL_j=\{x_j\}$ for every player $j$ such that $j\notin\{1,i\}$. Note that all players but 1 and $i$ are non-strategic, or alternatively every strategy they select has the same satisfaction level w.r.t. users $\{u_1,u_2,u_3,u_4\}$. We have the following cycle:
\begin{itemize}
\item $\left(\pi_1(x_1,x_i),\pi_i(x_1,x_i)\right) = \left(\mM_1(\bl \sigma^{-1},\sigma^{1 \prime})+\frac{\sigma^1-\sigma^{1 \prime}}{N}+\epsilon , \mM_i(\bl \sigma^{-1},\sigma^{1 \prime})+\frac{\sigma^1-\sigma^{1 \prime}}{N}\right) $.
\item $\left(\pi_1(y_1,x_i),\pi_i(y_1,x_i)\right) = (\mM_1(\bl \sigma), \mM_i(\bl \sigma) +\frac{\sigma^1-\sigma^{1 \prime}}{N-1})$. Due to Equation (\ref{eq:plrifadsaads}), we have
\[
\mM_i(\bl \sigma^{-1},\sigma^{1 \prime})+\frac{\sigma^1-\sigma^{1 \prime}}{N} - \mM_i(\bl \sigma) -\frac{\sigma^1-\sigma^{1 \prime}}{N-1} > \frac{\sigma^1-\sigma^{1 \prime}}{N} - \frac{\sigma^1-\sigma^{1 \prime}}{N-1}+ \frac{\sigma^1 - \sigma^{1 \prime}}{N(N-1)} =0 .
\]
\item $\left(\pi_1(y_1,y_i),\pi_i(y_1,y_i)\right) = \left(\mM_1(\bl \sigma^{-i},0) ,  \mM_i(\bl \sigma^{-1},\sigma^{1 \prime})+\frac{\sigma^1-\sigma^{1 \prime}}{N} - \epsilon \right)   $. According to the inductive step, 
\[
\mM_1(\bl \sigma^{-i},0) = \mM_1(\bl \sigma^{-\{1,i\}},\sigma^{1 \prime},0) +\frac{\sigma^1 - \sigma^{1 \prime}}{N-1}.
\] 
\item $\left(\pi_1(x_1,y_i),\pi_i(x_1,y_i)\right) = \left(\mM_1(\bl \sigma^{-\{1,i\}},\sigma^{1 \prime},0) +\frac{\sigma^1-\sigma^{1 \prime}}{N-1}+\epsilon, \mM_i(\bl \sigma^{-1},\sigma^{1 \prime})+\frac{\sigma^1-\sigma^{1 \prime}}{N} - \epsilon  \right) $.
\item $\left(\pi_1(x_1,x_i),\pi_i(x_1,x_i)\right) = \left(\mM_1(\bl \sigma^{-1},\sigma^{1 \prime})+\frac{\sigma^1-\sigma^{1 \prime}}{N}+\epsilon , \mM_i(\bl \sigma^{-1},\sigma^{1 \prime})+\frac{\sigma^1-\sigma^{1 \prime}}{N}\right) $.
\end{itemize}
The reader can verify that in each step above the deviating player (e.g. player 1 from the first bullet to the second, and player $i$ from the second to the third) indeed makes a beneficial deviation. Hence we have a cycle, and a PNE does not exist, which is a contradiction to satisfying $\apne$.

\textit{Case 2:}$\mM_1(\bl \sigma^{-1},\sigma^{1 \prime})  + \frac{\sigma^1 - \sigma^{1 \prime}}{N} - \mM_1(\bl \sigma) >\epsilon>0$. Similarly to the previous case,
\begin{align}
&\sum_{i=1}^N \left(\mM_i(\bl \sigma^{-1},\sigma^{1 \prime}) - \mM_i(\bl \sigma)   \right) = 0 \nonumber \\ 
& \Rightarrow \sum_{i=2}^N \left(\mM_i(\bl \sigma^{-1},\sigma^{1 \prime}) - \mM_i(\bl \sigma)  \right) = \mM_1(\bl \sigma) - \mM_1(\bl \sigma^{-1},\sigma^{1 \prime})  < \frac{\sigma^1 - \sigma^{1 \prime}}{N} \nonumber \\
& \Rightarrow \min_i { \left(\mM_i(\bl \sigma^{-1},\sigma^{1 \prime}) - \mM_i(\bl \sigma)   \right) } < \frac{\sigma^1 - \sigma^{1 \prime}}{N(N-1)}.
\end{align}
Denote by $i$ a player such that
\begin{equation}
\label{eq:plrifcasetwxcxcxco}
\mM_i(\bl \sigma^{-1},\sigma^{1 \prime}) - \mM_i(\bl \sigma)     <\frac{\sigma^1 - \sigma^{1 \prime}}{N(N-1)} \Leftrightarrow   \mM_i(\bl \sigma) -  \mM_i(\bl \sigma^{-1},\sigma^{1 \prime})    >- \frac{\sigma^1 - \sigma^{1 \prime}}{N(N-1)} .
\end{equation}
Consider the following game:
\[
 \kbordermatrix{
        & u_1 & u_2 & u_3  & u_4\\
    x_1 & \sigma^1   & 0 &  \epsilon & 0   \\
    y_1 & \sigma^{1 \prime}   & \sigma^1-\sigma^{1 \prime}                    & 0    & 0 \\
    x_i & \sigma^i   & \sigma^1-\sigma^{1 \prime}  & 0    & 0  \\
    y_i & 0          & 0           &0  & \mM_i(\bl \sigma) +\frac{\sigma^1-\sigma^{1 \prime}}{N-1} - \epsilon \\
    x_j & \sigma^j   & \sigma^1-\sigma^{1 \prime}  & 0    & 0  },
\]
where $\mL_1 =\{x_1,y_1\}$, $\mL_i =\{x_i,y_i\}$ and $\mL_j=\{x_j\}$ for every player $j$ such that $j\notin\{1,i\}$. Here again all players but 1 and $i$ are non-strategic, or alternatively every strategy they select has the same satisfaction level w.r.t. users $\{u_1,u_2,u_3,u_4\}$. We have the following cycle:

\begin{itemize}
\item $\left(\pi_1(x_1,x_i),\pi_i(x_1,x_i) \right) = \left(\mM_1(\bl \sigma) +\epsilon, \mM_i(\bl \sigma) +\frac{\sigma^1-\sigma^{1 \prime}}{N-1}\right) $.
\item $\left(\pi_1(y_1,x_i),\pi_i(y_1,x_i) \right) = \left(\mM_1(\bl \sigma^{-1},\sigma^{1 \prime}) +\frac{\sigma^1 - \sigma^{1 \prime}}{N},\mM_i(\bl \sigma^{-1},\sigma^{1 \prime}) + \frac{\sigma^1-\sigma^{1 \prime}}{N} \right)$. Due to Equation (\ref{eq:plrifcasetwxcxcxco})
\[
\mM_i(\bl \sigma) +\frac{\sigma^1-\sigma^{1 \prime}}{N-1} - \mM_i(\bl \sigma^{-1},\sigma^{1 \prime}) - \frac{\sigma^1-\sigma^{1 \prime}}{N} > 0.
\]
\item $\left(\pi_1(y_1,y_i),\pi_i(y_1,y_i) \right) = \left(\mM_1(\bl \sigma^{-\{1,i\}},\sigma^{1 \prime},0)+\frac{\sigma^1 - \sigma^{1 \prime}}{N-1} , \mM_i(\bl \sigma) +\frac{\sigma^1-\sigma^{1 \prime}}{N-1} - \epsilon\right)$. According to the inductive step, 
\[
\mM_1(\bl \sigma^{\{-1,i\}},\sigma^{1 \prime},0)+\frac{\sigma^{1 \prime} - \sigma^1}{N-1} =   \mM_1(\bl \sigma^{-i},0) .
\] 
\item $\left(\pi_1(x_1,y_i),\pi_i(x_1,y_i) \right) = \left(\mM_1(\bl \sigma^{-i},0) +\epsilon, \mM_i(\bl \sigma) +\frac{\sigma^1-\sigma^{1 \prime}}{N-1} - \epsilon\right) $.
\item $\left(\pi_1(x_1,x_i),\pi_i(x_1,x_i) \right)= \left(\mM_1(\bl \sigma) +\epsilon, \mM_i(\bl \sigma) +\frac{\sigma^1-\sigma^{1 \prime}}{N-1}\right) $.
\end{itemize}
Hence we have a cycle, which is a contradiction to satisfying $\apne$. This concludes the proof of this lemma.
\end{proof}
\begin{corollary}
\label{corollary:playerone}
For any $\bl \sigma$, $\mM_1(\bl \sigma) = \sm_1(\bl \sigma) =\frac{\sigma^1}{N}$.
\end{corollary}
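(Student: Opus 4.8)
The plan is to obtain the corollary as an immediate specialization of Lemma \ref{lemma:plronegetsshapley} at the boundary. Recall that in the sorted satisfaction vector $\bl \sigma = (\sigma^1,\dots,\sigma^N)$ with $\sigma^1 \le \dots \le \sigma^N$, player~$1$ is the player with the smallest satisfaction level, so the choice $\sigma^{1\prime}=0$ is admissible (as $0 \le \sigma^1$). First I would plug $\sigma^{1\prime}=0$ into Lemma \ref{lemma:plronegetsshapley} to obtain
\[
\mM_1(\bl \sigma^{-1},0) + \frac{\sigma^1}{N} = \mM_1(\bl \sigma).
\]
In the vector $(\bl \sigma^{-1},0)$ player~$1$ has satisfaction level $0$, so the \textbf{Null Player} property forces $\mM_1(\bl \sigma^{-1},0)=0$. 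Substituting this value yields $\mM_1(\bl \sigma) = \frac{\sigma^1}{N}$.

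It then remains to confirm that the Shapley mediator assigns exactly the same probability, i.e. $\sm_1(\bl \sigma)=\frac{\sigma^1}{N}$. This follows directly from the closed-form formula of Theorem \ref{theorem:shapleyoneuseronly} applied to the smallest-satisfaction player: the relevant index satisfies $\rho=1$, so only the first summand survives and $\sm_1(\bl \sigma)=\frac{\sigma^1-\sigma^0}{N-1+1}=\frac{\sigma^1}{N}$, using $\sigma^0=0$. If $\sigma^1$ is tied with later entries, every admissible choice of $\rho$ gives the same value, since the extra summands $\sigma^m-\sigma^{m-1}$ vanish on the tie. Combining the two displays establishes $\mM_1(\bl \sigma)=\sm_1(\bl \sigma)=\frac{\sigma^1}{N}$, as claimed.

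There is essentially no obstacle at this stage, and I would not expect the corollary to require any new idea: the entire difficulty has already been absorbed into Lemma \ref{lemma:plronegetsshapley}, whose proof builds an explicit auxiliary game and a cycle of beneficial deviations to rule out any departure of player~$1$'s probability from the Shapley prescription. The corollary merely instantiates that lemma at $\sigma^{1\prime}=0$ and then pins down the boundary value through \textbf{Null Player}. The only point I would state carefully is the tie-handling in the Shapley formula, so that the identification $\sm_1(\bl \sigma)=\sigma^1/N$ holds verbatim under the sorting convention of Theorem \ref{theorem:shapleyoneuseronly}.
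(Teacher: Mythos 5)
Your proposal is correct and matches the paper's own proof exactly: the paper likewise obtains the corollary by invoking Lemma \ref{lemma:plronegetsshapley} with $\sigma^{1\prime}=0$ and applying \textbf{Null Player}, and your added verification of $\sm_1(\bl \sigma)=\frac{\sigma^1}{N}$ via Theorem \ref{theorem:shapleyoneuseronly} (including the tie-handling remark) is a harmless elaboration of what the paper leaves implicit.
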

Corollary \ref{corollary:playerone} follows by invoking Lemma \ref{lemma:plronegetsshapley} with  $\sigma^{1 \prime}=0$ and relying on \textbf{Null Player}.
\begin{lemma}
\label{lemma:almostshapley}
Let $0\leq\sigma^{1 \prime} \leq \sigma^1 \leq \dots \leq \sigma^{\prime k} \leq \sigma^k \leq \dots \sigma^{N \prime} \leq \sigma^N \leq 1$. If $\mM$ satisfies $\fprop,\apne$ and $\aeff$, it holds for every player index $k$ that
\[
\mM_k(\bl \sigma^{-k},\sigma^{k \prime})+ \frac{\sigma^k-\sigma^{k \prime }}{N-k+1}=  \mM_k(\bl \sigma) .
\]
\end{lemma}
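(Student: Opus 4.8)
The plan is to generalize the argument of Lemma~\ref{lemma:plronegetsshapley}, which proves exactly this identity in the special case $k=1$ (where $N-k+1=N$). As there, I would proceed by contradiction and by induction on the number of players $N$, using Lemma~\ref{lemma:basecaseshapley} (the $N=2$ case) and the already-established $k=1$ instance of Lemma~\ref{lemma:plronegetsshapley} as the base of the induction. Fix an index $k$ and assume the equality fails for some ordered vector; since $\sigma^{k \prime}$ lies between $\sigma^{k-1}$ and $\sigma^k$, replacing $\sigma^k$ by $\sigma^{k \prime}$ leaves player $k$ in sorted position $k$, so the ``correct'' change in its probability is $\frac{\sigma^k-\sigma^{k \prime}}{N-k+1}$. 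The two directions of failure give the two cases analogous to Cases~1 and~2 in that proof, namely $\mM_k(\bl \sigma)-\mM_k(\bl \sigma^{-k},\sigma^{k \prime})$ exceeding, or falling short of, the claimed $\frac{\sigma^k-\sigma^{k \prime}}{N-k+1}$ by some $\epsilon>0$.

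First I would use $\aeff$ to locate a compensating player. For $k<N$ the maximal satisfaction equals $\sigma^N$ under both $\bl \sigma$ and $(\bl \sigma^{-k},\sigma^{k \prime})$, so $\sum_j \mM_j=\sigma^N$ in both and the entire error at player $k$ must be redistributed among the other $N-1$ players; averaging then yields a player $i\neq k$ with $\bigl|\mM_i(\bl \sigma^{-k},\sigma^{k \prime})-\mM_i(\bl \sigma)\bigr|>\frac{\sigma^k-\sigma^{k \prime}}{(N-1)(N-k+1)}$, and with the correct sign for the case at hand. This $i$ takes over the role played by the symmetric counterpart in the $k=1$ proof. The boundary case $k=N$ is different but simpler: there the total displayed probability itself changes by exactly $\sigma^N-\sigma^{N \prime}$, which must be carried entirely by player $N$ (the probabilities of the lower players are pinned by the unchanged lower part of the vector), giving the claim directly.

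Next I would construct a recommendation game with a constant number of auxiliary users in which only players $k$ and $i$ are strategic, each holding a ``present'' item $x$ and an ``absent'' item $y$ (whose satisfaction is $0$ for the decisive user), while every remaining player is non-strategic and pinned to a fixed satisfaction level that keeps it in its designated sorted position. The decisive user then sees exactly the vector $\bl \sigma$ or $(\bl \sigma^{-k},\sigma^{k \prime})$ according to player $k$'s item, and the auxiliary users are used to inject the gaps $\frac{\sigma^k-\sigma^{k \prime}}{N-k+1}$ and $\frac{\sigma^k-\sigma^{k \prime}}{N-k}$ (the relevant denominators in the full game and in the one-player-smaller game) so that the four profiles $(x_k,x_i),(y_k,x_i),(y_k,y_i),(x_k,y_i)$ form a cycle of strictly beneficial unilateral deviations. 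The payoffs at the reduced profiles, where $k$ or $i$ contributes $0$ to the decisive user, are evaluated via the inductive hypothesis on the $(N-1)$-player subvector, and the assumed error at player $i$ (respectively $k$) supplies the strict inequality that closes the cycle. A cycle of strictly improving deviations rules out a PNE, contradicting $\apne$ and forcing the equality; since $\mM$ coincides with $\sm$ for the smallest player by Corollary~\ref{corollary:playerone}, this lemma is precisely the inductive tool needed to propagate that agreement upward.

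The step I expect to be the main obstacle is the denominator bookkeeping. When player $k$ or player $i$ switches to its absent item, the surviving players are re-sorted, and the denominator to which the inductive hypothesis must be applied changes according to whether the compensating index $i$ lies below or above $k$: if $i>k$ the reduced denominator for player $k$ is $N-k$, whereas if $i<k$ player $k$ moves to position $k-1$ and the denominator stays $N-k+1$. Keeping these positional shifts consistent so that the $(N-1)$-player hypothesis is invoked on the correct subvector, and verifying that each of the four deviations is strictly beneficial in both sign cases, is the delicate part. By comparison, checking that all constructed satisfaction values remain in $[0,1]$ and respect the monotone ordering is routine.
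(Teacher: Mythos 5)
Your high-level architecture (contradiction, a compensating player located via $\aeff$, a four-profile game in which only players $k$ and $i$ are strategic, a cycle of strict unilateral improvements evaluated through the $(N-1)$-player inductive hypothesis) is the paper's architecture. But there is a genuine gap in how you locate the compensating player, and it is fatal to the quantitative part of the argument. You induct on $N$ only and average the error over all $N-1$ players other than $k$, obtaining some $i\neq k$ with $\abs{\mM_i(\bl \sigma^{-k},\sigma^{k \prime})-\mM_i(\bl \sigma)}>\frac{\sigma^k-\sigma^{k \prime}}{(N-1)(N-k+1)}$. The paper instead runs a double induction, over $N$ \emph{and} over the player index $k$: for $j<k$ the already-established identities, together with \textbf{Symmetry}, \textbf{Null Player} and Corollary~\ref{corollary:playerone}, force $\mM_j(\bl \sigma)=\sm_j(\bl \sigma)$, a quantity depending only on $\sigma^1,\dots,\sigma^j$, and hence $\mM_j(\bl \sigma^{-k},\sigma^{k \prime})=\mM_j(\bl \sigma)$ for every $j<k$. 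Consequently the whole error is carried by the $N-k$ players \emph{above} $k$, which yields a compensating player $i>k$ whose deviation exceeds $\frac{\sigma^k-\sigma^{k \prime}}{(N-k)(N-k+1)}$. Both the larger constant and the guarantee $i>k$ are needed; your proposal has neither, and the case split $i<k$ versus $i>k$ that you flag as "delicate bookkeeping" is actually a symptom of the missing inner induction (the case $i<k$ should never arise).

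To see that the weaker bound breaks the proof, write $\Delta=\sigma^k-\sigma^{k\prime}$ and follow your own construction, which injects the shares $\frac{\Delta}{t}$ and $\frac{\Delta}{t-1}$ through an auxiliary user on which a group of $t$ players is tied at $\Delta$. Player $k$'s first deviation (from its "present" item) is strictly beneficial only if $\frac{\Delta}{t}$ does not exceed the assumed surplus $\frac{\Delta}{N-k+1}$, forcing $t\geq N-k+1$; player $k$'s return deviation (against $y_i$) is strictly beneficial only if $\frac{\Delta}{t-1}$ covers the reduced-game increment $\frac{\Delta}{N-k}$ supplied by the $(N-1)$-player hypothesis when $i>k$, forcing $t-1\leq N-k$. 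Hence $t=N-k+1$ exactly, and then player $i$'s two deviations can both be strict only if the value $w$ planted on the fourth user satisfies $\mM_i(\bl \sigma)+\frac{\Delta}{N-k}<w<\mM_i(\bl \sigma^{-k},\sigma^{k\prime})+\frac{\Delta}{N-k+1}$, i.e.\ only if $\mM_i(\bl \sigma^{-k},\sigma^{k\prime})-\mM_i(\bl \sigma)>\frac{\Delta}{(N-k)(N-k+1)}$. Your averaged guarantee $\frac{\Delta}{(N-1)(N-k+1)}$ is strictly smaller whenever $k\geq 2$ (for $k$ close to $N$ it is smaller by a factor of order $N$), so no admissible $w$ exists and the cycle does not close; the $i<k$ branch fails by the same arithmetic with $t\leq N-k+2$. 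Note also that your treatment of $k=N$ ("the lower players are pinned by the unchanged lower part of the vector") already presupposes exactly the fact $\mM_j=\sm_j$ for $j<N$ that the inner induction on $k$ provides, so that induction cannot be dispensed with anywhere in the proof.
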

\begin{proof}
We prove the claim by induction over $k$ and $N$, player index and number of players respectively.

\textit{\underline{Base cases:}}
\begin{enumerate}
\item The assertion holds for $k\in \{1,2\}$ and $N=2$ due to Lemma \ref{lemma:basecaseshapley}.
\item The assertion holds for $k=1$ and $N\geq 1$ Due to Corollary  \ref{corollary:playerone}.
\end{enumerate}
Assume the assertion holds for $(j,N)$ and $(j,N-1)$, for every $j<k$. Next, we show it holds for $(k,N)$, for $k<N$, while dealing with $k=N$ afterwards. 

\textit{Case 1:} Assume by contradiction that $\mM_k(\bl \sigma^{-k},\sigma^{k \prime})  + \frac{\sigma^k - \sigma^{k \prime}}{N-k+1} < \mM_k(\bl \sigma)$. $\aeff$ suggests that the sum of probabilities remains $\sigma^N$, and due to the inductive step we have
$\mM_j(\bl \sigma^{-k},\sigma^{k \prime})= \mM_j(\bl \sigma)$ for $j<k$. Hence
\begin{align}
&\sum_{i=1}^N \left(\mM_i(\bl \sigma^{-k},\sigma^{k \prime}) - \mM_i(\bl \sigma)  \right) = 0 \nonumber  \\ 
& \Rightarrow \sum_{i=k+1}^N \left(\mM_i (\bl \sigma^{-k},\sigma^{k \prime})- \mM_i(\bl \sigma) \right) = \mM_k(\bl \sigma) - \mM_k(\bl \sigma^{-k},\sigma^{k \prime}) > \frac{\sigma^k - \sigma^{k \prime}}{N-k+1} \nonumber  \\
& \Rightarrow \max_i { \left(\mM_i(\bl \sigma^{-k},\sigma^{k \prime}) - \mM_i (\bl \sigma) \right) } >   \frac{\sigma^k - \sigma^{k \prime}}{(N-k)(N-k+1)}.
\end{align}
Denote by $i$ a player such that 
\begin{equation}
\label{eq:plrif}
\mM_i(\bl \sigma^{-k},\sigma^{k \prime}) - \mM_i(\bl \sigma)    > \frac{\sigma^k - \sigma^{k \prime}}{(N-k)(N-k+1)},
\end{equation}
and let $\epsilon = \min\{\epsilon_k,\epsilon_i\}$ for $\epsilon_k,\epsilon_i$ that satisfy
\begin{align*}
&\mM_k(\bl \sigma) -\mM_k(\bl \sigma^{-k},\sigma^{k \prime})  - \frac{\sigma^k - \sigma^{k \prime}}{N-k+1} > \epsilon_k>0, \\
&\mM_i(\bl \sigma^{-k},\sigma^{k \prime}) - \mM_i(\bl \sigma)    - \frac{\sigma^k - \sigma^{k \prime}}{(N-k)(N-k+1)}>\epsilon_i>0.
\end{align*}
Consider the following game
\[
 \kbordermatrix{
        & u_1 & u_2 & u_3  & u_4\\
    x_k & \sigma^{k \prime}   & \sigma^k-\sigma^{k \prime}  &  \epsilon & 0   \\
    y_k & \sigma^k  & 0                   & 0    & 0 \\
    x_i & \sigma^i   & \sigma^k-\sigma^{k \prime}  & 0    & 0  \\
    y_i & 0          & 0           &0  & \mM_i(\bl \sigma^{-k},\sigma^{k \prime})+\frac{\sigma^k-\sigma^{k \prime}}{N-k+1} - \epsilon \\
        x_r & \sigma^r   & 0  & 0    & 0  \\
        x_m & \sigma^m   & \sigma^k-\sigma^{k \prime}  & 0    & 0  \\},
\]
where $\mL_k =\{x_k,y_k\}$, $\mL_i =\{x_i,y_i\}$, $\mL_r=\{x_r\}$ for $1\leq r <k$ and $\mL_m=\{x_m\}$ for $k<m\leq N$. Note that all players but $k$ and $i$ are non-strategic, or alternatively every strategy they select has the same satisfaction level w.r.t. users $\{u_1,u_2,u_3,u_4\}$. We have the following cycle:
\begin{itemize}
\item $\left(\pi_k(x_k,x_i),\pi_i(x_k,x_i) \right) = \left(\mM_k(\bl \sigma^{-k},\sigma^{k \prime})+\frac{\sigma^k-\sigma^{k \prime}}{N-k+1}+\epsilon , \mM_i(\bl \sigma^{-k},\sigma^{k \prime})+\frac{\sigma^k-\sigma^{k \prime}}{N-k+1}\right) $.
\item $\left(\pi_k(y_k,x_i),\pi_i(y_k,x_i) \right) = \left( \mM_k(\bl \sigma), \mM_i(\bl \sigma) +\frac{\sigma^k-\sigma^{k \prime}}{N-k} \right)$. Due to Equation (\ref{eq:plrif}), we have
\begin{align*}
&\mM_i(\bl \sigma^{-k},\sigma^{k \prime})+\frac{\sigma^k-\sigma^{k \prime}}{N-k+1} - \mM_i(\bl \sigma) -\frac{\sigma^k-\sigma^{k \prime}}{N-k}  \\
& > \frac{\sigma^k-\sigma^{k \prime}}{N-k+1} - \frac{\sigma^k-\sigma^{k \prime}}{N-k}+ \frac{\sigma^k - \sigma^{k \prime}}{(N-k)(N-k+1)} =0 .
\end{align*}
\item $\left(\pi_k(y_k,y_i),\pi_i(y_k,y_i) \right) = \left(\mM_k(\bl \sigma^{-i},0) ,  \mM_i(\bl \sigma^{-k},\sigma^{k \prime})+\frac{\sigma^k-\sigma^{k \prime}}{N-k+1} - \epsilon \right)   $. According to the inductive step, 
\[
\mM_k(\bl \sigma^{-i},0) = \mM_k(\bl \sigma^{-\{k,i\}},\sigma^{k \prime},0) +\frac{\sigma^k - \sigma^{k \prime}}{N-k}.
\] 
\item $\left(\pi_k(x_k,y_i),\pi_i(x_k,y_i) \right) = \left(\mM_k(\bl \sigma^{-\{k,i\}},\sigma^{k \prime},0) +\frac{\sigma^k-\sigma^{k \prime}}{N-k}+\epsilon, \mM_i(\bl \sigma^{-k},\sigma^{k \prime})+\frac{\sigma^k-\sigma^{k \prime}}{N} - \epsilon  \right) $.
\item $\left(\pi_k(x_k,x_i),\pi_i(x_k,x_i) \right) = \left(\mM_k(\bl \sigma^{-k},\sigma^{k \prime})+\frac{\sigma^k-\sigma^{k \prime}}{N-k+1}+\epsilon , \mM_i(\bl \sigma^{-k},\sigma^{k \prime})+\frac{\sigma^k-\sigma^{k \prime}}{N-k+1}\right) $.
\end{itemize}
Hence we have a cycle, which is a contradiction to satisfying $\apne$. 

\textit{Case 2:}$\mM_k(\bl \sigma^{-k},\sigma^{k \prime})  + \frac{\sigma^k - \sigma^{k \prime}}{N-k+1} > \mM_k(\bl \sigma)$. Due to $\aeff$,
\begin{align}
&\sum_{i=1}^N \left(\mM_i(\bl \sigma^{-k},\sigma^{k \prime}) - \mM_i(\bl \sigma)   \right) = 0 \nonumber \\ 
& \Rightarrow \sum_{i=k+1}^N \left(\mM_i(\bl \sigma^{-k},\sigma^{k \prime}) - \mM_i(\bl \sigma)  \right) = \mM_k(\bl \sigma) - \mM_k(\bl \sigma^{-k},\sigma^{k \prime})  < \frac{\sigma^k - \sigma^{k \prime}}{N-k+1} \nonumber \\
& \Rightarrow \min_i { \left(\mM_i(\bl \sigma^{-k},\sigma^{k \prime}) - \mM_i(\bl \sigma)   \right) } < \frac{\sigma^k - \sigma^{k \prime}}{(N-k)(N-k+1)}.
\end{align}
Denote by $i$ a player such that
\begin{align}
\label{eq:plrifcasetwo}
&\mM_i(\bl \sigma^{-k},\sigma^{k \prime}) - \mM_i(\bl \sigma)     <\frac{\sigma^k - \sigma^{k \prime}}{(N-k)(N-k+1)}\\
&\Rightarrow \mM_i(\bl \sigma) -  \mM_i(\bl \sigma^{-k},\sigma^{k \prime})   + \frac{\sigma^k - \sigma^{k \prime}}{(N-k)(N-k+1)} >0,
\end{align}
and let $\epsilon = \min\{\epsilon_k,\epsilon_i\}$ for $\epsilon_k,\epsilon_i$ that satisfy
\begin{align*}
&\mM_k(\bl \sigma^{-k},\sigma^{k \prime})  + \frac{\sigma^k - \sigma^{k \prime}}{N-k+1} - \mM_k(\bl \sigma) > \epsilon_k>0, \\
&\mM_i(\bl \sigma) -  \mM_i(\bl \sigma^{-k},\sigma^{k \prime})   + \frac{\sigma^k - \sigma^{k \prime}}{(N-k)(N-k+1)}>\epsilon_i>0.
\end{align*}
Consider the following game:
\[
 \kbordermatrix{
        & u_1 & u_2 & u_3  & u_4\\
    x_k & \sigma^k   & 0 &  \epsilon & 0   \\
    y_k & \sigma^{k \prime}   & \sigma^k-\sigma^{k \prime}                    & 0    & 0 \\
    x_i & \sigma^i   & \sigma^k-\sigma^{k \prime}  & 0    & 0  \\
    y_i & 0          & 0           &0  & \mM_i(\bl \sigma) +\frac{\sigma^k-\sigma^{k \prime}}{N-k} - \epsilon \\
            x_r & \sigma^r   & 0  & 0    & 0  \\
        x_m & \sigma^m   & \sigma^k-\sigma^{k \prime}  & 0    & 0  \\ },
\]
where again $\mL_k =\{x_k,y_k\}$, $\mL_i =\{x_i,y_i\}$, $\mL_r=\{x_r\}$ for $1\leq r <k$ and $\mL_m=\{x_m\}$ for $k<m\leq N$. We have the following cycle:

\begin{itemize}
\item $\left(\pi_k(x_k,x_i),\pi_i(x_k,x_i) \right) = \left(\mM_k(\bl \sigma) +\epsilon, \mM_i(\bl \sigma) +\frac{\sigma^k-\sigma^{k \prime}}{N-k}\right) $.
\item $\left(\pi_k(y_k,x_i),\pi_i(y_k,x_i) \right) = \left(\mM_k(\bl \sigma^{-k},\sigma^{k \prime})+\frac{\sigma^k - \sigma^{k \prime}}{N-k+1},\mM_i(\bl \sigma^{-k},\sigma^{k \prime}) + \frac{\sigma^k-\sigma^{k \prime}}{N-k+1} \right)$. Due to Equation (\ref{eq:plrifcasetwo})
\[
\mM_i(\bl \sigma) +\frac{\sigma^k-\sigma^{k \prime}}{N-k} - \mM_i(\bl \sigma^{-k},\sigma^{k \prime}) - \frac{\sigma^k-\sigma^{k \prime}}{N-k+1} > 0.
\]
\item $\left(\pi_k(y_k,y_i),\pi_i(y_k,y_i) \right) = \left( \mM_k(\bl \sigma^{-\{k,i\}},\sigma^{k \prime},0)+\frac{\sigma^k - \sigma^{k \prime}}{N-k} , \mM_i(\bl \sigma) +\frac{\sigma^k-\sigma^{k \prime}}{N-k} - \epsilon \right)$. According to the inductive step, 
\[
\mM_k(\bl \sigma^{\{-k,i\}},\sigma^{k \prime},0)+\frac{\sigma^{k \prime} - \sigma^k}{N-k} =   \mM_k(\bl \sigma^{-i},0) .
\] 
\item $\left(\pi_k(x_k,y_i),\pi_i(x_k,y_i) \right) = \left(\mM_k(\bl \sigma^{-i},0)+\epsilon, \mM_i(\bl \sigma) +\frac{\sigma^k-\sigma^{k \prime}}{N-k} - \epsilon\right) $.
\item $\left(\pi_k(x_k,x_i),\pi_i(x_k,x_i) \right) = \left(\mM_k(\bl \sigma) +\epsilon, \mM_i(\bl \sigma) +\frac{\sigma^k-\sigma^{k \prime}}{N-k}\right) $.
\end{itemize}
Hence we obtained a cycle, which is a contradiction to satisfying $\apne$. The only missing ingredient is the case of $k=N$. 
Due to Lemma \ref{lemma:plronegetsshapley} it holds that $\mM_1(\bl \sigma) = \sm_1(\bl \sigma)$; thus, due to \textbf{Symmetry} we have
\[
\mM_2(\bl \sigma) = \mM_2(\bl \sigma^{-2},\sigma^{1})+ \frac{\sigma^2-\sigma^{1}}{N-1}=\sm_2(\bl \sigma^{-2},\sigma^{1})+ \frac{\sigma^2-\sigma^{1}}{N-1}=\sm_2(\bl \sigma).
\]

Using this technique we obtain $\mM_k(\bl \sigma) = \sm_k(\bl \sigma) $ for every player $k<N$. Finally, due to $\aeff$, 
\begin{align}
 \mM_N(\bl \sigma) = \sigma^N - \sum_{k=1}^{N-1} \mM_k(\bl \sigma) =\sigma^N - \sum_{k=1}^{N-1} \sm_k(\bl \sigma) 
\Rightarrow \mM_N(\bl \sigma)=\sm_N(\sigma).
\end{align}
Since the Shapley mediator satisfies the assertion, we have
\[
\mM_N(\bl \sigma) = \mM_N(\bl \sigma^{-N},\sigma^{N\prime}) + \frac{\sigma^N - \sigma^{N \prime}}{1}
\]
as well. This concludes the proof of the lemma.
\end{proof}

\begin{proof}[Proof of Theorem \ref{thm:uniquenesseff}]
The proof of Theorem \ref{thm:uniquenesseff}
Follows directly from Lemma \ref{lemma:almostshapley}.
\end{proof}

\subsection{Proof of Theorem \ref{thm:poa} }
\begin{proof}
Recall that under the Shapley mediator it holds that
\[
V(\bl X)=\sum_{i=1}^{n}\sum_{j=1}^N \pr \left(  \algname(\bl X,u_i)= j \right)=\sum_{i=1}^{n}\sigma_i(\bl X).
\]
The following analysis holds for every $\sigma_i$ as defined in the model. For ease of notation, we handle the functions $\sigma_i$ and $V$ as set functions. Namely, for $L \subseteq \loc$ let 
\[
\sigma_i(L)=\max_{l\in L} \sigma_i(l),
\]
and
\[
V(L)=\sum_{i=1}^n \sigma_i(L)=\sum_{i=1}^n \max_{l\in L} \sigma_i(l).
\]
In addition, recall that a strategy profile $\bl X$ is often referred to as the set of items selected by the players. Next, we lower-bound the payoff of a player as a function of the social welfare.
\begin{lemma}
\label{lemma:1}
For every strategy profile $\bl X$ it holds that
\begin{align*}
\pi_j(\bl X) \geq \frac{V(X_j)}{N} +\frac{N-1}{N} \left( V(\bl X)-V(\bl X_{-j}) \right).
\end{align*}
\end{lemma}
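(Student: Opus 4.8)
The plan is to decompose the payoff user-by-user and analyze each induced cooperative game separately. Recall that by \textbf{User-Independence} and the definition of the Shapley mediator, $\pi_j(\bl X)=\sum_{i=1}^n \phi_j(v_i(\cdot;\bl X))$, where $v_i(\coal;\bl X)=\max_{j'\in\coal}\sigma_i(X_{j'})$. Since the right-hand side of the claimed inequality also decomposes additively over users (with $V(X_j)=\sum_i\sigma_i(X_j)$ and $V(\bl X)-V(\bl X_{-j})=\sum_i(\sigma_i(\bl X)-\sigma_i(\bl X_{-j}))$), it suffices to prove the per-user inequality
\[
\phi_j(v_i(\cdot;\bl X)) \geq \frac{\sigma_i(X_j)}{N} + \frac{N-1}{N}\left(\sigma_i(\bl X)-\sigma_i(\bl X_{-j})\right)
\]
and then sum over $i$.

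The key observation is that $v_i(\cdot;\bl X)$, being a maximum of nonnegative satisfaction levels, is monotone and submodular, so the marginal contribution $v_i(\coal\cup\{j\};\bl X)-v_i(\coal;\bl X)$ is non-increasing in $\coal$. Therefore, for every permutation $R$ the marginal contribution of player $j$ is at least its value on the largest possible predecessor set $[N]\setminus\{j\}$, namely $v_i([N];\bl X)-v_i([N]\setminus\{j\};\bl X)=\sigma_i(\bl X)-\sigma_i(\bl X_{-j})$. At the other extreme, whenever $j$ appears first in $R$ (an empty predecessor set) the marginal contribution is exactly $\sigma_i(X_j)$, the largest possible value.

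I would then split the $N!$ permutations in the Shapley formula into those in which $j$ is first (of which there are $(N-1)!$) and the remaining $(N-1)(N-1)!$ permutations. Bounding each term in the first group by $\sigma_i(X_j)$ and each term in the second group from below by $\sigma_i(\bl X)-\sigma_i(\bl X_{-j})$ yields
\[
\phi_j(v_i(\cdot;\bl X)) \geq \frac{1}{N!}\left[(N-1)!\,\sigma_i(X_j) + (N-1)(N-1)!\,\bigl(\sigma_i(\bl X)-\sigma_i(\bl X_{-j})\bigr)\right],
\]
which simplifies to the per-user inequality above. Summing over all users and recollecting the additive terms then gives the lemma.

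The main obstacle will be pinning down the monotonicity of the marginal contributions; once submodularity of the max function is established (a short verification distinguishing whether $\sigma_i(X_j)$ exceeds the running maximum over the predecessor set), the rest is a clean counting argument. I would also double-check the edge behavior when $\sigma_i(X_j)$ is not the unique maximum, but submodularity handles this uniformly since the bound only invokes the two extreme predecessor sets $\emptyset$ and $[N]\setminus\{j\}$.
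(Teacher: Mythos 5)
Your proof is correct, and while it uses the same user-by-user decomposition and arrives at the very same per-user inequality as the paper, the derivation of that inequality is genuinely different. The paper obtains it from the closed-form formula of Theorem \ref{theorem:shapleyoneuseronly}: it writes $\pr\left(\sm(\bl X,u_i)=j\right)$ as the telescoping sum in Equation (\ref{eq:shpleyoneuser}) and splits into two cases --- if $\sigma_i(\bl X_{-j}) \geq \sigma_i(X_j)$, every denominator is at most $N$, giving $\sigma_i(X_j)/N$; if $\sigma_i(\bl X_{-j}) < \sigma_i(X_j)$, the top increment $\sigma_i(X_j)-\sigma_i(\bl X_{-j})$ is collected in full and the remaining terms give $\sigma_i(\bl X_{-j})/N$ --- and then recombines by algebra. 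You instead work directly from the permutation form (Equation (\ref{def:shapley-value})), splitting the $N!$ permutations into the $(N-1)!$ in which $j$ comes first (exact marginal contribution $\sigma_i(X_j)$, using $v_i(\emptyset;\bl X)=0$) and the remaining $(N-1)(N-1)!$, each of whose marginals is at least $v_i([N];\bl X)-v_i([N]\setminus\{j\};\bl X)=\sigma_i(\bl X)-\sigma_i(\bl X_{-j})$ by submodularity of the max; this handles the paper's two cases uniformly, since the second term vanishes exactly when $\sigma_i(\bl X_{-j})\geq\sigma_i(X_j)$. Your route buys independence from Theorem \ref{theorem:shapleyoneuseronly} and yields a more general statement --- for any monotone submodular characteristic function $v$ one gets $\phi_j(v) \geq \frac{1}{N}v(\{j\})+\frac{N-1}{N}\left(v([N])-v([N]\setminus\{j\})\right)$ --- and the submodularity it needs is precisely the paper's Lemma \ref{lemma:submod}, which the PoA argument establishes anyway for its later steps, so nothing extra is required. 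The paper's route buys brevity: once the closed form is available (and the paper needs it regardless for the linear-time implementation), the bound is a few lines of elementary algebra with no permutation counting. The only items to spell out in a final write-up are the short submodularity verification you deferred and the convention $v_i(\emptyset;\bl X)=0$, which makes the ``$j$ first'' marginal exactly $\sigma_i(X_j)$.
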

\begin{proof}
For a user $u_i$, exactly one of the following holds:
\begin{itemize}
\item If $\sigma_i(\bl X_{-j}) \geq \sigma_i(X_j)$, then $\pr \left(  \algname(\bl X,u_i)=j \right) \geq {\frac{1}{N}  \sigma_i(X_j)}$ holds, with equality when player $j$ offers the least satisfying item to $u_i$.
\item If $\sigma_i(\bl X_{-j}) < \sigma_i(X_j)$, player $j$ gets the full difference between the terms: $\sigma_i(X_j) -\sigma_i(\bl X_{-j})$, as well as at least $\frac{1}{N}$ times $\sigma_i(\bl X_{-j})$.
\end{itemize}
These cases are disjoint; thus,
\begin{align*}
\pi_j(\bl X)&=\sum_{i=1}^{n}{\pr \left(  \algname(\bl X,u_i)=j \right)} \\
&=\sum_{\mathclap{\substack{i: \sigma_i(\bl X_{-j})\\ \geq \sigma_i(X_j)}}}{\pr \left(  \algname(\bl X,u_i)=j \right)}+\sum_{\substack{i:\sigma_i(\bl X_{-j})\\ < \sigma_i(X_j)}}{\pr \left(  \algname(\bl X,u_i)=j \right)}\\
&{\geq}\sum_{\substack{i: \sigma_i(\bl X_{-j})\\
 \geq \sigma_i(X_j)}}{\frac{\sigma_i(X_j)}{N}}+\sum_{\substack{i:\sigma_i(\bl X_{-j})\\ < \sigma_i(X_j)}}{\left(\sigma_i(X_j) -\sigma_i(\bl X_{-j})+\frac{\sigma_i(\bl X_{-j})}{N} \right)}\\
&=\frac{1}{N} \sum_{i=1}^{n}{\sigma_i(X_j)}+\frac{N-1}{N}\sum_{\substack{i:\sigma_i(\bl X_{-j})\\ < \sigma_i(X_j)}}{\left(\sigma_i(X_j) -\sigma_i(\bl X_{-j})\right)}. 
\end{align*}
Notice that if $\sigma_i(\bl X_{-j}) < \sigma_i(X_j)$ it holds that $\sigma_i(\bl X) - \sigma_i(\bl X_{-j})=\sigma_i(X_j)-\sigma_i(\bl X_{-j})$, and if $\sigma_i(\bl X_{-j})\geq \sigma_i(X_j)$ then $\sigma_i(\bl X)-\sigma_i(\bl X_{-j})=0$.
As a result,
\begin{align*}
&=\frac{1}{N} \sum_{i=1}^{n}{\sigma_i(X_j)}+\frac{N-1}{N}\sum_{\substack{i:\sigma_i(\bl X_{-j})\\ < \sigma_i(X_j)}}{\sigma_i(\bl X) -\sigma_i(\bl X_{-j})} \\
& \geq \frac{1}{N} \sum_{i=1}^{n}{\sigma_i(X_j)}+\frac{N-1}{N}\sum_{i=1}^n{\sigma_i(\bl X) -\sigma_i(\bl X_{-j})} \\
&=\frac{V(X_j)}{N}+\frac{N-1}{N}  \left(V(\bl X)- V(\bl X_{-j}) \right),
\end{align*}
which concludes the proof of this lemma. 
\end{proof}
One more necessary definition is the following: 
\begin{definition}[Submodular function]
\label{def:submod-func}
We say that $f: 2^\Omega \rightarrow \mathbb R$ is submodular if for any $X,Y \in \Omega$ such that $X \subseteq Y$ and every $x \in \Omega \setminus Y$ it holds that -
\[
f(X \cup \{x\})-f(X) \geq f(Y\cup \{x\})-f(Y).
\]
\end{definition}

Note that by definition of  $\sigma_i$ it is a monotonically increasing set function. In addition,
\begin{lemma}
\label{lemma:submod}
$\sigma_i$ is submodular.
\end{lemma}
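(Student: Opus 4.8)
The plan is to exploit the explicit max-representation of the set function $\sigma_i$ and reduce the submodularity inequality to an elementary fact about the positive-part function. First I would fix sets $X \subseteq Y \subseteq \loc$ and an item $x \in \loc \setminus Y$, and introduce the shorthand $a = \sigma_i(X) = \max_{l \in X}\sigma_i(l)$, $b = \sigma_i(Y) = \max_{l \in Y}\sigma_i(l)$, and $c = \sigma_i(x)$. Since $X \subseteq Y$, the monotonicity of the maximum immediately gives $a \leq b$.

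Next I would compute the two marginal gains appearing in Definition \ref{def:submod-func}. Adjoining a single item $x$ to a set can only raise its maximal satisfaction level to $\max(\cdot, c)$, so $\sigma_i(X \cup \{x\}) = \max(a,c)$ and $\sigma_i(Y \cup \{x\}) = \max(b,c)$. Consequently the marginal gains are $\sigma_i(X \cup \{x\}) - \sigma_i(X) = \max(a,c) - a = \max(c-a,\,0)$ and, identically, $\sigma_i(Y \cup \{x\}) - \sigma_i(Y) = \max(c-b,\,0)$.

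Finally, because $a \leq b$ we have $c - a \geq c - b$, and the map $t \mapsto \max(t,0)$ is nondecreasing, whence $\max(c-a,0) \geq \max(c-b,0)$. This is precisely the inequality $\sigma_i(X \cup \{x\}) - \sigma_i(X) \geq \sigma_i(Y \cup \{x\}) - \sigma_i(Y)$ required by Definition \ref{def:submod-func}, so $\sigma_i$ is submodular. There is essentially no obstacle here; the one point warranting a line of care is the degenerate case $X = \emptyset$, where the maximum over an empty set is taken to be $0$, consistent with the convention $\sigma_i^0(\bl X) = 0$ used throughout the paper, and the computation above then goes through verbatim with $a = 0$.
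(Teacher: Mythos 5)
Your proof is correct and takes essentially the same approach as the paper's: both reduce the claim to the max-representation $\sigma_i(S\cup\{x\})=\max\left(\sigma_i(S),\sigma_i(x)\right)$ plus monotonicity, and your positive-part identity $\max(a,c)-a=\max(c-a,0)$ merely packages into one line the paper's explicit two-case analysis on whether $\sigma_i(x)\leq\sigma_i(Y)$. The only cosmetic difference is that you state the argument for arbitrary nested sets $X\subseteq Y$, matching Definition \ref{def:submod-func} verbatim, whereas the paper writes it for the pair $\left(\bl X_{-j},\bl X\right)$; the argument itself is identical.
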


\begin{proof}
For arbitrary $i,j,\bl X$ and $l \in \loc$, we need to show that 
\begin{equation}
\label{eq:better-form}
\sigma_i(\bl X_{-j} \cup \{l\})-\sigma_i(\bl X_{-j}) \geq \sigma_i(\bl X \cup \{l\})-\sigma_i(\bl X).
\end{equation}
If $\sigma_i(\{l\})\leq \sigma_i(\bl X)$, then by monotonicity the right-hand side of Equation (\ref{eq:better-form}) equals zero while the left-hand side is non-negative. Alternatively, if $\sigma_i(\{l\})> \sigma_i(\bl X)$ then $\sigma_i(\bl X \cup \{l\})=\sigma_i(\bl X_{-j} \cup \{l\})$. Moreover, $\sigma_i(\bl X)\geq \sigma_i(\bl X_{-j})$; hence,
\[
\sigma_i(\bl X_{-j} \cup \{l\})-\sigma_i(\bl X_{-j})=
\sigma_i(\bl X \cup \{l\})-\sigma_i(\bl X_{-j})\geq 
\sigma_i(\bl X \cup \{l\})-\sigma_i(\bl X).
\]
\end{proof}
By summing Equation (\ref{eq:better-form}) over all users, we get:
\begin{corollary}
\label{crl:submod}
The social welfare function $V$ is submodular.
\end{corollary}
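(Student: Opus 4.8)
The plan is to derive the submodularity of $V$ directly from Lemma \ref{lemma:submod}, exploiting the fact that the defining inequality of submodularity (Definition \ref{def:submod-func}) is linear in the underlying set function and is therefore preserved under summation. First I would fix arbitrary sets $X \subseteq Y \subseteq \loc$ together with an element $l \in \loc \setminus Y$, matching precisely the quantifiers appearing in Definition \ref{def:submod-func}. By Lemma \ref{lemma:submod} each user's satisfaction function $\sigma_i$ is submodular, so for every $i \in [n]$ we have
\[
\sigma_i(X \cup \{l\}) - \sigma_i(X) \geq \sigma_i(Y \cup \{l\}) - \sigma_i(Y).
\]

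Next I would sum these $n$ inequalities over $i = 1, \dots, n$. Since $V = \sum_{i=1}^n \sigma_i$, the left-hand sides add up to $V(X \cup \{l\}) - V(X)$ and the right-hand sides to $V(Y \cup \{l\}) - V(Y)$, and the direction of the inequality is retained because every summand enters with a positive (unit) coefficient. This yields exactly
\[
V(X \cup \{l\}) - V(X) \geq V(Y \cup \{l\}) - V(Y),
\]
which is the submodularity condition of Definition \ref{def:submod-func} for $V$.

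I expect no genuine obstacle here: all of the substance is already carried out in Lemma \ref{lemma:submod}, and the corollary amounts to nothing more than its aggregation over the users, reflecting the standard closure property that a nonnegative sum of submodular functions is again submodular. The only point worth a brief remark is that, unlike properties such as concavity restricted to a curve, submodularity as formulated via a marginal-difference inequality is a purely additive condition, so summation cannot degrade it.
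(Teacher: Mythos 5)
Your proof is correct and matches the paper's argument exactly: the paper also obtains Corollary \ref{crl:submod} by summing the per-user submodularity inequality of Lemma \ref{lemma:submod} (Equation (\ref{eq:better-form})) over all users $i$, using $V=\sum_{i=1}^n \sigma_i$. Nothing further is needed.
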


We are now ready to prove the theorem: denote the optimal solution as $\bl X^*$ and an arbitrary PNE profile $\bl X$. Since $\bl X$ is in equilibrium it follows that $\pi_j (\bl X) \geq \pi_j (X_j^*,\bl{X}_{-j})$. Therefore - 
\begin{equation}
\label{step2}
V(\bl X)= \sum_{j=1}^{N} \pi_j (\bl X) \geq \sum_{j=1}^{N} \pi_j (X_j^*,\bl X_{-j}).
\end{equation}
Due to Lemma \ref{lemma:1} we have:
\begin{equation}
\label{eq:due-to-lem}
\pi_j (X_j^*,\bl X_{-j}) \geq \frac{V(X_j^*)}{N}+\frac{N-1}{N} \left(V(X_j^*,\bl X_{-j})-V(\bl X_{-j}) \right).
\end{equation}
Summing Equation (\ref{eq:due-to-lem}) over all players:
\begin{equation}
\label{eq:sum-rev}
\sum_{j=1}^N \pi_j (X_j^*,\bl X_{-j}) \geq \sum_{j=1}^N\frac{V(X_j^*)}{N}+\frac{N-1}{N} \sum_{j=1}^N \left(V(X_j^*,\bl X_{-j})-V(\bl X_{-j}) \right).
\end{equation}
Observe that due to submodularity, for every $j$ it holds that
\[
V(X_j^*,\bl X_{-j})-V(\bl X_{-j}) \geq V(X_1^*,\dots,X_{j-1}^*,X_j^*,\bl X)-V(X_1^*,\dots,X_{j-1}^*,\bl X).
\]
Thus:
\begin{align*}
\sum_{j=1}^N \left(V(X_j^*,\bl X_{-j})-V(\bl X_{-j}) \right) &\geq \sum_{j=1}^N \left(V(X_1^*,\dots,X_{j-1}^*,X_j^*,\bl X)-V(X_1^*,\dots,X_{j-1}^*,\bl X)\right) \\
&=V(\bl X^*,\bl X) - V(\bl X)\\
&\geq V(\bl X^*) - V(\bl X) .\numberthis \label{eq:last-eq-super}
\end{align*}
By substituting Equation (\ref{eq:last-eq-super}) into Equation (\ref{eq:sum-rev}) we get:
\begin{align*}
V(\bl X) &\geq \sum_{j=1}^N\frac{V(X_j^*)}{N}+\frac{N-1}{N} \left(V(\bl X^*) - V(\bl X)  \right)\\
&\geq \frac{V(\bl X^*)}{N}+\frac{N-1}{N} \left(V(\bl X^*) - V(\bl X)  \right)\\
&=V(\bl X^*) - \frac{N-1}{N} V(\bl X). \numberthis \label{eq:hkkh}
\end{align*}
Finally, by Equations  (\ref{step2}) and (\ref{eq:hkkh}) we have $V(\bl X) \geq V(\bl X^*)-\frac{N-1}{N}V(\bl X)$, and
\[PoA \triangleq \frac{V(\bl X^*)}{V(\bl X)} \leq \frac{2N-1}{N}. \]

After upper-bounding the $PoA$, our objective is to show a game instance which achieves this bound.
Consider a symmetric $N$-player game with $N$ users, and items $\loc=\loc_j=\{l_1,\dots,l_N,l^*\}$. In addition:
\[
      \forall i \in \{1,\dots,N\}:\quad\sigma_i(x) = \begin{cases}
                         1	&x=l_i\\
                         a	&x=l^*\\
                         0	&\text{otherwise}
                         \end{cases}            
\]
The optimal social welfare is obtained when each player selects a unique item (e.g. player $j$ selects $l_j$). In that case, the Shapley mediator will display an item to every user with probability $1$; hence the social welfare is $N$.
 Observe that the strategy profile $\bl X=(l^*,l^*,\dots,l^*)$ is in equilibrium: consider the payoff of player $j$ under $\bl X$, and a possible unilateral deviation to $l_i$:
\[\pi_j(\bl X)=\sum_{i=1}^N \pr \left(\algname(\bl X,u_i)=j\right) = N  \frac{a}{N}=a ,\quad \pi_j(l_i,\bl X_{-j})=\frac{a}{N}+(1-a) .
\]
For $a=\frac{N}{2N-1}$ we get $\pi_j(\bl X)= \pi_j(l_i,\bl X_{-j})$; therefore $\bl X$ is an equilibrium profile. Overall,
\[
PoA= \frac{N}{a N}=\frac{1}{a}= \frac{2N-1}{N} .
\]
This concludes the proof of Theorem \ref{thm:poa}.
\end{proof}

\subsection{Proof of Proposition \ref{prop:shapleyuserutility}}
\begin{proof}
Consider a game with one user and one player with one strategy $\loc_1=\{l \}$, such that $\sigma_1(l)= \epsilon$ for some $\epsilon > 0$. It holds that $U_{\sm}(l)=\epsilon^2$, while $U_{\topm}(l)=\epsilon$. Therefore, $UPoA_{\sm} \geq \frac{1}{\epsilon}$, which can be arbitrarily large.
\end{proof}

\subsection{Proof of Proposition \ref{prop:topupoa}}
\begin{proof}
Let $\delta > \epsilon >0$ be arbitrarily small, and consider the satisfaction matrix 
\[
 \kbordermatrix{
    & u_1 & u_2   \\
    l_1 & 1 & 0   \\
    l_2 & \delta &\delta \\
    l_3 & \epsilon & \epsilon   
  }.
\]
Let $\loc_1 = \{l_1,l_2\}$ and $\loc_2=\{l_3\}$. Under $\topm$, the only PNE is $(l_2,l_3)$ with $U_{\topm}(l_2,l_3)=2\delta$, but $U_{\topm}(l_1,l_3)=1+\epsilon$; therefore, $UPoA_{\topm}=\frac{1+\epsilon}{2\delta}$, which can be arbitrarily large. 
\end{proof}

\subsection{Proof of Lemma \ref{lemma:usrutil}}
\begin{proof}
Fix an arbitrary user $u_i$ and a strategy profile $\bl X$, and denote 
\[j^* \in \min\left\{j:\sigma_i^j(\bl X)>\frac{\sigma_i^N (\bl X)}{2}  \right\} . \]
Observe that 
\begin{align*}
&\sum_{j=1}^N \left(\sigma_i(X_j)   \pr \left( \algname(\bl X,u_i)= j \right) \right) +  \pr \left( \algname(\bl X,u_i)=\emptyset \right)\\
&=\sum_{j=1}^N \sigma_i^j(\bl X) \left( \sum_{m=1}^j{\frac{\sigma_i^{m}(\bl X)-\sigma_i^{m-1}( \bl X)}{N-m+1}} \right)+1-\sigma_i^N(\bl X)\\
&\geq\sum_{j=j^*}^N \sigma_i^j(\bl X) \left( \sum_{m=1}^j{\frac{\sigma_i^{m}(\bl X)-\sigma_i^{m-1}( \bl X)}{N-m+1}} \right)+1-\sigma_i^N(\bl X)\\
&\geq\sum_{j=j^*}^N \sigma_i^{j^*}(\bl X) \left( \sum_{m=1}^j{\frac{\sigma_i^{m}(\bl X)-\sigma_i^{m-1}( \bl X)}{N-m+1}} \right)+1-\sigma_i^N(\bl X)\\
&\geq \frac{\sigma_i^N (\bl X)}{2} \sum_{j=j^*}^N  \sum_{m=1}^j{\frac{\sigma_i^{m}(\bl X)-\sigma_i^{m-1}( \bl X)}{N-m+1}}+1-\sigma_i^N(\bl X) \\
&\geq \frac{\sigma_i^N (\bl X)}{2}\left( \sigma_i^N (\bl X)-\sigma_i^{j^*-1}(\bl X) \right)+1-\sigma_i^N(\bl X)\\
&\geq \frac{\sigma_i^N (\bl X)}{2}\left( \sigma_i^N (\bl X)-\frac{\sigma_i^N(\bl X)}{2} \right)+1-\sigma_i^N(\bl X)\\
&= \frac{\sigma_i^N (\bl X)}{2} \frac{\sigma_i^N (\bl X)}{2} +1-\sigma_i^N(\bl X) \numberthis\label{eq:dhgbhn}.
\end{align*}
The minimum value obtained by the function $y=\frac{x^2}{4}-x+1$ in the segment $x\in [0,1]$ is $\frac{1}{4}$. Notice that Equation (\ref{eq:dhgbhn}) holds for all users concurrently; hence
\begin{align*}
U_{\sm}(\bl X)&= \sum_{i=1}^{n}\sum_{j=1}^N \left( \sigma_i(X_j)  \pr \left( \algname(\bl X,u_i)= j \right) \right) + \sum_{i=1}^{n}  \pr \left( \algname(\bl X,u_i)=\emptyset \right)\geq \frac{n}{4}.
\end{align*}
Ultimately, 
\[
UPoA_{\sm}=
\frac{
\max_{\mM', \bl X \in \Pi_{j=1}^N \mL_j} U_{\mM'}(\bl X)
}
{
\min_{\bl X \in E_{\mM}} U_{\mM}(\bl X)
} \leq \frac{n}{\frac{n}{4}}=4.
\]

\end{proof}

\section{On user utility}
As mentioned in Section \ref{sec:userutil}, numerical calculations show that the upper bound on $UPoA_{\sm}$ is far from tight. We present here the methods employed to obtain the tighter (numerical) bound. 

Recall that under the Shapley mediator,
\begin{align*}
\label{eq:userutil}
U_{\sm}(\bl X)&= \sum_{i=1}^{n}\sum_{j=1}^N \left( \pr \left( \sm(\bl X,u_i)= j \right) \sigma_i(X_j) \right) + \sum_{i=1}^{n}  \pr \left(\sm(\bl X,u_i)=\phi \right) \\
&=\sum_{i=1}^n \sum_{j=1}^N \sigma_i^j(\bl X) \left( \sum_{m=1}^j{\frac{\sigma_i^{m}(\bl X)-\sigma_i^{m-1}( \bl X)}{N-m+1}} \right)+\sum_{i=1}^n\left(1-\sigma_i^N(\bl X)\right).
\end{align*}
Due to linearity, user utility cannot be less than $n$ times the minimum utility one user can get. Hence, we focus on a game with one user. In addition, despite $U_{\sm}$ being a function of the strategy profile, it is more convenient to present it as a function of the satisfaction levels $\bl \sigma = \left(\sigma^1,\dots,\sigma^N\right)$ without the need to state the strategy profile; this can be done due to \textbf{User-Independence}. We have
\begin{equation}
\label{eq:userutil}
U_{\sm}(\bl \sigma) =\sum_{j=1}^N \sigma^j  \sum_{m=1}^j \frac{\sigma^m-\sigma^{m-1}}{N-j+1} + (1-\sigma_N).
\end{equation}

Note that
\begin{align*}
\sum_{m=1}^j \frac{\sigma^m-\sigma^{m-1}}{N-m+1} &=\sum_{m=1}^j \frac{\sigma^m}{N-m+1}-\sum_{m=1}^j \frac{\sigma^{m-1}}{N-m+1}\\
&=\sum_{m=1}^j \frac{\sigma^m}{N-m+1}-\sum_{m=1}^{j-1}\frac{\sigma^{m}}{N-m} \\
&= \frac{\sigma^j}{N-j+1}-\sum_{m=1}^{j-1}\frac{\sigma^m}{(N-m)(N-m+1)}.
\end{align*}
Hence $U_{\sm}(\bl \sigma)$ can be presented as
\begin{align*}
U_{\sm}(\bl \sigma) &=\sum_{j=1}^N \sigma^j  \left( \frac{\sigma^j}{N-j+1}-\sum_{m=1}^{j-1}\frac{\sigma^m}{(N-m)(N-m+1)} \right) + (1-\sigma^N) \\
\Rightarrow U_{\sm}(\bl \sigma) &=\sum_{j=1}^N   \frac{\sigma^j  \sigma^j}{N-j+1}-\sum_{j=1}^N \sigma^j\sum_{m=1}^{j-1}\frac{\sigma^m}{(N-m)(N-m+1)} + (1-\sigma^N).
\end{align*}
Therefore,
\begin{align*}
\frac{dU_{\sm}}{d\sigma^j}(\bl \sigma)=\begin{cases}
\frac{2\sigma^j}{N-j+1} - \sum_{m=1}^{j-1}\frac{\sigma^m}{(N-m)(N-m+1)} - \sum_{m=j+1}^N  \frac{\sigma^m }{(N-j)(N-j+1)}& j<N \\
2\sigma^N- \sum_{m=1}^{N-1}\frac{\sigma^m}{(N-m)(N-m+1)} -1& j=N
\end{cases}.
\end{align*}

Thus, by taking the partial derivatives to zero, one can find the argmin point (more precisely, the argmin vector) of $U_{\sm}$. We solved this system of linear equations numerically for various values of $N$. The results are presented in Figure \ref{fig:approx}.
Note that $U_{\sm}(\bl \sigma ) >0.568 $ for up to 50,000 players. Going back to games with an arbitrary number of users, we have $U_{\sm}(\bl X ) >0.568 n $ for every strategy profile $\bl X$. As a result, we conclude numerically that $UPoA_{\sm} \leq \frac{n}{0.568n}=1.761$.

\begin{figure*}[t]
\centering 
{\includegraphics{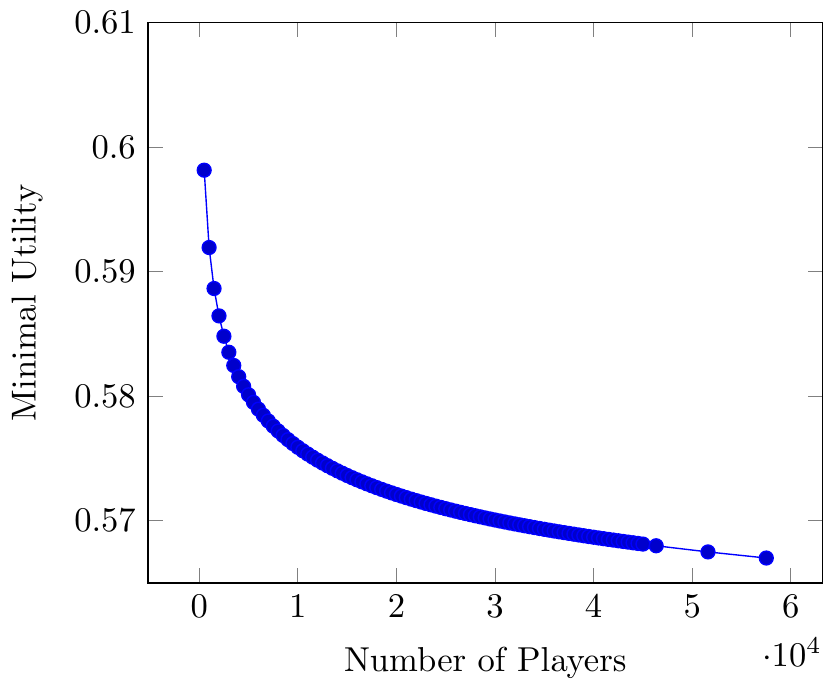}}
\caption{ 
Numerical calculation of the minimum value $U$ can attain as a function of the number of players. 
\label{fig:approx}}
\end{figure*}

\section{Personalized offers}
\label{subsec:personalized-offers}
The model defined in Section \ref{sec:model} enables each player $j$ to choose a single item out of $\loc_j$. In this section, we extend our model to a more general case, where players may select several items and offer each user the one item which satisfies him the most.

For reader convenience, we repeat the part of the model being reconsidered:
\begin{itemize}
\item The set of items (e.g. possible ad formats/messages to select from) available to player $j$ is denoted by $\mL_j$, which we assume to be finite. A {\em strategy} of player $j$ is an item from $\mL_j$. 
\end{itemize}

Consider the case where each player $j$ is limited to choose up to $k_j$ items from $\loc_j$, where $k_j$ is fixed. Formally, the strategy space of each player $j$ is $\{L:|L|\leq k_j, L \subset \loc_j \}$, and we keep on using $X_j$ to represent her strategy. In addition, users are now targeted personally -- for each user $u_i$, player $j$ offers the product with the highest satisfaction level, namely $\sigma_i(X_j)=\max_{l \in X_j} \sigma_i(l)$. 

We again define the characteristic function of the cooperative game as $\chara_i(\coal ; \bl X)= \sigma_i(\bl X_\coal)$. The coalition payoff is $u_i$'s highest satisfaction  with items offered by the coalition members. Hence, the Shapley value of each player remains the same, and the proof of Theorem \ref{theorem:shapleyoneuseronly} holds as is.
In addition, Theorem \ref{thm:poa} and Lemma \ref{lemma:usrutil} did not make use of this assumption, and hence hold as well.

As for Theorem \ref{existance-pot-thm}, a few modifications are required. The strategy of selecting a set $L\in \loc$ is modeled as choosing all resources associated with intervals that are subsets of $[0,\sigma_i(L)]$, where $\sigma_i(L)=\max_{l\in L}\sigma_i(l)$. Namely,
\[
\mA(L) = \left\{ r^i_m:\sigma_i(L)\geq \epsilon_m, m\in[B],i\in [n]   \right\}.
\]
Thus, there is an induced one-to-one function from the power set of items to the power set of resources, $\A: 2^\loc \rightarrow 2^\Rr$.
Mapping between items and resources, we define the set of possible strategies of player $j$:
\[S_j=\left\{\A(L): L:|L|\leq k_j, L \subset \loc_j \right\}. \]
Using these modifications, the proof of Theorem \ref{existance-pot-thm} given in Subsection \ref{apdx:proof-existence} now holds.

}\fi} 

\end{document}